\providecommand{\keywords}[1]
{
  \small	
  \textbf{\textit{Keywords---}} #1
}
\newcommand{\be}{\begin{eqnarray}}
\newcommand{\ee}{\end{eqnarray}}
\newcommand{\bee}{\begin{eqnarray*}}
\newcommand{\eee}{\end{eqnarray*}}
\newcommand{\bi}{\begin{enumerate}}
\newcommand{\ei}{\end{enumerate}}
\def\bSig\boldsymbol{\Sigma}
\newcommand{\indep}{\perp \!\!\! \perp}
\newtheorem{theorem}{Theorem}
\title{Estimating Marginal Treatment Effect in Cluster Randomized Trials with
Multi-level Missing Outcomes}
\author{Chia-Rui Chang$^{1}$ and Rui Wang$^{1,2}$  \\
        \\
        \small $^{1}$Department of Biostatistics, Harvard T. H. Chan School of Public Health, Massachusetts, USA \\
        \small $^{2}$Department of Population Medicine, Harvard Pilgrim Health Care Institute and Harvard Medical School, \\
        \small Massachusetts, USA \\
}
\date{} 
\begin{document}
\maketitle

\doublespacing
\begin{abstract}
Analyses of cluster randomized trials (CRTs) can be complicated by informative missing outcome data.  Methods such as inverse probability weighted generalized estimating equations have been proposed to account for informative missingness by weighting the observed individual outcome data in each cluster. These existing methods have focused on settings where missingness occurs at the individual level and each cluster has partially or fully observed individual outcomes. In the presence of missing clusters, e.g., all outcomes from a cluster are missing due to drop-out of the cluster, these approaches effectively ignore this cluster-level missingness and can lead to biased inference if the cluster-level missingness is informative. Informative missingness at multiple levels can also occur in CRTs with a multi-level structure where study participants are nested in subclusters such as health care providers, and the subclusters are nested in clusters such as clinics.  In this paper, we propose new estimators for estimating the marginal treatment effect in CRTs accounting for missing outcome data at multiple levels based on weighted generalized estimating equations. We show that the proposed multi-level multiply robust estimator is consistent and asymptotically normally distributed provided that one set of the propensity score models is correctly specified. We evaluate the performance of the proposed method through extensive simulation and illustrate its use with a CRT evaluating a Malaria risk-reduction intervention in rural Madagascar.\\
\end{abstract} \hspace{10pt}

\noindent 
\keywords{Cluster randomized trials; Multi-level missing data; Generalized estimating equations (GEE); Inverse probability weighting (IPW); Popensity Score; Multiply robust; Expectation-maximization (EM) algorithm}

\section{Introduction}
\label{s:intro}

Cluster-randomized trial (CRT), which randomizes all individuals in the same cluster to receive the same treatment, are commonly used in biomedical research for intervention evaluation \citep{hayes2017cluster}. Because outcomes from individuals within the same cluster are likely to be correlated, analysis of CRTs must account for this dependence within cluster. 
The generalized estimating equation (GEE) approach has often been adopted to estimate the marginal treatment effect in CRTs \citep{liang1986longitudinal}. Compared to mixed effects models, GEE targets the population marginal effect parameter and requires fewer parametric assumptions on the outcome distribution \citep{hubbard2010gee}. It renders valid inference provided that the mean model is correctly specified and is robust to misspecification of the correlation structure.  However, in the presence of informative missing outcome data, the GEE estimator based on complete data may result in biased estimates (see for example, \cite{prague2016accounting}).    



Here we consider the setting where outcome missingness depends on fully observed baseline covariates and exposure, as commonly considered in the analyses of CRT data  
(see for example, \cite{hossain2017missinga,hossain2017missingb}). Such missingness process has been termed as ``covariate-dependent missingness (CDM)", a stronger version of the missing at random (MAR) mechanism as the missingness process is assumed to be unaffected by the observed outcomes. Two common approaches to address CDM data include multilevel multiple imputation (MMI) \citep{schafer2002computational, diazordaz2016multiple, hossain2017missinga, hossain2017missingb} and inverse probability weighting (IPW) \citep{robins1995analysis}. We adopt the IPW framework, which avoids the need to correctly specify the joint distribution of clustered outcomes.

The IPW methods for handling missing individual outcomes in CRTs have been proposed \citep{prague2016accounting, chen2020stochastic}.  
These methods are developed to handle the settings where missingness occurs at the individual level and each cluster has partially or fully observed individual outcomes. In the presence of missing clusters, e.g., all outcomes from a cluster are missing due to drop-out of the cluster, these approaches ignore this cluster-level missingness and can lead to biased inference if the cluster-level missingness is informative \citep{giraudeau2009preventing}. 

Missing clusters are not uncommon in CRTs. Two systematic reviews of CRTs reported that 18\% of 132 trials and 31\% of 86 trials had missing clusters \citep{diaz2014missing, fiero2016statistical}. 
Multi-level missingness can also occur in CRTs with a multi-level structure where study participants are nested in subclusters such as households and households are nested in regions. For example, in a study to evaluate if proactive community case management (pro-CCM) is effective in reducing malaria burden in rural endemic area of Madagascar, twenty-two fokontanies (smallest administrative units) were randomized to pro-CCM or conventional integrated community case management (iCCM) \citep{ratovoson2022proactive}. The study participants were nested in households, which were nested in each fokontany. About 24\% of the study participants and 22\% of the households were lost to follow-up due to moving away, absence, death, or refusal to participate.  



In this paper, we develop new estimators for estimating the marginal treatment effect in CRTs with multi-level missing outcomes based on the GEE framework.  
We derive the multi-level weights to account for informative missingness at both the (sub)cluster- and individual-level.
We further incorporate a multiply robust estimation approach \citep{han2014multiply} to the multi-level missing outcome setting and propose a multi-level multiply robust GEE (MMR-GEE) estimator. 
This novel estimator allows analysts to specify multiple sets of propensity score (PS) models and leads to consistent estimation of the marginal treatment effect provided that one set of models is correctly specified. To address potential misclassification of the cluster-level missingness indicators, that is, a cluster with all individual outcomes missing may be misclassified as cluster drop-out, we develop an  Expectation-Maximization (EM) algorithm to estimate the parameters of the PS models \citep{dempster1977maximum}.  


The remainder of the article is organized as follows. Section \ref{ss:2.1} introduces the notation for the multi-level missingness setting and provides a review of the unweighted GEE and IPW-GEE approaches. Section \ref{ss:2.2} describes the multi-level missingness processes and the assumptions made throughout the paper. Section \ref{ss:2.3} and \ref{ss:2.4} present the multi-level inverse probability weights and our proposed MMR-GEE estimator. Section \ref{ss:2.5} establishes the theoretical properties of the MMR-GEE estimator. Section \ref{ss:2.6} addresses the misclassification issue of the observed missingness indicator at the cluster level. For notational simplicity, we anchor our presentation around two-level CRTs, where informative missingness may occur both at the cluster and at the individual level in Sections \ref{ss:2.1}-\ref{ss:2.6}. In Section \ref{ss:2.7} we present an extension to three-level CRTs where informative missingness can occur both at the subcluster and at the individual level. Results from extensive simulation studies are reported in Section \ref{s:simulation}. In Section \ref{s:applications}, we illustrate the use of the proposed methods with 
a three-level CRT: the ``Proactive Community Case Management in Rural Madagascar" study \citep{ratovoson2022proactive}. The paper is concluded with practical considerations and discussions in Section \ref{s:discussion}. 

\section{Methods}

\subsection{Notation and Models}
\label{ss:2.1}

We consider a two-arm parallel CRT with outcome $Y_{ij}$, a vector of $P$ cluster-level baseline covariates $\boldsymbol{Z}_i = (Z_{i}^{1},Z_{i}^{2},\ldots, Z_{i}^{P})^\mathrm{T}$, and a vector of $Q$ individual-level baseline covariates $\boldsymbol{X}_{ij} = (X_{ij}^{1},X_{ij}^{2},\ldots,X_{ij}^{Q})^\mathrm{T}$ for subject $j = 1,...,n_i$ in cluster $i = 1,...,M$. Let $A_i \in \{0,1\}$ be the binary treatment indicator for cluster $i$ (treated $A_i = 1$ and control $A_i = 0$); the treatment assignment probability is known and given by $P(A_i = 1) = p_A$. The vector of cluster-level covariates $\boldsymbol{Z}_i$ and matrix of individual-level covariates $\boldsymbol{X}_{i} = (\boldsymbol{X}_{i1}, \boldsymbol{X}_{i2}, \ldots, \boldsymbol{X}_{in_i})^\mathrm{T}$ are assumed to be fully observed before randomization. 
Here, we use $\boldsymbol{R}_{i} = (R_{i1}, R_{i2}, \ldots, R_{in_i})^\mathrm{T}$ to denote the vector of individual-level missingness indicator and $C_i$ to denote the cluster-level missingness indicator for outcomes $\boldsymbol{Y}_i = (Y_{i1}, Y_{i2}, \ldots, Y_{in_i})^\mathrm{T}$. $R_{ij} = 1$ when $Y_{ij}$ is observed and $R_{ij} = 0$ when $Y_{ij}$ is missing. $C_i = 0$ if the cluster $i$ drops out of the study so that no individual outcomes in that cluster can be observed, and $C_i = 1$ otherwise. 

Let $s = \sum_{i=1}^{M} C_i$ be the number of observed clusters and $m_i = \sum_{j=1}^{n_i} R_{ij}$ be the number of observed outcomes in cluster $i$. Without loss of generality, let $i = 1,...,s$ be the indexes for observed clusters and $i = s+1,...,M$ be the indexes for missing clusters; for participants in observed cluster $i$, let $j = 1,...,m_i$ be the indexes of participants whose outcomes are observed and $j = m_{i}+1,...,n_i$ be the indexes of participants whose outcomes are missing. 
See Table \ref{tab:1} for an illustration of the data structure under multi-level missingness settings. 
\begin{table}[H]
\caption{Data structure of multi-level missingness in CRTs, including outcome $Y_{ij}$, cluster-level missingness indicator $C_i$ ($C_{i} = 1$ if the cluster is observed and $C_{i} = 0$ if the cluster is missing), and individual-level missingness indicator $R_{ij}$ ($R_{ij} = 1$ if $Y_{ij}$ is observed and $R_{ij} = 0$ if $Y_{ij}$ is missing)}
\label{tab:1}
\small
\begin{center}
\rowcolors{2}{white}{gray!10}
\begin{tabular}{lllll}
\hline
Cluster & Unit  & $Y_{ij}$ & $R_{ij}$ & $C_{i}$ \\ 
\hline
1 & 1 & $Y_{11}$ & 1 & 1 \\
1 & $\vdots$ & $\vdots$ & 1 & 1 \\
1 & $m_1$ & $Y_{1m_1}$ & 1 & 1 \\
1 & $m_1 + 1$ & $Y_{1(m_1 + 1)}$ & 0 & 1   \\
1 & $\vdots$ & $\vdots$ & 0 & 1  \\
1 & $n_1$ & $Y_{1n_1}$ & 0 & 1 \\ 
\hline
$\vdots$ & $\vdots$ & $\vdots$ & $\vdots$ & 1 \\ 
\hline
s & 1 & $Y_{s1}$ & 1 & 1 \\
s & $\vdots$ & $\vdots$ & 1 & 1 \\
s & $m_s$ & $Y_{sm_s}$ & 1 & 1 \\
s & $m_s + 1$ & $Y_{s(m_s + 1)}$ & 0 & 1 \\
s & $\vdots$ & $\vdots$ & 0 & 1 \\
s & $n_s$ & $Y_{sn_s}$ & 0 & 1 \\
\hline
s+1 & 1 & $Y_{(s+1)1}$ & 0 & 0 \\
s+1 & $\vdots$ & $\vdots$ & 0 & 0  \\
s+1 & $n_{s+1}$ & $Y_{(s+1)n_{s+1}}$ & 0 & 0 \\
\hline
$\vdots$ & $\vdots$ & $\vdots$ & 0 & 0 \\
\hline
M & 1 & $Y_{M1}$ & 0 & 0 \\
M & $\vdots$ & $\vdots$ & 0 & 0 \\
M & $n_M$ & $Y_{Mn_M}$ & 0 & 0 \\
\hline
\end{tabular}
\end{center}
\end{table}
Our primary interest lies in the marginal mean model $g(\boldsymbol{\mu}_{ij}(\boldsymbol{\beta}, A_{i})) = g(\mathbb{E}[Y_{ij}| A_i]) = \beta_I + \beta_A A_i$ with link function $g(.)$, where we make inference on the parameters $\boldsymbol{\beta} = (\beta_I, \beta_A)^\mathrm{T}$. For continuous outcomes and identity link function, $\beta_A$ corresponds to the difference-in-means marginal treatment effect. When there is no missing data, an estimator of $\boldsymbol{\beta}$ can be obtained by solving the following estimating equation \citep{liang1986longitudinal}:
\begin{equation} \label{model:GEE}
    \sum_{i=1}^{M} \underbrace{\frac{\partial{\boldsymbol{\mu}_i(\boldsymbol{\beta}, A_i)}}{\partial{\boldsymbol{\beta}}}\boldsymbol{V}_{i}^{-1}(\boldsymbol{Y}_i - \boldsymbol{\mu}_i(\boldsymbol{\beta}, A_i))}_{=U_i(\boldsymbol{Y}_i,A_i; \boldsymbol{\beta})} = 0.
\end{equation}
$\frac{\partial{\boldsymbol{\mu}_i(\boldsymbol{\beta}, A_i)}}{\partial{\boldsymbol{\beta}}}$ is the design matrix with $\boldsymbol{\mu}_i(\boldsymbol{\beta}, A_i) = \{\boldsymbol{\mu}_{i1}(\boldsymbol{\beta}, A_i), \ldots, \boldsymbol{\mu}_{in_i}(\boldsymbol{\beta}, A_i)\}^\mathrm{T}$. $\boldsymbol{V}_{i} = \boldsymbol{F}_{i}^{1/2}\boldsymbol{C}(\boldsymbol{\alpha})\boldsymbol{F}_{i}^{1/2}$ is the covariance matrix with $\boldsymbol{F}_{i} = \text{diag}\{\text{var}(y_{ij})\}$. $\boldsymbol{C}(\boldsymbol{\alpha})$ is the working correlation matrix indexed by non-diagonal elements $\boldsymbol{\alpha}$. 

When outcomes are missing under CDM, fitting Model (\ref{model:GEE}) with complete data could lead to biased inference for $\boldsymbol{\beta}$ \citep{prague2016accounting}. Provided that all clusters are observed, i.e., $C_i = 1$ for all $i$, one can attempt to correct the bias through IPW-GEE \citep{robins1995analysis}: 
\begin{equation}\label{model:IPW-GEE}
    \sum_{i=1}^{M} \frac{\partial{\boldsymbol{\mu}_i(\boldsymbol{\beta}, A_i)}}{\partial{\boldsymbol{\beta}}}\boldsymbol{V}_{i}^{-1}\boldsymbol{W}_{i}(A_i, \boldsymbol{Z}_{i}, \boldsymbol{X}_{i};\boldsymbol{\theta})(\boldsymbol{Y}_i - \boldsymbol{\mu}_i(\boldsymbol{\beta}, A_i)) = 0.
\end{equation}
Model (\ref{model:IPW-GEE}) recovers population moments by reweighing the complete data according to the $n_i \times n_i$ matrix of weights $\boldsymbol{W}_{i}(A_i, \boldsymbol{Z}_{i}, \boldsymbol{X}_{i}; \boldsymbol{\theta}) = \text{diag}\left[\frac{R_{ij}}{\pi_{ij}(A_i, \boldsymbol{Z}_{i}, \boldsymbol{X}_{ij};\boldsymbol{\theta})} \right]_{j=1,\ldots,n_i}$. The conditional probability that $Y_{ij}$ is observed, also called the PS, is denoted by $\pi_{ij}(A_i, \boldsymbol{Z}_{i}, \boldsymbol{X}_{ij};\boldsymbol{\theta}) \stackrel{def}{=} P(R_{ij} = 1|A_i, \boldsymbol{Z}_{i}, \boldsymbol{X}_{ij})$. In practice, the true PS is unknown and needs to be estimated. A common approach is to postulate a logistic model that regresses the missingness indicator on the treatment indicator and baseline covariates. A consistent and asymptotically normal (CAN) estimator of $\boldsymbol{\beta}$ can be obtained  when the PS model is correctly specified. 



\subsection{Multi-level Missingness Processes and Assumptions}
\label{ss:2.2}

We consider the following multi-level missingness processes: clusters drop out or withdraw from the study after randomization, where this cluster-level missingness is induced by the model $\lambda_{i}(A_i, \boldsymbol{Z}_i; \boldsymbol{\gamma}) \stackrel{def}{=} P(C_i = 1 | A_i, \boldsymbol{Z}_i)$ with parameters $\boldsymbol{\gamma}$. For clusters that remain throughout the study, the outcomes of individual participants may be missing, where this individual-level missingness is induced by another model $\phi_{ij}(A_i, \boldsymbol{Z}_i, \boldsymbol{X}_{ij}|C_i = 1; \boldsymbol{\eta}) \stackrel{def}{=} P(R_{ij} = 1 |C_i = 1, A_i, \boldsymbol{Z}_i, \boldsymbol{X}_{ij})$ with parameters $\boldsymbol{\eta}$. In such settings, the standard IPW-GEE method based on data from the clusters with at least partially observed individual outcomes ignores the cluster-level missingness process and may lead to biased estimates of $\boldsymbol{\beta}$.
Throughout the paper, we make the following assumptions: \begin{enumerate}[label=\textbf{\arabic*})]
    \item \textbf{Non-informative cluster size:} 
    We assume that the cluster sizes are non-informative of outcomes but could affect the missingness processes.
    \item \textbf{Multi-level CDM:} The multi-level missingness processes depend on neither observed nor missing outcomes, conditional on baseline covariates and treatment: $(C_i, \boldsymbol{R}_{i}) \indep \boldsymbol{Y}_i|A_i, \boldsymbol{Z}_i, \boldsymbol{X}_i$. 
    \item \textbf{Conditional exchangeability in $\boldsymbol{C}_i$ over $\boldsymbol{X}_i$:} The cluster-level missingness process is independent of individual-level covariates, conditional on cluster-level covariates and treatment indicator: $C_i \indep \boldsymbol{X}_i \mid A_i, \boldsymbol{Z}_i$. Note that $\boldsymbol{Z}_i$ can contain summary measure of $\boldsymbol{X}_i$ (e.g. average age of a cluster, proportion of males/females)
    and we allow $C_i$ to depend on such summary measure but not on a particular $X_{ij}$. This assumption implies that $\mathrm{P}\left(C_{i}=1 \mid A_i, \boldsymbol{Z}_{i}, \boldsymbol{X}_{i}\right) = \mathrm{P}\left(C_{i}=1 \mid A_i, \boldsymbol{Z}_{i}\right)$
    \item \textbf{No covariate interference for individual-level missingness process}: For participant $j$ in observed cluster $i$, other participants' individual-level covariates ($\boldsymbol{X}_{ij^\prime}$ for $j^\prime \neq j$) cannot affect  participant $j$'s missingness processes, conditional on baseline covariates and treatment indicator: $R_{ij} \indep \boldsymbol{X}_{ij^\prime} \mid C_i = 1, A_i, \boldsymbol{Z}_i, \boldsymbol{X}_{ij}$ for all $j^\prime \neq j$. This implies that $P\left(R_{i j}=1 \mid C_{i}=1, A_i, \boldsymbol{Z}_{i}, \boldsymbol{X}_{i}\right) = P\left(R_{i j}=1 \mid C_{i}=1, A_i, \boldsymbol{Z}_{i}, \boldsymbol{X}_{ij}\right)$.
\end{enumerate}    

\subsection{Multi-level IPW-GEE (MIPW-GEE)} 
\label{ss:2.3}
We adapt weighting methods from the longitudinal drop-out setting
\citep{robins1995analysis, mitani2022accounting} to estimate $\boldsymbol{\beta}$ under the multi-level CDM setting. The conditional probability of observing $Y_{ij}$ can be expressed as:
\begin{align}\label{eq:cond_prob}
     P(R_{ij}C_{i} = 1| A_i, \boldsymbol{Z}_{i},\boldsymbol{X}_{i}) 
    &= P(R_{ij} = 1, C_{i} = 1|A_i, \boldsymbol{Z}_{i},\boldsymbol{X}_{i}) \hspace{0.2cm}  \nonumber \\
    &= P(R_{ij} = 1|C_{i} = 1,A_i, \boldsymbol{Z}_{i},\boldsymbol{X}_{i})P(C_{i} = 1|A_i, \boldsymbol{Z}_{i},\boldsymbol{X}_{i})  \nonumber  \\
    &= P(R_{ij} = 1|C_{i} = 1,A_i, \boldsymbol{Z}_{i},\boldsymbol{X}_{ij})P(C_{i} = 1|A_i, \boldsymbol{Z}_{i}),
\end{align}
where $P(C_{i} = 1|A_i,\boldsymbol{Z}_{i})$ corresponds to the cluster-level missingness process induced by $\lambda_i(A_i, \boldsymbol{Z}_{i}; \boldsymbol{\gamma})$ and $P(R_{ij} = 1|C_{i} = 1, A_i, \boldsymbol{Z}_{i}, \boldsymbol{X}_{ij})$ corresponds to the individual-level missingness process induced by $\phi_{ij}(A_i, \boldsymbol{Z}_{i}, \boldsymbol{X}_{ij} \mid C_i = 1; \boldsymbol{\eta})$. By modifying the weighting matrix of (\ref{model:IPW-GEE}), we propose the multi-level IPW-GEE (MIPW-GEE) estimator as follows:
\begin{align}\label{model:MW}
    \boldsymbol{W}_{i}(A_i, \boldsymbol{Z}_{i}, \boldsymbol{X}_{i}; \boldsymbol{\gamma}, \boldsymbol{\eta}) 
    = \text{diag}\left[\frac{R_{ij}C_{i}}{\phi_{ij}(A_i, \boldsymbol{Z}_{i}, \boldsymbol{X}_{ij} \mid C_i = 1; \boldsymbol{\eta})\lambda_{i}(A_i, \boldsymbol{Z}_{i}; \boldsymbol{\gamma}) } \right]_{j=1,\ldots, n_i}.
\end{align}
The consistency of the MIPW-GEE estimator requires the correct specification of both PS models: that is, $\phi_{ij}(A_i, \boldsymbol{Z}_i, \boldsymbol{X}_{ij}|C_i = 1; \boldsymbol{\eta}) = P(R_{ij} = 1 |C_i = 1, A_i, \boldsymbol{Z}_i, \boldsymbol{X}_{ij})$ and $\lambda_i(A_i, \boldsymbol{Z}_{i}; \boldsymbol{\gamma}) = P(C_{i} = 1|A_i, \boldsymbol{Z}_{i})$ for some $\boldsymbol{\eta}$ and
$\boldsymbol{\gamma}$.

\subsection{Multi-level Multiply Robust GEE (MMR-GEE)}
\label{ss:2.4}
The consistency of the MIPW-GEE estimator depends on the correct specification of both the cluster- and individual-level PS models. 
To protect against misspecification of thte PS models, we propose a multiply robust estimator of $\boldsymbol{\beta}$, denoted by $\hat{\boldsymbol{\beta}}_{MR}$, based on the empirical likelihood theory \citep{ owen2001empirical, qin2009empirical, han2014multiply}. The proposed multi-level multiply robust GEE (MMR-GEE) estimator allows analysts to specify multiple sets of PS models, and $\hat{\boldsymbol{\beta}}_{MR}$ will be a consistent estimator for $\boldsymbol{\beta}$ provided that  one of the cluster-level and one of the individual-level PS models are correctly specified.

The MMR-GEE estimator can be obtained by solving estimating equation (\ref{model:IPW-GEE}) with weights replaced by the multiply robust weights:
\begin{align}
\label{model:MRW}
    \boldsymbol{W}_{i}^{MR} = \text{diag}\left[
    R_{ij}C_{i}w_{ij}^{MR}\right]_{j = 1, \ldots, n_i}.
\end{align} 
We derive the multiply robust weights by extending the method from the independent data setting in \cite{han2013estimation} to the clustered data setting where informative missingness can occur at multiple levels.  
Let $\mathcal{P}_{1} = \{\phi_{ij}^{k}(\boldsymbol{\eta}^k): k = 1,...,K \}$ denote the set of $K$ postulated individual-level PS models for $\phi_{i j}\left(A_{i}, \boldsymbol{Z}_{i}, \boldsymbol{X}_{i j} \mid C_{i}=1 ; \boldsymbol{\eta}\right)$ and $\mathcal{P}_{2} = \{\lambda_{i}^{\ell}(\boldsymbol{\gamma}^{\ell}): \ell = 1,...,L \}$ denote the set of $L$ postulated cluster-level PS models for $\lambda_{i}\left(A_{i}, \boldsymbol{Z}_{i} ; \gamma\right)$, where $\boldsymbol{\eta}^k$ and $\boldsymbol{\gamma}^{\ell}$ are vectors of parameters for the $k$th and $\ell$th models. Let $\hat{\boldsymbol{\eta}}^{k}$ and $\hat{\boldsymbol{\gamma}}^{\ell}$ be the estimators for $\boldsymbol{\eta}^k$ and $\boldsymbol{\gamma}^{\ell}$, respectively. Now define $\chi^{k,l}(\hat{\boldsymbol{\eta}}^k, \hat{\boldsymbol{\gamma}}^{\ell}) = \left(\sum_{i=1}^{M} n_i\right)^{-1}\sum_{i=1}^{M}\sum_{j=1}^{n_i} \phi_{ij}^{k}(\hat{\boldsymbol{\eta}}^k)\lambda_{i}^{\ell}(\hat{\boldsymbol{\gamma}}^{\ell})$.
The multiply robust weights for individuals with observed outcomes $i = 1,...,s$, $j = 1,...,m_i$ can be obtained from solving a constrained optimization problem:
\begin{align}\label{model:MRW_CO}
    \hat{w}_{ij}^{MR} = \underset{w_{ij}}{\text{argmax}}\prod_{i=1}^{s}\prod_{j=1}^{m_i} w_{ij},
\end{align}
subject to the following constraints:
\begin{align*}
    w_{ij} \geq 0, \hspace{0.5cm} \sum_{i=1}^{s}\sum_{j=1}^{m_i}w_{ij} = 1, \hspace{0.5cm} \sum_{i=1}^{s}\sum_{j=1}^{m_i} w_{ij}\phi_{ij}^{k}(\boldsymbol{\hat{\eta}}^k)\lambda_{i}^{\ell}(\boldsymbol{\hat{\gamma}}^{\ell})  = \chi^{k,l}(\hat{\boldsymbol{\eta}}^k, \hat{\boldsymbol{\gamma}}^{\ell}) \hspace{0.25cm} (k = 1,...,K, \ell = 1,...,L).
\end{align*}
The first constraint requires that the weights are nonnegative. The second constraint imposes that the weights sum up to 1. The third constraint weighs each postulated model evaluated at the biased samples to represent the population mean. For $i = 1,...,M$, $j = 1,...,n_i$, now define $\hat{\boldsymbol{\eta}} = \{(\hat{\boldsymbol{\eta}}^{1})^\mathrm{T},...,(\hat{\boldsymbol{\eta}}^{K})^\mathrm{T}\}$, $\hat{\boldsymbol{\gamma}} = \{(\hat{\boldsymbol{\gamma}}^{1})^\mathrm{T},...,(\hat{\boldsymbol{\gamma}}^{L})^\mathrm{T}\}$, and 
\begin{align*}
    \hat{g}_{ij}(\hat{\boldsymbol{\eta}},\hat{\boldsymbol{\gamma}}) 
        = \{& \phi_{ij}^{1}(\boldsymbol{\hat{\eta}}^1)\lambda_{i}^{1}(\boldsymbol{\hat{\gamma}}^1) - \chi^{1,1}(\hat{\boldsymbol{\eta}}^1, \hat{\boldsymbol{\gamma}}^1),...,\phi_{ij}^{1}(\boldsymbol{\hat{\eta}}^1)\lambda_{i}^{L}(\boldsymbol{\hat{\gamma}}^{L}) - \chi^{1,L}(\hat{\boldsymbol{\eta}}^1, \hat{\boldsymbol{\gamma}}^{L}), \\
        &\phi_{ij}^{2}(\boldsymbol{\hat{\eta}}^2)\lambda_{i}^{1}(\boldsymbol{\hat{\gamma}}^1) - \chi^{2,1}(\hat{\boldsymbol{\eta}}^2, \hat{\boldsymbol{\gamma}}^1),...,\phi_{ij}^{2}(\boldsymbol{\hat{\eta}}^2)\lambda_{i}^{L}(\boldsymbol{\hat{\gamma}}^{L}) - \chi^{2,L}(\hat{\boldsymbol{\eta}}^2, \hat{\boldsymbol{\gamma}}^{L}), \ldots, \\
        &\phi_{ij}^{K}(\boldsymbol{\hat{\eta}}^K)\lambda_{i}^{1}(\boldsymbol{\hat{\gamma}}^1) - \chi^{K,1}(\hat{\boldsymbol{\eta}}^K, \hat{\boldsymbol{\gamma}}^1),...,\phi_{ij}^{K}(\boldsymbol{\hat{\eta}}^K)\lambda_{i}^{L}(\boldsymbol{\hat{\gamma}}^{L}) - \chi^{K,L}(\hat{\boldsymbol{\eta}}^K, \hat{\boldsymbol{\gamma}}^{L}) \}^\mathrm{T}.
\end{align*}
The constrained optimization problem (\ref{model:MRW_CO}) can be solved through the Lagrange multiplier technique, which yields:
\begin{align}\label{eq:MRW_sol1}
    \left. \hat{w}_{ij}^{MR} = \left\{\frac{1}{1 + \hat{\boldsymbol{\rho}}^T \hat{g}_{ij}(\hat{\boldsymbol{\eta}},\hat{\boldsymbol{\gamma}})}\right\}
    \middle/ \left\{\sum_{i=1}^{s}m_i \right\} \right. \hspace{0.25cm} i = 1,...,s, j = 1,...,m_i,
\end{align}
where $\hat{\boldsymbol{\rho}}^T = (\hat{\rho}_{11},\hat{\rho}_{12},...,\hat{\rho}_{KL})$ is a $(KL) \times 1$ vector by solving the following equation:
\begin{align}\label{eq:MRW_sol2}
     \sum_{i=1}^{s}\sum_{j=1}^{m_i} \frac{ \hat{g}_{ij}(\hat{\boldsymbol{\eta}},\hat{\boldsymbol{\gamma}})}{1 + \boldsymbol{\rho}^T \hat{g}_{ij}(\hat{\boldsymbol{\eta}},\hat{\boldsymbol{\gamma}})} = \boldsymbol{0}.
\end{align}
The detailed derivation can be found in Web Appendix A. There may be multiple roots to Equation (\ref{eq:MRW_sol2}).  We apply convex minimization from \cite{han2014multiply} to obtain $\hat{\boldsymbol{\rho}}$. 

\subsection{Consistency and Asymptotic Normality of MMR-GEE}
\label{ss:2.5}
In this section, we first demonstrate that the MMR-GEE estimator has the multiply robust property, i.e.,  $\hat{\boldsymbol{\beta}}_{MR}$ is a consistent estimator for $\boldsymbol{\beta}$ when both $\mathcal{P}_1$ and $\mathcal{P}_2$ contain a correctly specified PS model. 
We then establish the asymptotic distribution of $\hat{\boldsymbol{\beta}}_{MR}$. 
We consider the case where the number of clusters $M$ grows to infinity and the cluster sizes are bounded above. For notational simplicity, we consider the setting with fixed cluster size. The results can be generalized to varying cluster sizes by invoking the Lindeberg-Feller central limit theorem. In what follows, we use subscript asterisk to denote probability limits, $(\boldsymbol{\eta}_0, \boldsymbol{\gamma}_0)$ to denote the true parameters of the PS models, and $\boldsymbol{\beta}_0$ to denote the true parameters of the marginal mean model.

\subsubsection{Multiple Robustness of $\hat{\boldsymbol{\beta}}_{MR}$}
\label{ss:MR}

To prove the multiple robustness of $\hat{\boldsymbol{\beta}}_{MR}$, we first show that the multiply robust weights of Equation (\ref{model:MRW}) under which both $\mathcal{P}_1$ and $\mathcal{P}_2$ contain a correctly specified PS model are asymptotically equivalent to the multi-level inverse probability weights of equation (\ref{model:MW}) when the true correct models are known. This asymptotic equivalence can be established by building the connection between the multiply robust weights and another version of the empirical likelihood weights conditional on the observed sample assuming that the correct PS models are known, which we will derive below.  

Without loss of generality, let $\phi_{ij}^{1}(\boldsymbol{\eta}^{1})$ and $\lambda_{i}^{1}(\boldsymbol{\gamma}^{1})$, the first model in $\mathcal{P}_1$ and $\mathcal{P}_2$, respectively, be the correctly specified models. Furthermore, let $p_{ij}$ be the empirical probability of $(Y_{ij}, A_i, \boldsymbol{Z}_i, \boldsymbol{X}_{ij})$ conditional on $R_{ij}C_{i} = 1$ for $i =1,...,s, j = 1,...,m_i$. The estimator for $p_{ij}$, denoted by $\hat{p}_{ij}$, can be obtained by solving the empirical version of the constrained optimization problem of (\ref{model:MRW_CO}) using the same Lagrange multipliers method as in \ref{ss:2.4}. With some algebra manipulation, $\hat{w}_{ij}^{MR}$ can be expressed as (see Web Appendix B for derivation): 
\begin{equation}\label{eq:MR_1}
    \hat{w}_{ij}^{MR} = \hat{p}_{ij} \frac{ \chi^{1,1}\left(\hat{\boldsymbol{\eta}}^{1}, \hat{\boldsymbol{\gamma}}^{1}\right)}{\phi_{ij}^1(\hat{\boldsymbol{\eta}}^1)\lambda_{i}^1(\hat{\boldsymbol{\gamma}}^1)} =\frac{1+ o_p(1)}{\left(\sum_{i=1}^{M} n_i\right)\phi_{ij}^1(\boldsymbol{\eta}_{0})\lambda_{i}^1(\boldsymbol{\gamma}_{0})} , \hspace{0.5cm} i = 1, \ldots, s, j = 1, \ldots m_i.
\end{equation}


Plugging $\hat{w}_{ij}^{MR}$ back to the weighting matrix of (\ref{model:MRW}), we establish the relationship that:
\begin{align}\label{eq:MR_3}
    \hat{\boldsymbol{W}}_{i}^{MR} 
    = \text{diag}\left[R_{ij}C_{i}\hat{w}_{ij}^{MR}\right]_{j = 1, \ldots, n_i}  
    = \text{diag}\left[
    \frac{R_{ij}C_{i} + o_p(1)}{\left(\sum_{i=1}^{M} n_i\right)\phi_{ij}^1(\boldsymbol{\eta}_{0})\lambda_{i}^1(\boldsymbol{\gamma}_{0})} \right]_{j = 1, \ldots, n_i},    
\end{align}
which are asymptotically proportional to the weights of the correctly specified MIPW-GEE estimator of (\ref{model:MW}). As the number of clusters $M$ goes to infinity, we have:
\begin{align} \label{eq:21}
    \nonumber &\hspace{0.5cm}\left(\frac{\sum_{i=1}^{M}{n_i}}{M} \right)\sum_{i=1}^{M} \frac{\partial{\boldsymbol{\mu}_i(\boldsymbol{\beta}_0, A_i)}}{\partial{\boldsymbol{\beta}}}\boldsymbol{V}_{i}^{-1} \hat{\boldsymbol{W}}_{i}^{MR}   (\boldsymbol{Y}_i - \boldsymbol{\mu}_i(\boldsymbol{\beta}_0, A_i)) \\
    \nonumber &= \frac{1}{M}\sum_{i=1}^{M} \frac{\partial{\boldsymbol{\mu}_i(\boldsymbol{\beta}_0, A_i)}}{\partial{\boldsymbol{\beta}}}\boldsymbol{V}_{i}^{-1} \text{diag}\left[
    \frac{R_{ij}C_{i} + o_p(1)}{\phi_{ij}^1(\boldsymbol{\eta}_{0})\lambda_{i}^1(\boldsymbol{\gamma}_{0})}\right] (\boldsymbol{Y}_i - \boldsymbol{\mu}_i(\boldsymbol{\beta}_0, A_i)) \\
    &\stackrel{p}{\longrightarrow} E\left\{\frac{\partial{\boldsymbol{\mu}_i(\boldsymbol{\beta}_0, A_i)}}{\partial{\boldsymbol{\beta}}}\boldsymbol{V}_{i}^{-1} \text{diag}\left[
    \frac{R_{ij}C_{i}}{\phi_{ij}^1(\boldsymbol{\eta}_{0})\lambda_{i}^1(\boldsymbol{\gamma}_{0})}\right] (\boldsymbol{Y}_i - \boldsymbol{\mu}_i(\boldsymbol{\beta}_0, A_i))\right\} = 0,
\end{align}
which proves the consistency of $\hat{\boldsymbol{\beta}}_{MR}$. The results are summarized below: 
\begin{theorem}
When $\mathcal{P}_1$ contains a correct model for $\phi_{ij}\left(\boldsymbol{\eta}\right)$ and $\mathcal{P}_2$ contains a correct model for  $\lambda_i\left(\gamma\right)$, as $M \to \infty$, 
$\hat{\boldsymbol{\beta}}_{MR} \stackrel{P}{\to} \boldsymbol{\beta}_0$. 
\end{theorem}
The proof is provided in Web Appendix B. 

\subsubsection{Asymptotic Distribution}
\label{ss:AD}
We derive the asymptotic distribution of $\hat{\boldsymbol{\beta}}_{MR}$ by following the approach from Theorem 2 of \cite{han2014multiply} assuming that the correct models are known. Without loss of generality, let $\phi_{i j}^{1}\left(\boldsymbol{\eta}^{1}\right)$ and $\lambda_{i}^{1}\left(\boldsymbol{\gamma}^{1}\right)$ be the correctly specified models for $\phi_{i j}\left(\boldsymbol{\eta}\right)$ and $\lambda_{i}\left( \boldsymbol{\gamma}\right)$, respectively. 
The score functions of $\boldsymbol{\eta}^{1}$ and $\boldsymbol{\gamma}^{1}$, 
denoted by $\boldsymbol{S}_{1,i}(\boldsymbol{\eta}^{1})$ and $\boldsymbol{S}_{2,i}(\boldsymbol{\gamma}^{1})$, are:
\begin{align*}
    \boldsymbol{S}_{1,i}(\boldsymbol{\eta}^{1}) &=  \sum_{j=1}^{n_i}\frac{C_i[R_{ij}-\phi_{ij}^{1}\left(\boldsymbol{\eta}^{1}\right)]}{\phi_{ij}^{1}\left(\boldsymbol{\eta}^{1}\right)\left\{1-\phi_{ij}^{1}\left(\boldsymbol{\eta}^{1}\right)\right\}} \frac{\partial \phi_{ij}^{1}\left(\boldsymbol{\eta}^{1}\right)}{\partial \boldsymbol{\eta}^{1}},~
    \boldsymbol{S}_{2,i}(\boldsymbol{\gamma}^{1}) 
    &= \frac{C_i-\lambda_i^{1}\left(\boldsymbol{\gamma}^{1}\right)}{\lambda_i^{1}\left(\boldsymbol{\gamma}^{1}\right)\left\{1-\lambda_i^{1}\left(\boldsymbol{\gamma}^{1}\right)\right\}} \frac{\partial \lambda_i^{1}\left(\boldsymbol{\gamma}^{1}\right)}{\partial \boldsymbol{\gamma}^{1}}.
\end{align*}
Let
\begin{align}
\nonumber \boldsymbol{L} &= \mathbb{E}\left[  \text{diag} \left\{\frac{ \boldsymbol{g}_{i j}(\boldsymbol{\eta}_{*}, \boldsymbol{\gamma}_{*}) }{\phi_{i j}^{1}\left(\boldsymbol{\eta}_{0}\right) \lambda_{i}^{1}\left(\boldsymbol{\gamma}_{0}\right)}  \right\} \boldsymbol{U}_{i}\left(\boldsymbol{\beta}\right) \right], \\
\boldsymbol{G} &= \mathbbm{1}^{\mathrm{T}} \mathbb{E}\left[ \text{diag} \left\{ \frac{\boldsymbol{g}_{ij}\left(\boldsymbol{\eta}_{*}, \boldsymbol{\gamma}_{*}\right)^{\otimes 2}}{\phi_{ij}^{1}\left(\boldsymbol{\eta}_{0} \right) \lambda_{i}^{1}\left(\boldsymbol{\gamma}_{0} \right)} \right\}\right] \mathbbm{1}, \\ 
\nonumber \boldsymbol{Q}_{i}\left(\boldsymbol{\gamma}^{1}, \boldsymbol{\eta}^{1}\right) &= \text{diag} \left\{ \frac{R_{ij}C_{i}}{\phi_{i j}^{1}\left(\boldsymbol{\eta}^{1}\right) \lambda_{i}^{1}\left(\boldsymbol{\gamma}^{1}\right)}  \right\} \boldsymbol{U}_{i}\left(\boldsymbol{\beta}\right)  -\mathbf{L} \mathbf{G}^{-1}\mathbbm{1}^{\mathrm{T}} \text{diag} \left\{\frac{R_{ij}C_{i}-\phi_{ij}^{1}\left(\boldsymbol{\eta}^{1}\right)\lambda_{i}^{1}\left(\boldsymbol{\gamma}^{1}\right)}{\phi_{ij}^{1}\left(\boldsymbol{\eta}^{1}\right)\lambda_{i}^{1}\left(\boldsymbol{\gamma}^{1}\right)} \boldsymbol{g}_{ij}\left(\boldsymbol{\eta}_{*}, \boldsymbol{\gamma}_{*}\right)\right\} \mathbbm{1},
\end{align}
where $\mathbbm{1} = (1, 1, \ldots, 1)^{\mathrm{T}}$ and for any matrix $\boldsymbol{C}$, $\boldsymbol{C}^{\otimes 2}=\boldsymbol{C} \boldsymbol{C}^{\mathrm{T}}$.
Furthermore, write $\boldsymbol{S}_{1,i} = \boldsymbol{S}_{1,i}\left(\boldsymbol{\eta}_0\right)$, $\boldsymbol{S}_{2,i} = \boldsymbol{S}_{2,i}\left(\boldsymbol{\gamma}_0\right) $, and $\boldsymbol{Q}_i = \boldsymbol{Q}_i(\boldsymbol{\gamma}_0, \boldsymbol{\eta}_0)$. The following theorem gives the asymptotic distribution of $\hat{\boldsymbol{\beta}}_{MR}$. 
\begin{theorem}
When both $\mathcal{P}_1$ and $\mathcal{P}_2$ contain a correctly specified model for $\phi_{ij}\left(\boldsymbol{\eta}\right)$ and $\lambda_{i}\left(\boldsymbol{\gamma}\right)$, respectively, $\sqrt{M}(\hat{\boldsymbol{\beta}}_{MR} - \boldsymbol{\beta}_0)$ has an asymptotic normal distribution with mean $\boldsymbol{0}$ and variance var($\boldsymbol{Z}_i$), where
\begin{align*}
    \boldsymbol{Z}_i = \left[E\left\{\frac{\partial \boldsymbol{U}_i\left(\boldsymbol{\beta}_0\right)}{\partial \boldsymbol{\beta}}\right\}\right]^{-1}\left[\boldsymbol{Q}_i-\left\{\mathbb{E}\left(\boldsymbol{Q}_i \boldsymbol{S}_{1,i}^{\mathrm{T}}\right)\right\}\left\{\mathbb{E}\left(\boldsymbol{S}_{1,i}^{\otimes 2}\right)\right\}^{-1} \boldsymbol{S}_{1,i}-\left\{\mathbb{E}\left(\boldsymbol{Q}_i \boldsymbol{S}_{2,i}^{\mathrm{T}}\right)\right\}\left\{\mathbb{E}\left(\boldsymbol{S}_{2,i}^{\otimes 2}\right)\right\}^{-1} \boldsymbol{S}_{2,i}\right].
\end{align*}
\end{theorem}
See Web Appendix C for detailed derivation and proof.

The asymptotic variance of the MMR-GEE estimator requires the knowledge of correct PS models, which are usually unavailable to the investigators. Therefore, the asymptotic variance formula cannot easily be used to obtain the standard error estimates. We recommend using the non-parametric bootstrapping approach for inference \citep{davison1997bootstrap}. Two commonly used bootstrapping strategies for clustered data include ``clustered bootstrap'' and ``individual bootstrap. The ``clustered bootstrap'' approach samples $M$ clusters with replacement, with all individuals from the resampled clusters included in the bootstrap sample \citep{field2007bootstrapping}. The ``individual bootstrap'' approach samples $n_i$ individuals with replacement within the original clusters, keeping total sample size fixed across all bootstrap replicates \citep{roberts2004bootstrapping}. In the current setting where outcomes from an entire cluster may be missing, the ``individual bootstrap" approach applied to the clusters with observed data only does not account for the uncertainty in estimating the cluster-level PS model and may lead to underestimation of standard errors. Therefore, we recommend using the ``cluster bootstrap" approach.

\subsection{An EM Algorithm to Address Misclassification of Cluster-level Missingness Indicators}
\label{ss:2.6}


\label{ss:MIPW-GEE-EM}
The consistency of the proposed MMR-GEE estimator requires parameters of the correctly specified PS models to be consistently estimated, which can usually be achieved by, for example, moment-based or likelihood-based estimators. In the observed data, when no individual outcomes from a cluster is available, it is possible that outcome data from this cluster are missing by the cluster-level missingness process (that is, the true cluster-level missingness indicator  $C_i=0$); it is also possible that the cluster remains in the study, but all individual outcomes from this cluster are missing, especially when cluster size is small (that is, $C_i=1$, but $R_{ij}=0$ for all $j=1,\dots, n_i$). Let $C_i^O$ denote the observed cluster-level missingness indicator. In both cases, we observe $C_i^O=I(\sum_{j=1}^{n_i} R_{ij}>0)=0$, but $C_i$ can be either $0$ or $1$. Because consistent estimation of parameters in the PS models requires knowing $C_i$, potential  misclassification can occur if one naively assigns $C_i=C_i^O$. We summarize all possible patterns of $(C_i, C_i^{O})$, which include: 
\begin{enumerate}
    \item $C_i = 0, C_i^O = 0$: when a cluster drops out after randomization, all participants' outcomes in that cluster cannot be observed so $C_i^O$ is 0.
    \item $C_i = 1, C_i^O = 0$:  When the cluster does not drop out, we might still observe $C_i^{O} = 0$ if all individual outcomes within the cluster are missing. Such scenario is more likely to happen for small cluster sizes.
    \item $C_i = 1, C_i^O = 1$: When the cluster does not drop out, the observed cluster-level missingness indicator is the true cluster-level missingness indicator when at least one participant's outcome in cluster $i$ is observed. \\
\end{enumerate}    

The patterns are also summarized in Table S1 in Web Appendix D.1. Under pattern (2), the observed $C_i^O$ misclassifies the true $C_i$, leading to bias in the estimated parameters for the PS models. More specifically, 
suppose that the PS models are:
\begin{align}
    \label{eq:5}
    \text{logit} \left\{ \lambda_{i}(A_i,\boldsymbol{Z}_{i}; \boldsymbol{\gamma}) \right\} = \boldsymbol{Z}_{i}^* \boldsymbol{\gamma}, \hspace{0.3cm}
    \text{logit}\left\{\phi_{ij}(A_i, \boldsymbol{Z}_{i}, \boldsymbol{X}_{ij}\mid C_i = 1; \boldsymbol{\eta}) \right\}  = \boldsymbol{X}_{ij}^* \boldsymbol{\eta},
\end{align}
where $\boldsymbol{Z}_i^* = \left(\begin{array}{ccc}1 & A_i & \boldsymbol{Z}_{i}^\mathrm{T} \end{array}\right)$, $\boldsymbol{X}_{ij}^* = \left(\begin{array}{cccc}1 & A_i & \boldsymbol{Z}_{i}^\mathrm{T} & \boldsymbol{X}_{ij}^\mathrm{T}  \end{array}\right)$, $\boldsymbol{\gamma} = \left(\begin{array}{ccc} \gamma_{I} & \gamma_{A} & \boldsymbol{\gamma}_{\boldsymbol{Z}}^\mathrm{T} \end{array}\right)^\mathrm{T}$, and $\boldsymbol{\eta} = \left(\begin{array}{cccc} \eta_{I} & \eta_{A} & \boldsymbol{\eta}_{\boldsymbol{Z}}^\mathrm{T} & \boldsymbol{\eta}_{\boldsymbol{X}}^\mathrm{T} \end{array}\right)^\mathrm{T}$. Because $P(R_{ij} = 1|C_i = 1, A_i, \boldsymbol{Z}_{i}, \boldsymbol{X}_{ij}) \neq P(R_{ij} = 1|C_i^{O} = 1, A_i, \boldsymbol{Z}_{i}, \boldsymbol{X}_{ij})$ and $P(C_i = 1|A_i, \boldsymbol{Z}_{i}) \neq P(C_i^{O} = 1|A_i, \boldsymbol{Z}_{i})$, estimators of $\boldsymbol{\gamma}$ and $\boldsymbol{\eta}$ based on $C_i^O$ may be biased even if the PS models are correctly specified. 
To address this potential misclassification problem, we treat $C_i$ as partially observed data and propose an EM algorithm \citep{dempster1977maximum} to estimate the parameters in the PS models.

In the current setting, the ``complete'' data is $\{C_i, \boldsymbol{R}_i, \boldsymbol{Y}_i, A_i, \boldsymbol{Z}_i, \boldsymbol{X}_i\}_{i=1}^{M}$, which is denoted by ($\boldsymbol{C}, \boldsymbol{R}, \boldsymbol{Y}, \boldsymbol{A}, \boldsymbol{Z}, \boldsymbol{X}$) for simplicity of notation.
The complete data log likelihood is:
\begin{align}
\label{eq:EM_1}
    \nonumber & \hspace{0.3cm} \ell_c(\boldsymbol{C},  \boldsymbol{R}, \boldsymbol{Y}, \boldsymbol{A}, \boldsymbol{Z}, \boldsymbol{X}; \boldsymbol{\gamma}, \boldsymbol{\eta}) \\
    &= \nonumber \sum_{i=1}^{M} C_i \left\{ \log \left( \operatorname{expit}\left(\boldsymbol{Z}_{i}^{*} \gamma\right) \right)  +  \sum_{j=1}^{n_i} R_{ij} \log\left(\operatorname{expit}\left(\boldsymbol{X}_{i j}^{*} \boldsymbol{\eta}\right)\right)  + (1 - R_{ij})  \log\left(1 - \operatorname{expit}\left(\boldsymbol{X}_{i j}^{*} \boldsymbol{\eta}\right)  \right) \vphantom{\frac12}\right\} + \\
    & (1 - C_i) \log\left(1 - \operatorname{expit}\left(\boldsymbol{Z}_{i}^{*} \gamma\right)\right) + 
     C_i  \left( \log \left( \operatorname{expit}\left(\boldsymbol{Z}_{i}^{*} \gamma\right) \right) + \sum_{j=1}^{n_i} \log \left(1 - \operatorname{expit} \left(\boldsymbol{X}_{i j}^{*}  \boldsymbol{\eta} \right)  \right)\right). 
\end{align} 
The conditional expectation of the Expectation step (E-step) at iteration $\nu$ given the observed data $(\boldsymbol{C}^O, \boldsymbol{R}, \boldsymbol{Y}, \boldsymbol{A}, \boldsymbol{Z}, \boldsymbol{X})$ is:
\begin{align}
    \label{eq:EM_2}
    \nonumber & \hspace{0.3cm} Q\left(\boldsymbol{\gamma}, \boldsymbol{\eta}, \boldsymbol{\gamma}^{(\nu)}, \boldsymbol{\eta}^{(\nu)}, \boldsymbol{C}^O, \boldsymbol{R}, \boldsymbol{Y}, \boldsymbol{A}, \boldsymbol{Z}, \boldsymbol{X}\right) \\
    \nonumber &=\mathbb{E}_{\boldsymbol{\gamma}^{(\nu)}, \boldsymbol{\eta}^{(\nu)}}\left\{\ell(\boldsymbol{C},  \boldsymbol{R}, \boldsymbol{Y}, \boldsymbol{A}, \boldsymbol{Z}, \boldsymbol{X}; \boldsymbol{\gamma}, \boldsymbol{\eta}) \mid \boldsymbol{R}, \boldsymbol{Y}, \boldsymbol{A}, \boldsymbol{Z}, \boldsymbol{X}\right\} \\
    \nonumber &= \sum_{i=1}^{M} C_i^O \left\{ \log \left( \operatorname{expit}\left(\boldsymbol{Z}_{i}^{*} \gamma\right) \right)  +  \sum_{j=1}^{n_i} R_{ij} \log\left(\operatorname{expit}\left(\boldsymbol{X}_{i j}^{*} \boldsymbol{\eta}\right)\right) + (1 - R_{ij})  \log\left(1 - \operatorname{expit}\left(\boldsymbol{X}_{i j}^{*} \boldsymbol{\eta}\right)  \right) \vphantom{\frac12}\right\} + \\
    \nonumber &\hspace{1.3cm}  (1 - w_i^{(\nu)})(1 - C_i^O) \log\left(1 - \operatorname{expit}\left(\boldsymbol{Z}_{i}^{*} \gamma\right)\right) + \\
    &\hspace{1.3cm} w_i^{(\nu)}(1 - C_i^O)  \left( \log \left( \operatorname{expit}\left(\boldsymbol{Z}_{i}^{*} \gamma\right) \right) + \sum_{j=1}^{n_i} \log \left(1 - \operatorname{expit} \left(\boldsymbol{X}_{i j}^{*}  \boldsymbol{\eta} \right)  \right)\right), \\
    \nonumber &\text{where } w_{i}^{(\nu)} 
    = C_i^O + (1 - C_i^O) \times \hspace{0.3cm} \left\{ \frac{\prod_{j=1}^{n_i} \left( 1 -  \operatorname{expit}(\boldsymbol{X}_{ij}^* \boldsymbol{\eta}^{(\nu)})\right) \operatorname{expit}(\boldsymbol{Z}_{i}^* \boldsymbol{\gamma}^{(\nu)})}{1 - \operatorname{expit}(\boldsymbol{Z}_{i}^* \boldsymbol{\gamma}^{(\nu)}) \left[ 1 -  \prod_{j=1}^{n_i} \left(1 -  \operatorname{expit}(\boldsymbol{X}_{ij}^* \boldsymbol{\eta}^{(\nu)})\right) \right]} \right\}.
\end{align}
For the Maximization step (M-step), we recommend using the optimization software such as the \textit{Optim} function in R \citep{optimr, R} to maximize the complete data likelihood. 
Applying the E-step and M-step iteratively, the EM estimators $\hat{\boldsymbol{\gamma}}^{EM}$ and $\hat{\boldsymbol{\eta}}^{EM}$ can be obtained after the algorithm converges. When the PS models are correctly specified, $\hat{\boldsymbol{\gamma}}^{EM}$ and $\hat{\boldsymbol{\eta}}^{EM}$ would be consistent for the true parameters of the PS models despite misclassification of the cluster-level missingness indicators, i.e., $\hat{\boldsymbol{\eta}}^{EM} \stackrel{P}{\longrightarrow} \boldsymbol{\eta}_{0}$ and $\hat{\boldsymbol{\gamma}}^{EM} \stackrel{P}{\longrightarrow} \boldsymbol{\gamma}_{0}$. The detailed derivation of the complete data likelihood and E-step as well as pseudo code for the algorithm can be found in Web Appendix D.

\subsection{Extension to three-level CRTs}
\label{ss:2.7}

In three-level CRTs, study participants are nested in subclusters such as households or healthcare providers, and subclusters are nested in clusters such as regions or clinics. In this Section, we describe how the proposed methods can be modified to address the informative outcome missingness at both the subcluster and individual levels, as in the Pro-CCM study described in the Introduction \citep{ratovoson2022proactive}. 


Let $Y_{ijk}$ be the outcome and $\boldsymbol{X}_{ijk} = (X_{ijk}^{1},X_{ijk}^{2},\ldots)^\mathrm{T}$ be a vector of baseline covariates for participant $k = 1,...,n_{cs}$ from subcluster $j = 1,...,n_s$ in cluster $i = 1,...,n_c$. Here, the baseline covariates $\boldsymbol{X}_{ijk}$ can contain individual-, subcluster-, and cluster-level information. Similar to before, we consider a two-arm parallel CRTs using the same binary treatment indicator notation $A_i$; we also assume that all covariates are fully observed before randomization. 
For the multi-level missingness processes, $\boldsymbol{R}_{ij} = (R_{ij1}, R_{ij2}, \ldots)^\mathrm{T}$ is used to denote the vector of individual-level missingness indicator and $C_{ij}$ is used to denote the subcluster-level missingness indicator for outcomes $\boldsymbol{Y}_{ij} = (Y_{ij1}, Y_{ij2}, \ldots)^\mathrm{T}$. $R_{ijk} = 1$ when $Y_{ijk}$ is observed and $R_{ijk} = 0$ when $Y_{ijk}$ is missing. $C_{ij} = 0$ when all participants' outcomes in subcluster $j$ are missing and $C_{ij} = 1$ otherwise. Essentially, Table \ref{tab:1} represents the data structure of one cluster, and the data structure for three-level CRTs is the concatenation of all clusters. 

Under the multi-level missingness setting for three-level CRTs, the estimating equation for MIPW-GEE and MMR-GEE can be modified as follows:
\begin{equation}\label{model:MIPW-GEE-three-level}
    \sum_{i=1}^{n_c} \frac{\partial{\boldsymbol{\mu}_i(\boldsymbol{\beta}, A_i)}}{\partial{\boldsymbol{\beta}}}\boldsymbol{V}_{i}^{-1}\boldsymbol{W}_{i}(A_i, \boldsymbol{X}_{i};\boldsymbol{\gamma}, \boldsymbol{\beta})(\boldsymbol{Y}_i - \boldsymbol{\mu}_i(\boldsymbol{\beta}, A_i)) = 0.
\end{equation}
$\frac{\partial{\boldsymbol{\mu}_i(\boldsymbol{\beta}, A_i)}}{\partial{\boldsymbol{\beta}}}$ is the design matrix with $\boldsymbol{\mu}_{i}(\boldsymbol{\beta}, A_i) = \{\{\boldsymbol{\mu}_{ijk}(\boldsymbol{\beta}, A_i)\}_{k = 1}^{n_{cs}}\}_{j = 1}^{n_s}$. $\boldsymbol{V}_{i} = \boldsymbol{F}_{i}^{1/2}\boldsymbol{C}(\boldsymbol{\alpha})\boldsymbol{F}_{i}^{1/2}$ is the covariance matrix with $\boldsymbol{F}_{i} = \text{diag}\{\text{diag}[\text{var}(y_{ijk})]_{k = 1}^{n_{cs}}\}_{j = 1}^{n_s}$. Under three-level CRTs, exchangeable and block exchangeable correlation structures are common choices for the specification of $\boldsymbol{C}(\boldsymbol{\alpha})$. To account for informative missing subclusters, the multi-level weighting matrix takes the following form:
\begin{align}\label{model:MW-three-level}
    \boldsymbol{W}_{i}(A_i, \boldsymbol{X}_{i}; \boldsymbol{\gamma}, \boldsymbol{\eta})  
    &= \text{diag}\left\{\boldsymbol{W}_{ij}(A_i, \boldsymbol{X}_{ij}; \boldsymbol{\gamma}, \boldsymbol{\eta})\right\}_{j = 1}^{n_{s}}  \nonumber \\ 
    &= \text{diag}\left\{\text{diag}\left[\boldsymbol{W}_{ijk}(A_i, \boldsymbol{X}_{ijk}; \boldsymbol{\gamma}, \boldsymbol{\eta})\right]_{k = 1}^{n_{cs}} \right\}_{j = 1}^{n_{s}}  \\
    &= \text{diag}\left\{\text{diag}\left[\frac{R_{ijk}C_{ij}}{\phi_{ijk}(A_i, \boldsymbol{X}_{ijk} \mid C_{ij} = 1; \boldsymbol{\eta})\lambda_{ij}(A_i, \boldsymbol{X}_{ij}; \boldsymbol{\gamma}) }\right]_{k = 1}^{n_{cs}} \right\}_{j = 1}^{n_{s}} \nonumber,
\end{align}
where $\phi_{ijk}(A_i, \boldsymbol{X}_{ijk}|C_{ij} = 1; \boldsymbol{\eta})$ is the individual-level missingness process for $P(R_{ijk} = 1 |C_{ij} = 1, A_i, \boldsymbol{X}_{ij})$ and $\lambda_{ij}(A_i, \boldsymbol{X}_{ij}; \boldsymbol{\gamma})$ is the subcluster-level missingness process for $P(C_{ij} = 1|A_i, \boldsymbol{X}_{ij})$. The extension of the MMR-GEE estimator can be obtained by replacing the weighting matrix of Equation (\ref{model:MIPW-GEE-three-level}) by:
\begin{align}
\label{model:MRW-three-level}
    \boldsymbol{W}_{i}^{MR} = \text{diag}\{\text{diag}\left[R_{ijk}C_{ij}w_{ijk}^{MR}\right]_{k = 1}^{n_{cs}}\}_{j = 1}^{n_s}. 
\end{align} 
The estimation of $w_{ijk}^{MR}$ follows the same strategy as in Section \ref{ss:2.4}.

\section{A Simulation Study: Treatment of  Anemia in Malaria-Endemic Ghana}
\label{s:simulation}

We carried out a simulation study to investigate the finite-sample performance of our proposed MMR-GEE estimator and the operating characteristics of the EM algorithm under varying cluster sizes and model specifications for the PS. We designed the simulation study based on the Treatment of Iron Deficiency Anemia in Malaria-Endemic Ghana study, a CRT evaluating the use of iron supplements on reducing the incidence of malaria among infants and young children in Ghana, West Africa \citep{zlotkin2013effect}. The study randomized children aged 6 to 35 months by cluster, defined as a compound including 1 or more households. Seven hundred eighty clusters (967 children) in the treatment group received daily micronutrient powder (MNP) with iron; seven hundred seventy two clusters (991 children) in the control group received daily MNP without iron. At the end of the study, 12 clusters (25 children) in the iron group and 19 clusters (29 children) in the no iron group were lost to follow-up. The primary outcome was incidence of malaria over the course of the study. We focused on cumulative incidence of malaria during 1-month post-intervention period. Descriptive summaries of the outcome and baseline covariates extracted from \cite{zlotkin2013effect} are provided in Table S2 in Web Appendix E.1. 

\subsection{Data Generating Processes}
\label{ss:data_generation}
We generated data to match the summary statistics of Table S2. We treated the primary outcome $Y_{ij}$ (incidence of malaria per 100 child-year) as a continuous variable, which was generated as
\begin{align} \label{example:OM1}
    Y_{ij} = \beta_{I}^{*} + \beta_{A}^{*} A_i + \boldsymbol{Z}_{i}^{O}\boldsymbol{\beta}_{Z} +\boldsymbol{X}_{ij}^{O}\boldsymbol{\beta}_{X} +   A_i\boldsymbol{Z}_{i}^{O}\boldsymbol{\beta}_{AZ} +A_i\boldsymbol{X}_{ij}^{O}\boldsymbol{\beta}_{AX} +  \delta_{i} + \varepsilon_i,
\end{align}
where the treatment assignment ($A_i$) was simulated from the Bernoulli distribution with probability $p = 0.5$, $\boldsymbol{Z}_{i}^{O}$ contained \textit{household size}, \textit{household education}, and \textit{wealth}. $\boldsymbol{X}_{ij}^{O}$ contained \textit{age}, \textit{wasting z score}, and \textit{stunted growth z score}. $\delta_{i} \sim N(0,\sigma^2_\delta)$ was the cluster random intercept and $\varepsilon_i \sim N(0,\sigma^2_\varepsilon)$ was the residual standard error.

The true marginal mean model induced by marginalizing over $(\boldsymbol{Z}_{i}^{O},\boldsymbol{X}_{ij}^{O}, A_i\boldsymbol{Z}_{i}^{O},A_i\boldsymbol{X}_{ij}^{O})$, the random intercept, and residual standard error was  $\mathbb{E}[Y_{ij}|A_i] = \beta_{I} + \beta_{A} A_i$. The study concluded that daily use of MNP with iron did not increase the risk of malaria during the post-intervention period, so $(\beta_I^*, \beta_A^*, \boldsymbol{\beta}_Z^{\mathrm{T}}, \boldsymbol{\beta}_X^{\mathrm{T}}, \boldsymbol{\beta}_{AZ}^{\mathrm{T}}, \boldsymbol{\beta}_{AX}^{\mathrm{T}})$ were chosen such that the intercept ($\beta_I$) was 63.5 
and the marginal treatment effect ($\beta_A$) was 0.

Missing outcome data processes were induced through the following logistic models:
\begin{equation} \label{example:PS1}
    \text{logit}\left(\lambda_i\left(A_{i},\boldsymbol{Z}_{i}^{C}; \boldsymbol{\gamma}\right)\right) = \gamma_{I} + \gamma_{A} A_i + \boldsymbol{Z}_{i}^{C}\boldsymbol{\gamma}_{\boldsymbol{Z}} + A_{i}\boldsymbol{Z}_{i}^{C}\boldsymbol{\gamma}_{A\boldsymbol{Z}},
\end{equation}
\begin{equation} \label{example:PS2}
    \text{logit}\left(\phi_{ij}\left(A_{i}, Z_{i}^{I}, \boldsymbol{X}_{ij}^{I} \mid C_{i}=1 ; \boldsymbol{\eta}\right)\right) = \eta_{I} + \eta_{A} A_i + \eta_{Z}Z_{i}^{I} + \boldsymbol{X}_{ij}^{I}\boldsymbol{\eta}_{X} + A_i\boldsymbol{X}_{ij}^{I}\boldsymbol{\eta}_{AX}.
\end{equation}
$\boldsymbol{Z}_{i}^{C}$ included \textit{household size}, \textit{household education}, and \textit{wealth}. $Z_{i}^{I}$ included \textit{Complementary foods $\leq$ 6 mo}. $\boldsymbol{X}_{ij}^{I}$ included \textit{wasting z score} and \textit{stunted growth z score}. Around 2\% of the clusters and 2.8\% of the overall participants were missing in the original study. For illustration, we inflated the missingness by choosing $\boldsymbol{\gamma}$ and $\boldsymbol{\eta}$ such that 12\% of the clusters were missing and 30\% of the overall participants had missing outcomes. We considered a range of settings by varying the marginal effect parameter $\beta_A=(0, 1.5)$, cluster sizes, and ICC values. A detailed description of parameter values is provided in Web Appendix E.2. 


\subsection{Analysis Approaches}

To demonstrate the importance of correcting potential bias due to informative multi-level missing outcomes, we compared the following four approaches. First, we carried out an unweighted CC-GEE analysis based on Model (\ref{model:GEE}). Second, we applied the IPW-GEE method based on Model (\ref{model:IPW-GEE}), where the PS was estimated by the unconditional logistics model with the same functional form as Model (\ref{example:PS2}) but ignored cluster-level missingness:
\begin{equation} \label{example:PS3}
     \text{logit}\left(\pi_{ij}\left(A_{i}, Z_{i}^{I}, \boldsymbol{X}_{ij}^{I}; \boldsymbol{\theta}\right)\right) = \theta_{I} + \theta_{A} A_i + \theta_{Z} Z_{i}^{I}+ \boldsymbol{X}_{ij}^{I}\boldsymbol{\theta}_{X} +  A_i\boldsymbol{X}_{ij}^{I}\boldsymbol{\theta}_{AX}.
\end{equation}
Third, we employed the MIPW-GEE method, where both the cluster- and individual-level PS models were correctly specified. To estimate the parameters in the PS models, we fitted the standard logistic regression models based on $C_i^O$ (denoted as MIPW-GEE-no-EM) and also applied the EM algorithm (denoted as MIPW-GEE-EM). Lastly, we implemented our proposed MMR-GEE estimator by specifying $\mathcal{P}_1 = \{\lambda_{i}^{k}(\boldsymbol{\gamma}^{k}), k = 1, 2\}$ and $\mathcal{P}_2 = \{\phi_{ij}^{\ell}(\boldsymbol{\eta}^{\ell}), \ell = 1, 2\}$ with all parameters estimated by the EM algorithm. Both $\mathcal{P}_1$ and $\mathcal{P}_2$ contained one correctly specified and one misspecified models. See Web Appendix E.3 for details of the PS model specification.
For all approaches, we adopted an exchangeable working correlation structure and used the ``cluster bootstrap'' method for obtaining standard error estimates.


\subsection{Simulation Results}

\begin{figure}
    \includegraphics[scale = 0.7]{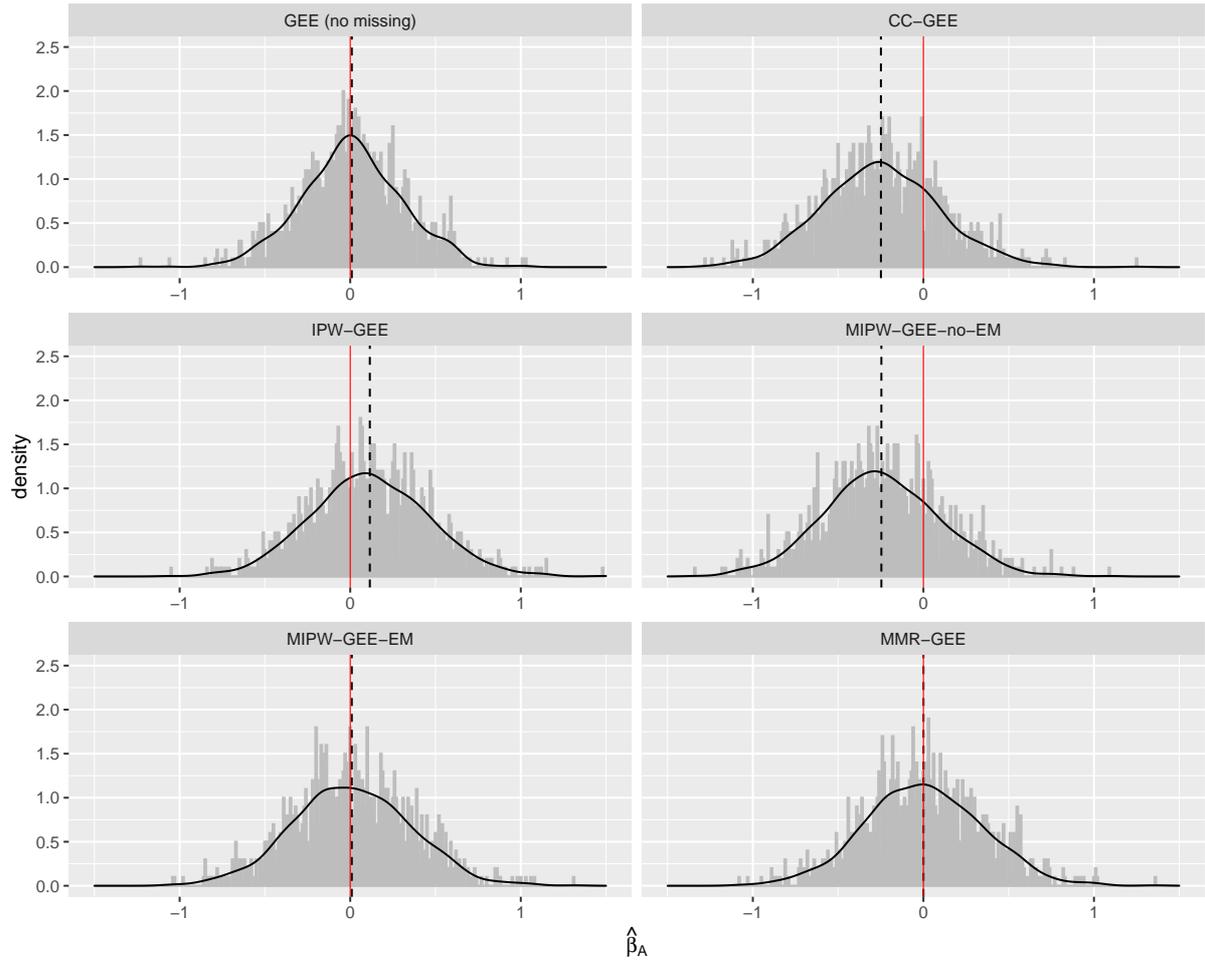}
    \caption{Empirical distribution of the estimated marginal treatment effect ($\hat{\beta}_A$) based on 1,000 replicates under $n_i \sim DU(1,5), M = 1552, ICC = 0.08$. The red line denotes the true marginal treatment effect (0). The black dotted line denotes the empirical mean of the estimated marginal treatment effect.}
    \label{fig:1}
\end{figure}

Figure \ref{fig:1} presents the empirical distribution of the estimated marginal treatment effect under the null (i.e., $\beta_A=0$) with $M = 1552$, $n_i\sim DU(1,5)$, $ICC = 0.0804$. The red line denotes the true marginal effect and the black dotted line is the empirical mean across 1,000 effect estimates. Overall, the proposed MMR-GEE  and MIPW-GEE-EM estimators led to estimates that were centered at the true $\beta_A$. Because CC-GEE ignored informative missing data and IPW-GEE failed to account for cluster-level missingness, they both resulted in biased estimates. MIPW-GEE-no-EM attempted to adjust for the multi-level missingness processes. However, without using the EM algorithm to correct the misclassfication in $C_i$, estimates from MIPW-GEE-no-EM led to bias. On the other hand, MIPW-GEE-EM appropriately accounted for the misclassification in $C_i$ and the bias disappeared. The corresponding figures under other settings are provided in Web Appendix E.4.

Table \ref{tab:3} summarizes the empirical bias, empirical standard errors, the average standard error estimates, and coverage probability. Under all scenarios, CC-GEE and IPW-GEE provided biased estimates of the marginal treatment effect (bias ranging from -0.262 to 0.599 for CC-GEE and 0.105 to 0.330 for IPW-GEE), whereas the average of estimates from MIPW-GEE-EM and MMR-GEE were very close to the true parameter values. The results for MIPW-GEE-no-EM were biased for small cluster sizes but the bias vanished as $n_i$ increased to $DU(30,50)$. Under all scenarios, the average of the bootstrapping-based standard errors was close to the empirical standard error. When the marginal treatment effect was consistently estimated (i.e., MMR-GEE and MIPW-GEE-EM, and MIPW-GEE-no-EM for $n_i \sim DU(30,50)$), the percentage of 95\% confidence interval that covered the true parameter values was close to 95\%. On the other hand, the empirical coverage associated with CC-GEE and IPW-GEE can be substantially lower than the nominal level (e.g., $<$40\% for CC-GEE and $<$70\% for IPW-GEE).

\begin{table}
\caption{Empirical bias, empirical standard errors, mean of estimated bootstrapping-based standard errors using the ``cluster bootstrap'' method, and coverage probability based on 1,000 replicates and 100 bootstrapping resamples. The coverage probability is the percentage of true $\beta_A$ contained in the 95\% confidence interval constructed from the bootstrapping-based standard errors.}
\label{tab:3}
 \centering
\begin{tabular*}{\textwidth}{@{}l@{\extracolsep{\fill}}*{4}{p{0.1\textwidth}}}
\hline
\hline
& & & \multicolumn{2}{c}{cluster bootstrap} \\
\cline{4-5} \\
[-10pt]
& \multicolumn{1}{c}{emp. bias} & \multicolumn{1}{c}{emp. SE}  & \multicolumn{1}{c}{est. SE} & \multicolumn{1}{c}{cov prob. (\%)} \\
\hline
\multicolumn{5}{l}{\underline{Data under null ($\beta_A = 0$)}}  \\
\multicolumn{5}{c}{\underline{$M = 1552, n_i \sim DU(1,5), ICC = 0.0804, P(C_i = C_i^O) = 0.87$}} \\
GEE (no missing) & 0.009 & 0.296 & 0.293 & 94.4 \\
CC-GEE & -0.249 & 0.345 & 0.350 & 89.3 \\
IPW-GEE & 0.115 & 0.345 & 0.349 & 93.4 \\
MIPW-GEE-no-EM & -0.246 & 0.343 &  0.346 & 88.8  \\
MIPW-GEE-EM & 0.009 & 0.344 & 0.348 & 94.9  \\
MMR-GEE & $<0.001$ & 0.343 & 0.346 & 94.5 \\
\multicolumn{5}{c}{\underline{$M = 1552, n_i = 3, ICC = 0.0804, P(C_i = C_i^O) = 0.99$}} \\
GEE (no missing) & -0.010 & 0.205 & 0.202 & 94.0 \\
CC-GEE & -0.262 & 0.237 & 0.237 & 80.0 \\
IPW-GEE & 0.105 & 0.237 & 0.236 & 90.4   \\
MIPW-GEE-no-EM & -0.025 & 0.239 & 0.235 & 94.1  \\
MIPW-GEE-EM & -0.007 & 0.240 & 0.236 & 94.0 \\
MMR-GEE & -0.009 & 0.239 & 0.234 & 94.1 \\
\multicolumn{5}{c}{\underline{$M = 300, n_i \sim DU(30,50), ICC = 0.2, P(C_i = C_i^O) = 1$}} \\
GEE (no missing) & 0.007 & 0.266 & 0.282 & 94.6 \\
CC-GEE &  -0.247 & 0.291 & 0.308 & 86.6 \\
IPW-GEE & 0.121 & 0.292 & 0.309 & 92.8 \\
MIPW-GEE-no-EM & 0.009 & 0.282 & 0.299 & 93.9 \\
MIPW-GEE-EM & 0.008 & 0.281 & 0.299 & 93.9 \\
MMR-GEE & 0.013 & 0.282 & 0.299 & 94.1 \\
\hline
\multicolumn{5}{l}{\underline{Data under alternative ($\beta_A = 1.5$)}}  \\
\multicolumn{5}{c}{\underline{$M = 1552, n_i \sim DU(1,4), ICC = 0.0804, P(C_i = C_i^O) = 0.94$}}  \\
Full Data & 0.001 & 0.179 & 0.188 & 95.6\\
Complete Cases & 0.483 & 0.204 & 0.211 & 37.8 \\
IPW-GEE & 0.270 & 0.201 & 0.209 & 73.8 \\
MIPW-GEE-no-EM & 0.072 & 0.211 & 0.216 & 93.0 \\
MIPW-GEE-EM & 0.002 & 0.212 & 0.217 & 95.1 \\
MMR-GEE & 0.009 & 0.212 & 0.214 & 95.1 \\
\multicolumn{5}{c}{\underline{$M = 1552, n_i = 3, ICC = 0.2, P(C_i = C_i^O) = 0.99$}}\\
Full Data & -0.002 & 0.174 & 0.177 & 94.4 \\
Complete Cases & 0.511 & 0.198 & 0.197 & 27.0 \\
IPW-GEE & 0.279 & 0.195 & 0.196 & 69.9 \\
MIPW-GEE-no-EM & 0.034 & 0.202 & 0.200 & 93.5 \\
MIPW-GEE-EM & $<0.001$ & 0.202 & 0.200 & 94.8 \\
MMR-GEE & 0.006 & 0.196 & 0.196 & 95.1 \\
\multicolumn{5}{c}{\underline{$M = 300, n_i \sim DU(30,50), ICC = 0.2, P(C_i = C_i^O) = 1$}} \\
Full Data & 0.009 & 0.308 & 0.314 & 95.1 \\
Complete Cases & 0.599 & 0.339 & 0.343 & 56.9 \\
IPW-GEE & 0.330 & 0.335 & 0.341 & 85.2 \\
MIPW-GEE-no-EM & 0.013 & 0.327 & 0.328 & 94.9 \\
MIPW-GEE-EM & 0.013 & 0.327 &  0.328 & 94.9\\
MMR-GEE & 0.011 & 0.323 & 0.329 & 95.3  \\
\hline
\end{tabular*}
\end{table}

We further compared strategies for estimating the parameters in the correctly specified PS models with and without using the EM algorithm and presented the empirical absolute bias in Table \ref{tab:4}. Under all scenarios, $\boldsymbol{\hat{\gamma}}^{EM}$ and $\boldsymbol{\hat{\eta}}^{EM}$ were centered at the true values, whereas the parameters estimated by the standard logistic regression using the observed $C_i^O$ can be substantially biased for $n_i \sim DU(1,4)$, $n_i \sim DU(1,5)$, and $n_i = 3$ but the bias vanished for $n_i \sim DU(30,50)$. As mentioned in Section \ref{ss:MIPW-GEE-EM}, the misclassification in $C_i$ due to the missingness of all individual outcomes in the cluster is more likely to happen for small $n_i$. As cluster size increases, the misclassification in $C_i$ becomes less probable because the probability of all individual outcomes within a cluster being missing, $\prod_{j=1}^{n_i} P(R_{ij} = 0| A_i, \boldsymbol{Z}_i, \boldsymbol{X}_{ij})$, would be very small. Therefore, large cluster size obviates the need to apply the EM algorithm. Also of note is that even when the EM estimator is not needed, applying the EM algorithm does not impact the estimated parameters as we can see that the MIPW-GEE-EM estimator performed almost identically to the MIPW-GEE-no-EM estimator for $n_i \sim DU(30,50)$.

\begin{table}[h!]
\caption{Empirical absolute bias of the estimated parameters in the PS models based on 1,000 replicates. All parameters were estimated from the correctly specified cluster- and individual-level PS models but with different estimation approaches: one applied the proposed EM algorithm, and the other fitted the standard logistic regression based on $C_i^O$}
\label{tab:4}
\begin{center}
\begin{tabular*}{\textwidth}{@{\extracolsep{\fill}}*{7}{c}}
\hline
\hline
&  with EM & w.o. EM & with EM & w.o. EM & with EM & w.o. EM \\ \hline 
\multirow{2}{*}{\underline{$\beta_A = 0$}} & \multicolumn{2}{c}{$n_i \sim DU(1,5)$} & \multicolumn{2}{c}{$n_i = 3$} & \multicolumn{2}{c}{$n_i \sim DU(30,50)$} \\
\cline{2-3} \cline{4-5} \cline{6-7} \\
[-10pt]
$\gamma_{I}$ & 0.01  & 0.41 & 0.01 & $<0.01$ & 0.02 & 0.02 \\
$\gamma_{A}$ &  0.03 & 0.03 & 0.02 & 0.08 & 0.03 & 0.03  \\
$\gamma_{Z_1}$ & $<0.01$ & 0.09  & $<0.01$ & $<0.01$ & 0.01 & 0.01 \\
$\gamma_{Z_2}$ & 0.02 & 0.14 & 0.01 & $<0.01$ & 0.03 & 0.03 \\
$\gamma_{Z_3}$ & $<0.01$ & 0.10 & $<0.01$ & $<0.01$ & 0.01 & 0.01 \\
$\gamma_{AZ_1}$ & $<0.01$ & 0.25 & $<0.01$ & 0.08 & 0.03 & 0.03 \\
$\gamma_{AZ_2}$ & 0.04 & 0.41 & 0.03 & 0.10 & 0.04 & 0.04 \\
$\gamma_{AZ_3}$ & $<0.01$ & 0.29 & 0.02 & 0.01 & 0.07 & 0.03 \\
$\eta_{I}$ & $<0.01$ & 1.18 & 0.01 & 0.04 & $<0.01$ & $<0.01$ \\
$\eta_A$ & 0.01 & 0.23 & $<0.01$ & 0.04 & $<0.01$ & $<0.01$ \\
$\eta_{X_4}$ & $<0.01$ & $<0.01$ & $<0.01$ & $<0.01$ & $<0.01$ & $<0.01$ \\
$\eta_{X_5}$ & $<0.01$ & 0.01 & $<0.01$ & $<0.01$ & $<0.01$ & $<0.01$ \\
$\eta_{Z_4}$ & 0.01 & 0.04 & 0.01 & 0.01 & $<0.01$ & $<0.01$ \\
$\eta_{AX_4}$ & $<0.01$ & $<0.01$ & $<0.01$ & $<0.01$ & $<0.01$ & $<0.01$ \\
$\eta_{AX_5}$ & $<0.01$ & 0.01 & $<0.01$ & $<0.01$ & $<0.01$ & $<0.01$ \\
\hline
\multirow{2}{*}{\underline{$\beta_A = 1.5$}} & \multicolumn{2}{c}{$n_i \sim DU(1,4)$} & \multicolumn{2}{c}{$n_i = 3$} & \multicolumn{2}{c}{$n_i \sim DU(30,50)$} \\
\cline{2-3} \cline{4-5} \cline{6-7} \\
[-10pt]
$\gamma_{I}$ & 0.04 & 0.36 & 0.02 & 0.01 & 0.06 & 0.06 \\
$\gamma_{A}$ &  0.01 & 0.40 & $<0.01$ & 0.20 & 0.03 & 0.03 \\
$\gamma_{Z_3}$ & 0.01 & 0.08  & $<0.01$ & 0.02 & 0.03 & 0.03 \\
$\gamma_{Z_4}$ & 0.01 & 0.09 & $<0.01$ & 0.01 & 0.02 & 0.02 \\
$\gamma_{AZ_3}$ & 0.01 & 0.03 & $<0.01$ & 0.03 & 0.04 & 0.04 \\
$\gamma_{AZ_4}$ & $<0.01$ & 0.13 & $<0.01$ & 0.07 & 0.01 & 0.01 \\
$\eta_{I}$ & $<0.01$ & $<0.01$ & $<0.01$ & $<0.01$ & $<0.01$ & $<0.01$ \\
$\eta_A$ & $<0.01$ & 0.01 & 0.01 & 0.01 & 0.01 & 0.01 \\
$\eta_{Z_3}$ & $<0.01$ & $<0.01$ & $<0.01$ & $<0.01$ & $<0.01$ & $<0.01$ \\
$\eta_{X_1}$ & $<0.01$ & $<0.01$ & $<0.01$ & $<0.01$ & $<0.01$ & $<0.01$ \\
$\eta_{X_2}$ & $<0.01$ & $<0.01$ & $<0.01$ & $<0.01$ & $<0.01$ & $<0.01$ \\
$\eta_{X_3}$ & 0.01 & 0.01 & 0.01 & 0.01 & $<0.01$ & $<0.01$\\
$\eta_{X_4}$ & $<0.01$ & $<0.01$ & $<0.01$ & $<0.01$ & $<0.01$ & $<0.01$ \\
$\eta_{AZ_3}$ & $<0.01$ & $<0.01$ & $<0.01$ & $<0.01$ & $<0.01$ & $<0.01$ \\
$\eta_{AX_1}$ & $<0.01$ & $<0.01$ & 0.01 & 0.01 & $<0.01$ & $<0.01$\\
$\eta_{AX_2}$ & $<0.01$ & $<0.01$ & 0.01 & $<0.01$ & $<0.01$ & $<0.01$ \\
$\eta_{AX_3}$ & 0.01 & 0.01 & 0.01 & 0.01 & $<0.01$ & $<0.01$ \\
$\eta_{AX_4}$ & $<0.01$ & $<0.01$ & $<0.01$ & $<0.01$ & $<0.01$ & $<0.01$ \\
\hline
\end{tabular*}
\end{center}
\end{table}

\newpage

\section{Application}
\label{s:applications}

We illustrate our proposed methods using the Pro-CCM study described in Introduction and 
Section \ref{ss:2.7}. \cite{ratovoson2022proactive} investigated the efficacy of  the pro-CCM intervention in reducing the prevalence of malaria in the Mananjary district of Madagascar. Twenty-two clusters (i.e. fokontany) were randomized with a 1:1 ratio to pro-CCM (treatment) or iCCM (control). Study participants were nested in households, which were nested in each fokontany. The disease status of each study participant was assessed using the rapid diagnostic tests (RDTs) at baseline and at endline. Here we focus on the individual-level RDT result at endline (RDT = 1 if positive, 0 if negative). The dataset consists of 29,683 participants 
with seven individual-level baseline covariates (male indicator $X_{ijk, 1}$, age $X_{ijk, 2}$, primary school indicator $X_{ijk, 3}$, secondary school indicator $X_{ijk, 4}$, high level school indicator $X_{ijk, 5}$, sleep in mosquito nets indicator $X_{ijk, 6}$, sleep in the yard indicator $X_{ijk, 7}$) and four household-level baseline covariates (household size $Z_{ij, 1}$, \% of male $Z_{ij, 2}$, highest education level $Z_{ij, 3}$, indoor residual spraying indicator $Z_{ij, 4}$).

The overall missingness of the individual-level outcome at endline was 31\%, corresponding to 22.3\% of missing households. Results based on a mixed effects model adjusting for socio-demographic characteristics suggested no statistical differences in RDT positivity at endline between participants in the intervention and control arm (OR = 0.71; 95\% CI: 0.36–1.43) \citep{ratovoson2022proactive}. We reanalyzed this dataset using the GEE approaches to estimate the marginal treatment effect while assuming outcomes had covariate-dependent missingness. We applied backward step-wise procedure based on the AIC to select covariates for the PS, which yielded the following models:
\begin{align} 
    \label{app:PS1}
    \operatorname{logit}\left(\pi_{i jk}; \boldsymbol{\theta}\right)=& \theta_{I}+\theta_{A} A_i+ \sum_{q \in\{1,2,3,4\}} \theta_{X}^{(q)} X_{ijk, q}+ \sum_{q \in\{1,2,3,4\}} \theta_{Z}^{(q)} Z_{ij, q} \\
    &+A_i \sum_{q \in\{1,2,3,4\}} \theta_{A X}^{(q)} X_{ijk, q}+ A_i \sum_{q \in\{1,2,4\}} \theta_{A Z }^{(q)} Z_{ij, q} \nonumber \\
    \label{app:PS2_1}
\operatorname{logit}\left(\lambda_{i j}; \boldsymbol{\gamma}\right)=& \gamma_{I}+\gamma_{A} A_i+ \sum_{q \in\{1,3,4\}} \gamma_{Z}^{(q)} Z_{ij, q} + A_i \sum_{q \in\{4\}} \gamma_{A Z }^{(q)} Z_{ij, q} \\ \label{app:PS2_2}
    \operatorname{logit}\left(\phi_{i jk}; \boldsymbol{\eta}\right)=& \eta_{I}+\eta_{A} A_i+ \sum_{q \in\{1,2,3,4\}} \eta_{X}^{(q)} X_{ijk, q} +\sum_{q \in\{1,2,3,4\}} \eta_{Z}^{(q)} Z_{ij, q} A_i \sum_{q \in\{1,2,3\}} \eta_{A Z }^{(q)} Z_{ij, q}
\end{align}
The model fitting results are provided in Table S3 in Web Appendix F. We carried out the following four analyses: CC-GEE based on participants with observed RDT test results at endline, IPW-GEE using Model (\ref{app:PS1}) for the PS, MIPW-GEE using Models (\ref{app:PS2_1}) and (\ref{app:PS2_2}) for the subcluster- and individual-level PS, and MMR-GEE with two sets of PS models $\mathcal{P}_1 = \{\lambda_{ij}^{k}(\boldsymbol{\gamma}^{k}), k = 1, 2\}$ and $\mathcal{P}_2 = \{\phi_{ijk}^{\ell}(\boldsymbol{\eta}^{\ell}), \ell = 1, 2\}$. $\mathcal{P}_1$ contained Model (\ref{app:PS2_1}) and another model that included ($A$, $A Z_1$, $AZ_3$); $\mathcal{P}_2$ contained Model (\ref{app:PS2_2}) and another model that included $(A$, $X_2$, $(X_2)^2$, $X_3$, $X_5$, $Z_4$, $A Z_4$). The parameters in the PS models for MIPW-GEE and MMR-GEE were estimated with the EM algorithm proposed in Section \ref{ss:2.6}.

CC-GEE yielded marginal effect estimate ($\widehat{OR}$ = 0.76, 95\% CI: 0.46-1.23) that was similar to the original finding. Approaches that incorporate potentially informative missing outcomes led to effect estimate slightly closer to the null (IPW-GEE $\widehat{OR}$ = 0.81, 95\% CI, 0.49-1.33; MIPW-GEE $\widehat{OR}$ = 0.84, 95\% CI, 0.51-1.38; MMR-GEE $\widehat{OR}$ = 0.82, 95\% CI, 0.50-1.34). Nevertheless, the conclusion remained the same as confidence intervals from all approaches included the null. The IPW-GEE estimator without explicitly modeling the subcluster-level missingness yielded effect estimates similar to our proposed MIPW-GEE and MMR-GEE estimators. Such similarity suggested that subclusters may be missing completely at random. Indeed, even though 22.3 \% of households were missing, the estimated probability for the subcluster-level missingness were all close to 1 (i.e., mean of $P(C_{ij}=1|A_i, \boldsymbol{Z}_i; \hat{\boldsymbol{\gamma}}^{EM})$ = 0.99 with range 0.96-1.00). While in this particular example, all approaches led to the same conclusions, the availability of proposed methods permit the assessment of the impact of potentially informative missingness on effect estimates under a range of assumptions about the outcome missingness mechanisms at multiple levels \citep{little2012prevention}.

\section{Discussion}
\label{s:discussion}
Drawing upon the empirical likelihood theory, this paper proposed a new estimation procedure and inference method for estimating the marginal treatment effect in CRTs with multi-level missing outcomes that guards against the partial misspecification of the PS models. To handle informative missingness at both the cluster and the individual level, we derived the multi-level inverse probability weights and applied the EM algorithm to correct potential misclassification of the cluster-level missingness indicators. The proposed MMR-GEE estimator allows analysts to specify multiple sets of PS models and leads to consistent treatment effect estimates provided that one model in $\mathcal{P}_1$ and one model in $\mathcal{P}_2$ are correctly specified and the parameters in the PS models are consistently estimated.

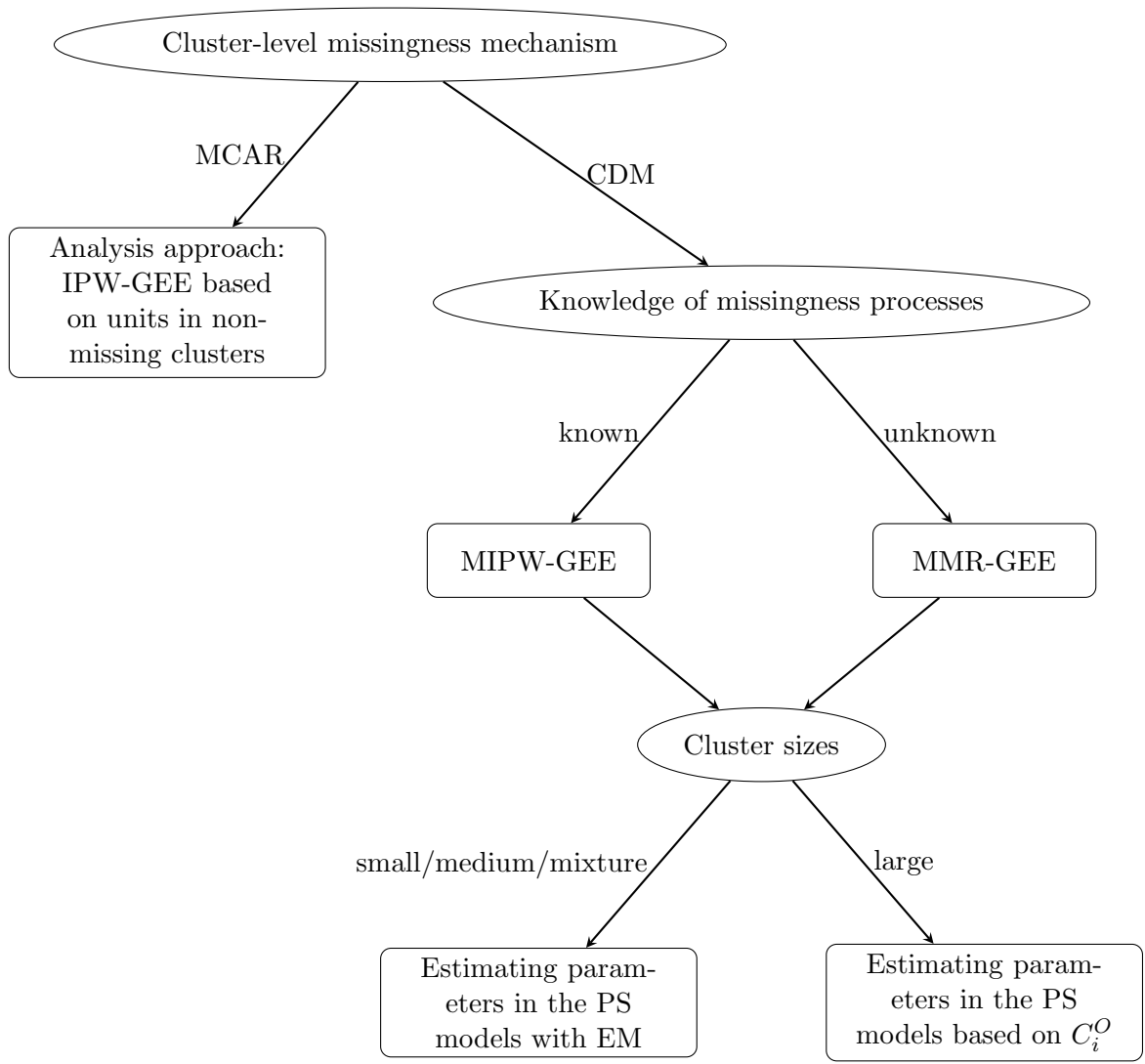
\begin{figure}
    \centering
    \tikzstyle{startstop} = [ellipse, minimum width=3cm, minimum height=1cm,text centered, draw=black]
    \tikzstyle{decision} = [rectangle, rounded corners, minimum width=3cm, minimum height=1cm, text centered, draw=black]
    \tikzstyle{arrow} = [thick,->,>=stealth]
    \begin{tikzpicture}[%
    >=stealth, 
    node distance=2cm,
    on grid,
    auto]
    \node (start) [startstop] {Cluster-level missingness mechanism};
    \node (IPW-GEE) [decision, below of=start, yshift=-1.5cm, xshift =-3cm, text width = 4cm] {Analysis approach: IPW-GEE based on units in non-missing clusters};
    \node (mis-proc) [startstop, below of=start, yshift=-1.5cm, xshift =5cm] {Knowledge of missingness processes};
    \draw [arrow] (start) -- node[anchor=east] {MCAR} (IPW-GEE);
    \draw [arrow] (start) -- node[anchor=west] {CDM} (mis-proc);
    \node (MIPW-GEE) [decision, below of=mis-proc, yshift=-1.5cm, xshift =-3cm] {MIPW-GEE};
    \node (MMR-GEE) [decision, below of=mis-proc, yshift=-1.5cm, xshift =3cm] {MMR-GEE};
    \draw [arrow] (mis-proc) -- node[anchor=east] {known} (MIPW-GEE);
    \draw [arrow] (mis-proc) -- node[anchor=west] {unknown} (MMR-GEE);
    \node (size) [startstop, below of=mis-proc, yshift=-4cm] {Cluster sizes};
    \draw [arrow] (MIPW-GEE) -- (size);
    \draw [arrow] (MMR-GEE) -- (size);
    \node (EM) [decision, below of=size, yshift=-1.5cm, xshift =-3cm, text width = 4cm] {Estimating parameters in the PS models with EM};
    \node (no-EM) [decision, below of=size, yshift=-1.5cm, xshift =3cm, text width = 4cm] {Estimating parameters in the PS models based on $C_i^O$ };
    \draw [arrow] (size) -- node[anchor=east] {small/medium/mixture} (EM);
    \draw [arrow] (size) -- node[anchor=west] {large} (no-EM);
    \end{tikzpicture}
    \caption{Modeling and estimation strategy under various missingness mechanisms and scenarios.}
    \label{fig:2}
\end{figure}

We investigated several analysis strategies in the presence of missing outcome data at multiple levels and created a flowchart to help with making choices among these approaches (see Figure \ref{fig:2}). First, our proposed approach is targeted towards the multi-level CDM setting. If one believes that clusters are missing completely at random, it may be suffice to apply the IPW-GEE method to incorporate individual-level informative missing outcome. 
Second, although MMR-GEE provides the flexibility to specify multiple sets of PS models, analysts can apply the MIPW-GEE method if they have substantial knowledge about the true multi-level missingness processes. Finally, the goal of EM algorithm is to address the challenge in estimating parameters of the PS models due to misclassification in $C_i$, which is more likely to happen for small cluster sizes. When cluster sizes are large, the probability of all individual outcomes within a cluster being missing becomes very small and this type of misclassification in $C_i$ is nearly impossible to happen, so applying the EM algorithm is unnecessary. Nonetheless we recommend using the EM algorithm when clusters contain mixture of different sizes and there is uncertainty regarding the likelihood of this misclassification. In the absence of misclassfication in $C_i$, the EM algorithm would converge immediately so the added computational burden is minimal. 

\section*{Data Availability Statement}
The data that support the findings in this paper are openly available at \url{ https://doi.org/10.7910/DVN/IIDE2B}.

\section*{Acknowledgements}
Research in this article was in part supported by the National Institute of Allergy and Infectious Diseases of the National Institutes of Health (NIH) R01 AI136947. The content is
solely the responsibility of the authors and does not necessarily represent the official views of
the National Institutes of Health.

\section*{Supplementary Materials}

Web Appendices, Tables, and Figures referenced in Sections \ref{ss:2.4}, \ref{ss:2.5}, \ref{ss:2.6}, \ref{s:simulation}, and \ref{s:applications} are provided in Supplementary Materials.

\label{lastpage}


%

\bibliographystyle{plainnat}
\bibliography{multilevel.bib}

\begin{thebibliography}{27}
\providecommand{\natexlab}[1]{#1}
\providecommand{\url}[1]{\texttt{#1}}
\expandafter\ifx\csname urlstyle\endcsname\relax
  \providecommand{\doi}[1]{doi: #1}\else
  \providecommand{\doi}{doi: \begingroup \urlstyle{rm}\Url}\fi

\bibitem[Chen et~al.(2020)Chen, Tchetgen~Tchetgen, and
  Wang]{chen2020stochastic}
Tom Chen, Eric~J Tchetgen~Tchetgen, and Rui Wang.
\newblock A stochastic second-order generalized estimating equations approach
  for estimating association parameters.
\newblock \emph{Journal of Computational and Graphical Statistics}, 29\penalty0
  (3):\penalty0 547--561, 2020.

\bibitem[Davison and Hinkley(1997)]{davison1997bootstrap}
Anthony~Christopher Davison and David~Victor Hinkley.
\newblock \emph{Bootstrap methods and their application}.
\newblock Number~1. Cambridge university press, 1997.

\bibitem[Dempster et~al.(1977)Dempster, Laird, and Rubin]{dempster1977maximum}
Arthur~P Dempster, Nan~M Laird, and Donald~B Rubin.
\newblock Maximum likelihood from incomplete data via the em algorithm.
\newblock \emph{Journal of the Royal Statistical Society: Series B},
  39\penalty0 (1):\penalty0 1--22, 1977.

\bibitem[Diaz-Ordaz et~al.(2014)Diaz-Ordaz, Kenward, Cohen, Coleman, and
  Eldridge]{diaz2014missing}
Karla Diaz-Ordaz, Michael~G Kenward, Abie Cohen, Claire~L Coleman, and Sandra
  Eldridge.
\newblock Are missing data adequately handled in cluster randomised trials? a
  systematic review and guidelines.
\newblock \emph{Clinical Trials}, 11\penalty0 (5):\penalty0 590--600, 2014.

\bibitem[DiazOrdaz et~al.(2016)DiazOrdaz, Kenward, Gomes, and
  Grieve]{diazordaz2016multiple}
K~DiazOrdaz, MG~Kenward, M~Gomes, and R~Grieve.
\newblock Multiple imputation methods for bivariate outcomes in cluster
  randomised trials.
\newblock \emph{Statistics in medicine}, 35\penalty0 (20):\penalty0 3482--3496,
  2016.

\bibitem[Field and Welsh(2007)]{field2007bootstrapping}
Christopher~A Field and Alan~H Welsh.
\newblock Bootstrapping clustered data.
\newblock \emph{Journal of the Royal Statistical Society: Series B},
  69\penalty0 (3):\penalty0 369--390, 2007.

\bibitem[Fiero et~al.(2016)Fiero, Huang, Oren, and Bell]{fiero2016statistical}
Mallorie~H Fiero, Shuang Huang, Eyal Oren, and Melanie~L Bell.
\newblock Statistical analysis and handling of missing data in cluster
  randomized trials: a systematic review.
\newblock \emph{Trials}, 17\penalty0 (1):\penalty0 72, 2016.

\bibitem[Giraudeau and Ravaud(2009)]{giraudeau2009preventing}
Bruno Giraudeau and Philippe Ravaud.
\newblock Preventing bias in cluster randomised trials.
\newblock \emph{PLoS Medicine}, 6\penalty0 (5):\penalty0 e1000065, 2009.

\bibitem[Han(2014)]{han2014multiply}
Peisong Han.
\newblock Multiply robust estimation in regression analysis with missing data.
\newblock \emph{Journal of the American Statistical Association}, 109\penalty0
  (507):\penalty0 1159--1173, 2014.

\bibitem[Han and Wang(2013)]{han2013estimation}
Peisong Han and Lu~Wang.
\newblock Estimation with missing data: beyond double robustness.
\newblock \emph{Biometrika}, 100\penalty0 (2):\penalty0 417--430, 2013.

\bibitem[Hayes and Moulton(2017)]{hayes2017cluster}
Richard~J Hayes and Lawrence~H Moulton.
\newblock \emph{Cluster randomised trials}.
\newblock Chapman and Hall/CRC, 2017.

\bibitem[Hossain et~al.(2017{\natexlab{a}})Hossain, Diaz-Ordaz, and
  Bartlett]{hossain2017missinga}
Anower Hossain, Karla Diaz-Ordaz, and Jonathan~W Bartlett.
\newblock Missing binary outcomes under covariate-dependent missingness in
  cluster randomised trials.
\newblock \emph{Statistics in Medicine}, 36\penalty0 (19):\penalty0 3092--3109,
  2017{\natexlab{a}}.

\bibitem[Hossain et~al.(2017{\natexlab{b}})Hossain, Diaz-Ordaz, and
  Bartlett]{hossain2017missingb}
Anower Hossain, Karla Diaz-Ordaz, and Jonathan~W Bartlett.
\newblock Missing continuous outcomes under covariate dependent missingness in
  cluster randomised trials.
\newblock \emph{Statistical Methods in Medical Research}, 26\penalty0
  (3):\penalty0 1543--1562, 2017{\natexlab{b}}.

\bibitem[Hubbard et~al.(2010)Hubbard, Ahern, Fleischer, Van~der Laan,
  Satariano, Jewell, Bruckner, and Satariano]{hubbard2010gee}
Alan~E Hubbard, Jennifer Ahern, Nancy~L Fleischer, Mark Van~der Laan, Sheri~A
  Satariano, Nicholas Jewell, Tim Bruckner, and William~A Satariano.
\newblock To gee or not to gee: comparing population average and mixed models
  for estimating the associations between neighborhood risk factors and health.
\newblock \emph{Epidemiology}, 21:\penalty0 467--474, 2010.

\bibitem[Liang and Zeger(1986)]{liang1986longitudinal}
Kung-Yee Liang and Scott~L Zeger.
\newblock Longitudinal data analysis using generalized linear models.
\newblock \emph{Biometrika}, 73\penalty0 (1):\penalty0 13--22, 1986.

\bibitem[Little et~al.(2012)Little, D'Agostino, Cohen, Dickersin, Emerson,
  Farrar, Frangakis, Hogan, Molenberghs, Murphy, et~al.]{little2012prevention}
Roderick~J Little, Ralph D'Agostino, Michael~L Cohen, Kay Dickersin, Scott~S
  Emerson, John~T Farrar, Constantine Frangakis, Joseph~W Hogan, Geert
  Molenberghs, Susan~A Murphy, et~al.
\newblock The prevention and treatment of missing data in clinical trials.
\newblock \emph{New England Journal of Medicine}, 367\penalty0 (14):\penalty0
  1355--1360, 2012.

\bibitem[Mitani et~al.(2022)Mitani, Kaye, and Nelson]{mitani2022accounting}
Aya~A Mitani, Elizabeth~K Kaye, and Kerrie~P Nelson.
\newblock Accounting for drop-out using inverse probability censoring weights
  in longitudinal clustered data with informative cluster size.
\newblock \emph{The Annals of Applied Statistics}, 16\penalty0 (1):\penalty0
  596--611, 2022.

\bibitem[Nash(2019)]{optimr}
John~C Nash.
\newblock \emph{{optimr}: A Replacement and Extension of the 'optim' Function},
  2019.
\newblock URL \url{https://CRAN.R-project.org/package=optimr}.

\bibitem[Owen(2001)]{owen2001empirical}
Art~B Owen.
\newblock \emph{Empirical likelihood}.
\newblock Chapman and Hall/CRC, 2001.

\bibitem[Prague et~al.(2016)Prague, Wang, Stephens, Tchetgen~Tchetgen, and
  DeGruttola]{prague2016accounting}
Melanie Prague, Rui Wang, Alisa Stephens, Eric Tchetgen~Tchetgen, and Victor
  DeGruttola.
\newblock Accounting for interactions and complex inter-subject dependency in
  estimating treatment effect in cluster-randomized trials with missing
  outcomes.
\newblock \emph{Biometrics}, 72\penalty0 (4):\penalty0 1066--1077, 2016.

\bibitem[Qin et~al.(2009)Qin, Zhang, and Leung]{qin2009empirical}
Jing Qin, Biao Zhang, and Denis~HY Leung.
\newblock Empirical likelihood in missing data problems.
\newblock \emph{Journal of the American Statistical Association}, 104\penalty0
  (488):\penalty0 1492--1503, 2009.

\bibitem[{R Core Team}(2021)]{R}
{R Core Team}.
\newblock \emph{R: A Language and Environment for Statistical Computing}.
\newblock R Foundation for Statistical Computing, Vienna, Austria, 2021.
\newblock URL \url{https://www.R-project.org/}.

\bibitem[Ratovoson et~al.(2022)Ratovoson, Garchitorena, Kassie, Ravelonarivo,
  Andrianaranjaka, Razanatsiorimalala, Razafimandimby, Rakotomanana, Ohlstein,
  Mangahasimbola, et~al.]{ratovoson2022proactive}
Rila Ratovoson, Andres Garchitorena, Daouda Kassie, Jemima~A Ravelonarivo,
  Voahangy Andrianaranjaka, Seheno Razanatsiorimalala, Avotra Razafimandimby,
  Fanjasoa Rakotomanana, Laurie Ohlstein, Reziky Mangahasimbola, et~al.
\newblock Proactive community case management decreased malaria prevalence in
  rural madagascar: results from a cluster randomized trial.
\newblock \emph{BMC Medicine}, 20\penalty0 (1):\penalty0 1--15, 2022.

\bibitem[Roberts and Fan(2004)]{roberts2004bootstrapping}
J~Kyle Roberts and Xitao Fan.
\newblock Bootstrapping within the multilevel/hierarchical linear modeling
  framework: A primer for use with sas and splus.
\newblock \emph{Multiple Linear Regression Viewpoints}, 30\penalty0
  (1):\penalty0 23--34, 2004.

\bibitem[Robins et~al.(1995)Robins, Rotnitzky, and Zhao]{robins1995analysis}
James~M Robins, Andrea Rotnitzky, and Lue~Ping Zhao.
\newblock Analysis of semiparametric regression models for repeated outcomes in
  the presence of missing data.
\newblock \emph{Journal of the American Statistical Association}, 90\penalty0
  (429):\penalty0 106--121, 1995.

\bibitem[Schafer and Yucel(2002)]{schafer2002computational}
Joseph~L Schafer and Recai~M Yucel.
\newblock Computational strategies for multivariate linear mixed-effects models
  with missing values.
\newblock \emph{Journal of Computational and Graphical Statistics}, 11\penalty0
  (2):\penalty0 437--457, 2002.

\bibitem[Zlotkin et~al.(2013)Zlotkin, Newton, Aimone, Azindow, Amenga-Etego,
  Tchum, Mahama, Thorpe, and Owusu-Agyei]{zlotkin2013effect}
Stanley Zlotkin, Samuel Newton, Ashley~M Aimone, Irene Azindow, Seeba
  Amenga-Etego, Kofi Tchum, Emmanuel Mahama, Kevin~E Thorpe, and Seth
  Owusu-Agyei.
\newblock Effect of iron fortification on malaria incidence in infants and
  young children in ghana: a randomized trial.
\newblock \emph{Journal of the American Medical Association}, 310\penalty0
  (9):\penalty0 938--947, 2013.

\end{thebibliography}


\begin{thebibliography}{4}
\providecommand{\natexlab}[1]{#1}
\providecommand{\url}[1]{\texttt{#1}}
\expandafter\ifx\csname urlstyle\endcsname\relax
  \providecommand{\doi}[1]{doi: #1}\else
  \providecommand{\doi}{doi: \begingroup \urlstyle{rm}\Url}\fi

\bibitem[Donner and Klar(2000)]{donner2000design}
Allan Donner and Neil Klar.
\newblock \emph{Design and analysis of cluster randomization trials in health
  research}.
\newblock London: Arnold, 2000.

\bibitem[Han(2014)]{han2014multiply}
Peisong Han.
\newblock Multiply robust estimation in regression analysis with missing data.
\newblock \emph{Journal of the American Statistical Association}, 109\penalty0
  (507):\penalty0 1159--1173, 2014.

\bibitem[Ratovoson et~al.(2022)Ratovoson, Garchitorena, Kassie, Ravelonarivo,
  Andrianaranjaka, Razanatsiorimalala, Razafimandimby, Rakotomanana, Ohlstein,
  Mangahasimbola, et~al.]{ratovoson2022proactive}
Rila Ratovoson, Andres Garchitorena, Daouda Kassie, Jemima~A Ravelonarivo,
  Voahangy Andrianaranjaka, Seheno Razanatsiorimalala, Avotra Razafimandimby,
  Fanjasoa Rakotomanana, Laurie Ohlstein, Reziky Mangahasimbola, et~al.
\newblock Proactive community case management decreased malaria prevalence in
  rural madagascar: results from a cluster randomized trial.
\newblock \emph{BMC Medicine}, 20\penalty0 (1):\penalty0 1--15, 2022.

\bibitem[Tsiatis(2006)]{tsiatis2006semiparametric}
Anastasios~A Tsiatis.
\newblock Semiparametric theory and missing data.
\newblock 2006.

\end{thebibliography}

\end{document}


\maketitle

\renewcommand{\theequation}{a.\arabic{equation}}
\setcounter{equation}{0}
\section{Solution to the Constrained Optimization Problem}\label{appdx:A}

Recall that the multiply robust weights for individuals with observed outcomes are:
\begin{equation*}
    \hat{w}_{ij}^{MR} = \underset{w_{ij}}{\text{argmax}}\prod_{i=1}^{s}\prod_{j=1}^{m_i} w_{ij}
\end{equation*}
subject to the following constraints:
\begin{equation}\label{eq:A1}
    w_{ij} \geq 0 \hspace{0.25cm} (i = 1,...,s, j = 1,...,m_i),
\end{equation}
\begin{equation}\label{eq:A2}
    \sum_{i=1}^{s}\sum_{j=1}^{m_i}w_{ij} = 1,
\end{equation}
\begin{equation}\label{eq:A3}
    \sum_{i=1}^{s}\sum_{j=1}^{m_i} w_{ij}\phi_{ij}^{k}(\boldsymbol{\hat{\eta}}^k)\lambda_{i}^{\ell}(\boldsymbol{\hat{\gamma}}^\ell)  = \hat{\chi}^{k,\ell} \hspace{0.25cm} (k = 1,...,K, \ell = 1,...,L),
\end{equation}
where $\hat{\chi}^{k,l} = \left(\sum_{i=1}^{M} n_i\right)^{-1}\sum_{i=1}^{M}\sum_{j=1}^{n_i} \phi_{ij}^{k}(\hat{\boldsymbol{\eta}}^k)\lambda_{i}^{\ell}(\hat{\boldsymbol{\gamma}}^\ell)$. Given constraint (\ref{eq:A2}), (\ref{eq:A3}) can be written as
\begin{align}\label{eq:A4}
    0 
    &= \sum_{i=1}^{s}\sum_{j=1}^{m_i} w_{ij}\phi_{ij}^{k}(\boldsymbol{\hat{\eta}}^k)\lambda_{i}^{\ell}(\boldsymbol{\hat{\gamma}}^\ell) - \hat{\chi}^{k,l} \nonumber \\
    &= \sum_{i=1}^{s}\sum_{j=1}^{m_i} w_{ij}\phi_{ij}^{k}(\boldsymbol{\hat{\eta}}^k)\lambda_{i}^{\ell}(\boldsymbol{\hat{\gamma}}^\ell) -  \hat{\chi}^{k,l} \underbrace{\sum_{i=1}^{s}\sum_{j=1}^{m_i} w_{ij}}_{=1} \nonumber \\
    &= \sum_{i=1}^{s}\sum_{j=1}^{m_i} w_{ij}[\phi_{ij}^{k}(\boldsymbol{\hat{\eta}}^k)\lambda_{i}^{\ell}(\boldsymbol{\hat{\gamma}}^\ell) - \hat{\chi}^{k,l}] \hspace{0.25cm} (k = 1,...,K,\ell = 1,...,L).
\end{align}
Recall that $\hat{g}_{ij}(\hat{\boldsymbol{\eta}},\hat{\boldsymbol{\gamma}}) = \{\phi_{ij}^{1}(\boldsymbol{\hat{\eta}}^1)\lambda_{i}^{1}(\boldsymbol{\hat{\gamma}}^1) - \hat{\chi}^{1,1},...,\phi_{ij}^{K}(\boldsymbol{\hat{\eta}}^K)\lambda_{i}^{L}(\boldsymbol{\hat{\gamma}}^L) - \hat{\chi}^{K,L}\}^\mathrm{T}$. Thus, (\ref{eq:A4}) in vector form is equivalent to:
\begin{align}\label{eq:A5}
    \boldsymbol{0} &= \sum_{i=1}^{s}\sum_{j=1}^{m_i} w_{ij}\left\{\phi_{ij}^{1}(\boldsymbol{\hat{\eta}}^1)\lambda_{i}^{1}(\boldsymbol{\hat{\gamma}}^1) - \hat{\chi}^{1,1},...,\phi_{ij}^{K}(\boldsymbol{\hat{\eta}}^K)\lambda_{i}^{L}(\boldsymbol{\hat{\gamma}}^L) - \hat{\chi}^{K,L}\right\}^\mathrm{T} \nonumber \\
    &=  \sum_{i=1}^{s}\sum_{j=1}^{m_i} w_{ij}\hat{g}_{ij}(\hat{\boldsymbol{\eta}},\hat{\boldsymbol{\gamma}}).
\end{align}
Note that maximizing $\prod_{i=1}^{s}\prod_{j=1}^{m_i} w_{ij}$ is the same as maximizing $\sum_{i=1}^{s}\sum_{j=1}^{m_i} \log w_{ij}$ subject to constraints (\ref{eq:A1})-(\ref{eq:A3}). We use the Lagrange multiplier method to solve this optimization problem. Let
\begin{align} \label{eq:A6}
    Q &= \sum_{i=1}^{s}\sum_{j=1}^{m_i} \log w_{ij} + S\left(\sum_{i=1}^{s}\sum_{j=1}^{m_i} w_{ij} - 1\right) 
    - \left(\sum_{i=1}^{s} m_i\right)\boldsymbol{\rho}^\mathrm{T} \sum_{i=1}^{s}\sum_{j=1}^{m_i} w_{ij}\hat{g}_{ij}(\hat{\boldsymbol{\eta}},\hat{\boldsymbol{\gamma}}),
\end{align}
where $S$ is the Lagrange multiplier for constraint (\ref{eq:A2}) and $\boldsymbol{\rho}^\mathrm{T} = \left(\rho_{11},\rho_{12},...,\rho_{KL} \right)$ is the Lagrange multiplier for (\ref{eq:A5}). We also introduce a constant term, $\sum_{i=1}^{s} m_i$, for convenience. Take the partial derivative of $Q$ w.r.t. $w_{ij}, S, \boldsymbol{\rho}^\mathrm{T}$ and set them to 0, we have
\begin{align}
    \frac{\partial{Q}}{\partial{w_{ij}}} 
    &= w_{ij}^{-1} + S - \left(\sum_{i=1}^{s} m_i\right)\boldsymbol{\rho}^\mathrm{T} \hat{g}_{ij}(\hat{\boldsymbol{\eta}},\hat{\boldsymbol{\gamma}})
    \overset{\mathrm{set}}{=} 0 \hspace{0.2cm}, \label{eq:A7}\\
    \frac{\partial{Q}}{\partial{S}}
    &= \sum_{i=1}^{s}\sum_{i=1}^{m_i}w_{ij} - 1 \overset{\mathrm{set}}{=} 0,  \label{eq:A8}\\
    \frac{\partial{Q}}{\partial{\boldsymbol{\rho}^\mathrm{T}}}
    &= -\left(\sum_{i=1}^{s} m_i\right) \sum_{i=1}^{s}\sum_{j=1}^{m_i} w_{ij}\hat{g}_{ij}(\hat{\boldsymbol{\eta}},\hat{\boldsymbol{\gamma}}) \overset{\mathrm{set}}{=} \boldsymbol{0}. \label{eq:A9}
\end{align}
Multiply (\ref{eq:A7}) by $w_{ij}$, we have 
\begin{align*}
    0 
    &= w_{ij}\frac{\partial{Q}}{\partial{w_{ij}}}  \\
    &= \sum_{i=1}^{s}\sum_{j=1}^{m_i}  w_{ij}\frac{\partial{Q}}{\partial{w_{ij}}}   \\
    &= \sum_{i=1}^{s}\sum_{j=1}^{m_i} \left[1 + S w_{ij} - \left(\sum_{i=1}^{s} m_i\right)\boldsymbol{\rho}^\mathrm{T} \hat{g}_{ij}(\hat{\boldsymbol{\eta}},\hat{\boldsymbol{\gamma}})w_{ij} \right] \\
    &= \sum_{i=1}^{s} m_i + S \underbrace{\sum_{i=1}^{s}\sum_{j=1}^{m_i} w_{ij}}_{=1} - \left(\sum_{i=1}^{s} m_i\right)\boldsymbol{\rho}^\mathrm{T}  \underbrace{\sum_{i=1}^{s}\sum_{j=1}^{m_i} w_{ij}\hat{g}_{ij}(\hat{\boldsymbol{\eta}},\hat{\boldsymbol{\gamma}})}_{=0} \\
   &= \sum_{i=1}^{s} m_i + S \\
   &\Longrightarrow \hat{S} = -\sum_{i=1}^{s} m_i
\end{align*}
Plugging $\hat{S}$ back to (\ref{eq:A7}), we have
\begin{align}\label{eq:A10}
    \hat{w}_{ij}(\boldsymbol{\rho})
    &= \left[\sum_{i=1}^{s} m_i + \left(\sum_{i=1}^{s} m_i\right)\boldsymbol{\rho}^\mathrm{T} \hat{g}_{ij}(\hat{\boldsymbol{\eta}},\hat{\boldsymbol{\gamma}})\right]^{-1} \nonumber \\
    &= \frac{1}{\sum_{i=1}^{s} m_i} \frac{1}{1 + \boldsymbol{\rho}^\mathrm{T} \hat{g}_{ij}(\hat{\boldsymbol{\eta}},\hat{\boldsymbol{\gamma}})}.
\end{align}
$\boldsymbol{\hat{\rho}}^\mathrm{T}$ can be obtained by plugging $\hat{w}_{ij}(\boldsymbol{\rho})$ back to (\ref{eq:A9}) and solving
\begin{align}\label{eq:A11}
    -\left(\sum_{i=1}^{s} m_i\right)\sum_{i=1}^{s} \sum_{j=1}^{m_i} \hat{w}_{ij}(\boldsymbol{\rho})\hat{g}_{ij}(\hat{\boldsymbol{\eta}},\hat{\boldsymbol{\gamma}}) &=  - \sum_{i=1}^{s}\sum_{j=1}^{m_i} \frac{\hat{g}_{ij}(\hat{\boldsymbol{\eta}},\hat{\boldsymbol{\gamma}})}{1 + \boldsymbol{\rho}^\mathrm{T} \hat{g}_{ij}(\hat{\boldsymbol{\eta}},\hat{\boldsymbol{\gamma}})} = \boldsymbol{0} \nonumber \\
    &\Longrightarrow  \sum_{i=1}^{s}\sum_{j=1}^{m_i} \frac{\hat{g}_{ij}(\hat{\boldsymbol{\eta}},\hat{\boldsymbol{\gamma}})}{1 + \boldsymbol{\rho}^\mathrm{T} \hat{g}_{ij}(\hat{\boldsymbol{\eta}},\hat{\boldsymbol{\gamma}})} = \boldsymbol{0}
\end{align}
Hence, the multiply robust weight is
\begin{align}\label{eq:A12}
    \hat{w}_{ij}^{MR}
    &= \frac{1}{\sum_{i=1}^{s} m_i} \frac{1}{1 + \hat{\boldsymbol{\rho}}^\mathrm{T} \hat{g}_{ij}(\hat{\boldsymbol{\eta}},\hat{\boldsymbol{\gamma}})}.
\end{align}

\renewcommand{\theequation}{b.\arabic{equation}}
\setcounter{equation}{0}
\section{Multiple Robustness of $\hat{\boldsymbol{\beta}}_{MR}$} \label{appdx:B}
This section proves the multiple robustness of our proposed MMR-GEE estimator. Here, we consider the setting with $n_i = n$ and $M \to \infty$. Furthermore, we use subscript asterisk to denote probability limits, $(\boldsymbol{\eta}_0, \boldsymbol{\gamma}_0)$ to denote the true parameters of the PS models, and $\boldsymbol{\beta}_0$ to denote the true parameters of the marginal model.

\begin{theorem} \label{thm:1}
When $\mathcal{P}_1$ contains a correct model for $\phi_{ij}\left(\boldsymbol{\eta}\right)$ and $\mathcal{P}_2$ contains a correct model for  $\lambda_i\left(\gamma\right)$, as $M \to \infty$, 
$\hat{\boldsymbol{\beta}}_{MR} \stackrel{P}{\to} \boldsymbol{\beta}_0$. 
\end{theorem}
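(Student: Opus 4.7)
The plan is to view $\hat{\boldsymbol{\beta}}_{MR}$ as the root of a weighted GEE
$U_M(\boldsymbol{\beta}) = \sum_{i=1}^{s}\sum_{j=1}^{m_i} \hat{w}_{ij}^{MR}\psi_{ij}(\boldsymbol{\beta}) = \boldsymbol{0}$,
where $\psi_{ij}$ is the marginal-model estimating function evaluated over observed outcomes only. Because cluster-level $M \to \infty$ with bounded $n_i = n$ produces an i.i.d.-across-cluster structure, a standard M-estimator argument reduces consistency to showing $U_M(\boldsymbol{\beta}_0) \stackrel{P}{\to} \boldsymbol{0}$; identifiability of $\boldsymbol{\beta}_0$ as the unique population root is inherited from the correctly specified marginal mean model. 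The body of the proof then rewrites $U_M(\boldsymbol{\beta}_0)$ via the closed form (\ref{eq:A12}) and exhibits cancellation of the missingness bias whenever each class $\mathcal{P}_1, \mathcal{P}_2$ contains a correctly specified model.

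The first substantive step is to pin down the probability limit of $\hat{\boldsymbol{\rho}}$. Under the hypothesis there exist indices $k^*, \ell^*$ with $\phi_{ij}^{k^*}(\boldsymbol{\eta}_0^{k^*})\lambda_i^{\ell^*}(\boldsymbol{\gamma}_0^{\ell^*}) = \pi_{ij}$, the true joint response probability, and standard quasi-likelihood theory gives $\hat{\boldsymbol{\eta}}^k \stackrel{P}{\to} \boldsymbol{\eta}_*^k$, $\hat{\boldsymbol{\gamma}}^\ell \stackrel{P}{\to} \boldsymbol{\gamma}_*^\ell$ for each candidate. Because $1/\pi_{ij}$ is a feasible solution to the limiting version of (\ref{eq:A2})--(\ref{eq:A3}), the empirical-likelihood problem admits an interior optimum in the limit. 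A cluster-level uniform LLN combined with the implicit-function theorem applied to (\ref{eq:A11}) then yields $\hat{\boldsymbol{\rho}} \stackrel{P}{\to} \boldsymbol{\rho}_*$, where $\boldsymbol{\rho}_*$ is the unique root of the population analogue of (\ref{eq:A11}).

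Substituting (\ref{eq:A12}) into $U_M(\boldsymbol{\beta}_0)$ and passing to the limit gives a population expectation of the form $\bE[R_{11}\psi_{11}(\boldsymbol{\beta}_0)/(1+\boldsymbol{\rho}_*^\mathrm{T}g_{11,*})]$, where $R_{ij}$ denotes the outcome-response indicator and $g_{11,*}$ is the limit of $\hat{g}_{11}$. Showing that this expectation equals $\bE[\psi_{11}(\boldsymbol{\beta}_0)] = \boldsymbol{0}$ is the crux: the calibration constraints corresponding to the correct indices $(k^*,\ell^*)$ embed the true inverse response probability into the one-parameter family $1/(1+\boldsymbol{\rho}^\mathrm{T}g)$, so the population moment equation for $\boldsymbol{\rho}_*$ forces $R_{11}/(1+\boldsymbol{\rho}_*^\mathrm{T}g_{11,*})$ to act as a valid inverse-probability weight against the full-data distribution, and the desired identity follows from MAR.

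The main obstacle is this reproducing identity: one must verify that calibrating against the finite set of moments $\phi^k\lambda^\ell$ suffices to reconstruct the true IPW adjustment when even a single pair within $\mathcal{P}_1 \times \mathcal{P}_2$ is correctly specified, without using knowledge of $(k^*,\ell^*)$. The argument must therefore be stated at the level of the full vector $\boldsymbol{\rho}_*$ and parallels the multiply robust empirical-likelihood constructions in the independent-sampling literature, adapted here to handle the two-level PS factorization $\pi_{ij} = \phi_{ij}\lambda_i$. A secondary technical issue is the plug-in dependence of $\hat{g}_{ij}$ on $\hat{\boldsymbol{\eta}}$ and $\hat{\boldsymbol{\gamma}}$, which requires stochastic equicontinuity uniformly over the nuisance parameter space and $\sqrt{M}$-consistency of every candidate PS estimator.
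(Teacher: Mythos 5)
Your skeleton matches the paper's: reduce consistency to showing that the weighted estimating function is asymptotically unbiased at $\boldsymbol{\beta}_0$, and do so by proving that the multiply robust weights converge to the true inverse-probability weights $1/\{(\sum_i n_i)\,\phi_{ij}^{1}(\boldsymbol{\eta}_0)\lambda_i^{1}(\boldsymbol{\gamma}_0)\}$ whenever one pair of candidate models is correct. The problem is that the step you yourself flag as ``the main obstacle'' --- the reproducing identity that forces $1/(1+\boldsymbol{\rho}_*^{\mathrm{T}}g_{ij,*})$ to equal $\chi_*^{1,1}/\{\phi_{ij}^{1}\lambda_i^{1}\}$ --- is exactly the content of multiple robustness, and you never supply an argument for it. Asserting that the population moment equation for $\boldsymbol{\rho}_*$ ``forces'' the calibrated weight to act as a valid inverse-probability adjustment, or that the identity ``follows from MAR,'' is circular: the population analogue of (\ref{eq:A11}) is a system of $KL$ equations in $KL$ unknowns, and nothing in your write-up identifies its root as $\boldsymbol{\rho}_* = (1/\chi_*^{1,1}, 0, \ldots, 0)^{\mathrm{T}}$, which is the specific value you need.

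The paper closes this gap with a concrete device you do not reproduce: it introduces a second, rescaled empirical-likelihood problem whose solution $\hat{p}_{ij}$ in (\ref{eq:B4}) satisfies the exact algebraic identity $\hat{w}_{ij}^{MR} = \hat{p}_{ij}\,\chi^{1,1}(\hat{\boldsymbol{\eta}}^{1},\hat{\boldsymbol{\gamma}}^{1})/\{\phi_{ij}^{1}(\hat{\boldsymbol{\eta}}^{1})\lambda_i^{1}(\hat{\boldsymbol{\gamma}}^{1})\}$ via the change of multipliers (\ref{eq:B6}), and then invokes the empirical-likelihood result of \citet{han2014multiply} that $\hat{\boldsymbol{\varepsilon}} \stackrel{P}{\to} \boldsymbol{0}$ when the calibration functions are normalized by a correctly specified pair; combined with $\sum_i m_i/\sum_i n_i \stackrel{P}{\to} \chi_*^{1,1}$ this yields (\ref{eq:B9}). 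Either reproduce that reduction, adapted to the two-level factorization $\phi_{ij}\lambda_i$, or give a direct identification of $\boldsymbol{\rho}_*$; as written, your proposal is a correct plan whose central lemma is missing. A secondary, fixable issue: you write the limiting unbiasedness condition per individual, $\bE[R_{11}\psi_{11}(\boldsymbol{\beta}_0)/(1+\boldsymbol{\rho}_*^{\mathrm{T}}g_{11,*})]$, but the GEE estimating function couples individuals within a cluster through $\boldsymbol{V}_i^{-1}$, so the calculation must be done at the cluster level as in (\ref{eq:B11}), conditioning on $(\boldsymbol{Y}_i, A_i, \boldsymbol{Z}_i, \boldsymbol{X}_i)$ and using $\bE[R_{ij}C_i \mid \boldsymbol{X}_{ij}, \boldsymbol{Z}_i, A_i] = \phi_{ij}^{1}(\boldsymbol{\eta}_0)\lambda_i^{1}(\boldsymbol{\gamma}_0)$ together with randomization of $A_i$.
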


\begin{proof}
Without loss of generality, let $\phi_{ij}^{1}(\boldsymbol{\eta}^{1})$ and $\lambda_{i}^{1}(\boldsymbol{\gamma}^{1})$, the first model in $\mathcal{P}_1$ and $\mathcal{P}_2$, respectively, be the correctly specified models. We first show that:
\begin{align} \label{eq:B1}
   \chi^{1,1}\left(\hat{\boldsymbol{\eta}}^{1}, \hat{\boldsymbol{\gamma}}^{1}\right) \stackrel{P}{\longrightarrow} \mathbb{E}\left[\phi_{ij}^{1}\left(\boldsymbol{\eta}^{1}_{*}\right) \lambda_{i}^{1}\left(\boldsymbol{\gamma}^{1}_{*}\right)\right] \stackrel{\text { def }}{=} \chi^{1,1}_{*}. 
\end{align}
See Web Appendix \ref{appdx:B1} for detailed proof. Next, we establish the connection between the multiply robust weights and $p_{ij}$, the empirical probability of $(Y_{ij}, A_i, \boldsymbol{Z}_i, \boldsymbol{X}_{ij})$ conditional on $R_{ij}C_{i} = 1$ for $i =1,...,s, j = 1,...,m_i$. The estimator for $p_{ij}$ can be obtained by solving another version of the constrained optimization problem as follows:
\begin{align}\label{eq:B2}
    \hat{p}_{ij} = \underset{p_{ij}}{\text{argmax}}\prod_{i=1}^{s}\prod_{j=1}^{m_i} p_{ij},
\end{align}
subject to the following constraints:
\begin{align}
    p_{ij} \geq 0,  \sum_{i=1}^{s}\sum_{j=1}^{m_i}p_{ij} = 1,
    \sum_{i=1}^{s}\sum_{j=1}^{m_i} p_{ij}\frac{\left\{\phi_{ij}^{k}(\boldsymbol{\hat{\eta}}^k)\lambda_{i}^{\ell}(\boldsymbol{\hat{\gamma}}^{\ell})  - \chi^{k,l}(\hat{\boldsymbol{\eta}}^k,
    \hat{\boldsymbol{\gamma}}^{\ell}) \right\}}{\phi_{ij}^{k}(\boldsymbol{\hat{\eta}}^k)\lambda_{i}^{\ell}(\boldsymbol{\hat{\gamma}}^{\ell})} = 0 \hspace{0.2cm} (k = 1,...,K, \ell = 1,...,L).
\end{align}
Here, the first two constraints ensure $p_{ij}$ to be proper empirical probabilities and the third constraint is the empirical version of (\ref{eq:A3}). Applying the same Lagrange multiplier method from \ref{appdx:A}, we have:
\begin{align}\label{eq:B4}
    \hat{p}_{ij} = \frac{1}{\sum_{i=1}^{s}m_i} \frac{1}{1 + \hat{\boldsymbol{\varepsilon}}^{\mathrm{T}}\hat{g}_{ij}(\hat{\boldsymbol{\eta}},\hat{\boldsymbol{\gamma}})/\{\phi_{ij}^{1}(\boldsymbol{\hat{\eta}}^1)\lambda_{i}^{1}(\boldsymbol{\hat{\gamma}}^1)\}},
\end{align}
where $\hat{\boldsymbol{\varepsilon}}^{\mathrm{T}} = (\hat{\varepsilon}_{11},\hat{\varepsilon}_{12},...,\hat{\varepsilon}_{KL})$ is a $(KL) \times 1$ vector that solves the following equation:
\begin{align}\label{eq:B5}
    \sum_{i=1}^{s}\sum_{j=1}^{m_i} \frac{\hat{g}_{ij}(\hat{\boldsymbol{\eta}},\hat{\boldsymbol{\gamma}})/\{\phi_{ij}^{1}(\boldsymbol{\hat{\eta}}^1)\lambda_{i}^{1}(\boldsymbol{\hat{\gamma}}^1)\}}{1 + \boldsymbol{\varepsilon}^{\mathrm{T}}\hat{g}_{ij}(\hat{\boldsymbol{\eta}},\hat{\boldsymbol{\gamma}})/\{\phi_{ij}^{1}(\boldsymbol{\hat{\eta}}^1)\lambda_{i}^{1}(\boldsymbol{\hat{\gamma}}^1)\}} = \boldsymbol{0}.
\end{align}
With simple algebra manipulation, $\hat{\boldsymbol{\rho}}^{\mathrm{T}}$ can be expressed by $\hat{\boldsymbol{\varepsilon}}$ as:
\begin{align}\label{eq:B6}
    \hat{\boldsymbol{\rho}}^{\mathrm{T}} = \left\{\frac{\hat{\varepsilon}_{11}+1}{\chi^{1,1}\left(\hat{\boldsymbol{\eta}}^{1}, \hat{\boldsymbol{\gamma}}^{1}\right)}, \frac{\hat{\varepsilon}_{12}}{\chi^{1,1}\left(\hat{\boldsymbol{\eta}}^{1}, \hat{\boldsymbol{\gamma}}^{1}\right)},...,\frac{\hat{\varepsilon}_{KL}}{\chi^{1,1}\left(\hat{\boldsymbol{\eta}}^{1}, \hat{\boldsymbol{\gamma}}^{1}\right)} \right\},
\end{align}
which allows us to write $\hat{w}_{ij}^{MR}$ as:
\begin{equation}\label{eq:B7}
    \hat{w}_{ij}^{MR} = \hat{p}_{ij} \frac{ \chi^{1,1}\left(\hat{\boldsymbol{\eta}}^{1}, \hat{\boldsymbol{\gamma}}^{1}\right)}{\phi_{ij}^1(\hat{\boldsymbol{\eta}}^1)\lambda_{i}^1(\hat{\boldsymbol{\gamma}}^1)}.
\end{equation}
The detailed proof for the relationship between $\hat{w}_{ij}$ and $\hat{p}_{ij}$ can be found in Web Appendix \ref{appdx:B2}. Lastly, the proportion of observed individuals would converge in probability to $\chi_{*}^{1,1}$ under the assumption that $\phi_{ij}^1(\boldsymbol{\eta}^1)$ and $\lambda_{i}^1(\boldsymbol{\gamma}^1)$ are the correctly specified models. That is,
\begin{align}\label{eq:B8}
    \frac{\sum_{i=1}^{s} m_i}{\sum_{i=1}^{M} n_{i}}\stackrel{P}{\longrightarrow} \chi_{*}^{1,1},
\end{align}
where the proof can be found in Web Appendix \ref{appdx:B3}. 

Using (1) the relationship between $\hat{w}_{ij}^{MR}$ and $\hat{p}_{ij}$ from Equation (\ref{eq:B7}), (2) the convergence of the proportion of observed individuals from Equation (\ref{eq:B8}), and (3) the fact that $\hat{\boldsymbol{\varepsilon}} \stackrel{P}{\longrightarrow} \boldsymbol{0}$ based on the empirical likelihood theory \citep{han2014multiply}, we show that the multiply robust weight is equivalent to (see Web Appendix \ref{appdx:B4} for the derivation):
\begin{equation}\label{eq:B9}
    \hat{w}_{ij}^{MR} = \frac{1+ o_p(1)}{\left(\sum_{i=1}^{M} n_i\right)\phi_{ij}^1(\boldsymbol{\eta}_{0})\lambda_{i}^1(\boldsymbol{\gamma}_{0})} , \hspace{0.5cm} i = 1, \ldots, s, j = 1, \ldots m_i,
\end{equation}
which allows us to establish the relationship that:
\begin{align}\label{eq:B10}
    \hat{\boldsymbol{W}}_{i}^{MR} 
    = \text{diag}\left[
    R_{ij}C_{i}\hat{w}_{ij}^{MR}\right]_{j = 1, \ldots, n_i}  
    = \text{diag}\left[
    \frac{R_{ij} + o_p(1)}{\left(\sum_{i=1}^{M} n_i\right)\phi_{ij}^1(\boldsymbol{\eta}_{0})\lambda_{i}^1(\boldsymbol{\gamma}_{0})} \right]_{j = 1, \ldots, n_i}.   
\end{align}
Equation (\ref{eq:B10}) shows that the multiply robust weights are asymptotically proportional to the multi-level weights of the correctly specified MIPW-GEE estimator. As the number of clusters $M$ goes to infinity, we have:
\begin{align} \label{eq:B11}
    \nonumber &\hspace{0.5cm}\left(\frac{\sum_{i=1}^{M}{n_i}}{M} \right)\sum_{i=1}^{M} \frac{\partial{\boldsymbol{\mu}_i(\boldsymbol{\beta}_0, A_i)}}{\partial{\boldsymbol{\beta}}}\boldsymbol{V}_{i}^{-1} \hat{\boldsymbol{W}}_{i}^{MR}   (\boldsymbol{Y}_i - \boldsymbol{\mu}_i(\boldsymbol{\beta}_0, A_i)) \\
    \nonumber &= \frac{1}{M}\sum_{i=1}^{M} \frac{\partial{\boldsymbol{\mu}_i(\boldsymbol{\beta}_0, A_i)}}{\partial{\boldsymbol{\beta}}}\boldsymbol{V}_{i}^{-1} \text{diag}\left[
    \frac{R_{ij}C_{i} + o_p(1)}{\phi_{ij}^1(\boldsymbol{\eta}_{0})\lambda_{i}^1(\boldsymbol{\gamma}_{0})}\right] (\boldsymbol{Y}_i - \boldsymbol{\mu}_i(\boldsymbol{\beta}_0, A_i)) \\
    &\stackrel{p}{\longrightarrow} E\left\{\frac{\partial{\boldsymbol{\mu}_i(\boldsymbol{\beta}_0, A_i)}}{\partial{\boldsymbol{\beta}}}\boldsymbol{V}_{i}^{-1} \text{diag}\left[
    \frac{R_{ij}C_{i}}{\phi_{ij}^1(\boldsymbol{\eta}_{0})\lambda_{i}^1(\boldsymbol{\gamma}_{0})}\right] (\boldsymbol{Y}_i - \boldsymbol{\mu}_i(\boldsymbol{\beta}_0, A_i))\right\} = 0.
\end{align}
See Web Appendix \ref{appdx:B5} for the proof of (\ref{eq:B11}).
\end{proof}

\subsection{Showing that $\chi^{1,1}\left(\hat{\boldsymbol{\eta}}^{1}, \hat{\boldsymbol{\gamma}}^{1}\right) \stackrel{P}{\rightarrow} \mathbb{E}\left[\phi_{ij}^{1}\left(\boldsymbol{\eta}^{1}_{*}\right) \lambda_{i}^{1}\left(\boldsymbol{\gamma}^{1}_{*}\right)\right] \stackrel{\text { def }}{=} \chi^{1,1}_{*}$}
\label{appdx:B1}

By the continuous mapping theorem, $\hat{\boldsymbol{\eta}}^{1} \stackrel{P}{\rightarrow} \boldsymbol{\eta}^{1}_{*}$ and $\hat{\boldsymbol{\gamma}}^{1} \stackrel{P}{\rightarrow} \boldsymbol{\gamma}^{1}_{*}$ imply that $\phi_{ij}(\hat{\boldsymbol{\eta}}^{1}) \stackrel{P}{\rightarrow} \phi_{ij}(\boldsymbol{\eta}^{1}_{*})$ and $\lambda_{i}(\hat{\boldsymbol{\gamma}}^{1}) \stackrel{P}{\rightarrow} \lambda_{i}(\boldsymbol{\gamma}^{1}_{*})$. Applying the continuous mapping theorem again, we have $\phi_{ij}(\hat{\boldsymbol{\eta}}^{1})\lambda_{i}(\hat{\boldsymbol{\gamma}}^{1}) \stackrel{P}{\rightarrow} \phi_{ij}(\boldsymbol{\eta}^{1}_{*})\lambda_{i}(\boldsymbol{\gamma}^{1}_{*})$.
\begin{align*}
    \chi^{1,1}\left(\hat{\boldsymbol{\eta}}^{1}, \hat{\boldsymbol{\gamma}}^{1}\right)
    &= \left(\sum_{i=1}^{M} n_i\right)^{-1}\sum_{i=1}^{M}\sum_{j=1}^{n_i} \phi_{ij}^{k}(\hat{\boldsymbol{\eta}}^k)\lambda_{i}^{\ell}(\hat{\boldsymbol{\gamma}}^\ell) \\
    &= \frac{1}{M} \sum_{i=1}^{M} \left(\frac{M}{\sum_{i=1}^{M} n_i}\right)\sum_{j=1}^{n_i} \phi_{ij}^{k}(\hat{\boldsymbol{\eta}}^k)\lambda_{i}^{\ell}(\hat{\boldsymbol{\gamma}}^\ell) \\
    &\stackrel{P}{\rightarrow} \lim_{M \to \infty}\frac{1}{M} \sum_{i=1}^{M} \mathbb{E} \left[\left(\frac{M}{\sum_{i=1}^{M} n_i}\right)\sum_{j=1}^{n_i} \phi_{ij}^{k}(\boldsymbol{\eta}_{*}^k)\lambda_{i}^{\ell}(\boldsymbol{\gamma}_{*}^\ell)\right] \\
    &= \lim_{M \to \infty}\frac{\sum_{i=1}^{M}\sum_{j=1}^{n_i} \mathbb{E} \left[ \phi_{ij}^{k}(\boldsymbol{\eta}_{*}^k)\lambda_{i}^{\ell}(\boldsymbol{\gamma}_{*}^\ell)\right]}{\sum_{i=1}^{M} n_i}  \\
    &= \chi^{1,1}_{*}
\end{align*}

\subsection{Showing that $\hat{w}_{ij}^{MR} = \hat{p}_{ij} \chi^{1,1}\left(\hat{\boldsymbol{\eta}}^{1}, \hat{\boldsymbol{\gamma}}^{1}\right)/\{\phi_{ij}^1(\hat{\boldsymbol{\eta}}^1)\lambda_{i}^1(\hat{\boldsymbol{\gamma}}^1)\}$} \label{appdx:B2}
First, we show that solving $\hat{\boldsymbol{\rho}}$ of the multiply robust weight is equivalent to solving $\hat{\boldsymbol{\varepsilon}}$ of the empirical probability. Recall that $\hat{\boldsymbol{\varepsilon}}$ is obtained by solving the equation:
\begin{align*}
        \sum_{i=1}^{s}\sum_{j=1}^{m_i} \frac{\hat{g}_{ij}(\hat{\boldsymbol{\eta}},\hat{\boldsymbol{\gamma}})/\{\phi_{ij}^{1}(\boldsymbol{\hat{\eta}}^1)\lambda_{i}^{1}(\boldsymbol{\hat{\gamma}}^1)\}}{1 + \boldsymbol{\varepsilon}^{\mathrm{T}}\hat{g}_{ij}(\hat{\boldsymbol{\eta}},\hat{\boldsymbol{\gamma}})/\{\phi_{ij}^{1}(\boldsymbol{\hat{\eta}}^1)\lambda_{i}^{1}(\boldsymbol{\hat{\gamma}}^1)\}} = \boldsymbol{0}.
\end{align*}
Rearranging the equation, we have
\begin{align*}
    \boldsymbol{0} 
    &= \sum_{i=1}^{s}\sum_{j=1}^{m_i} \frac{\hat{g}_{ij}(\hat{\boldsymbol{\eta}},\hat{\boldsymbol{\gamma}})/\{\phi_{ij}^{1}(\boldsymbol{\hat{\eta}}^1)\lambda_{i}^{1}(\boldsymbol{\hat{\gamma}}^1)\}}{1 + \hat{\boldsymbol{\varepsilon}}^{\mathrm{T}}\hat{g}_{ij}(\hat{\boldsymbol{\eta}},\hat{\boldsymbol{\gamma}})/\{\phi_{ij}^{1}(\boldsymbol{\hat{\eta}}^1)\lambda_{i}^{1}(\boldsymbol{\hat{\gamma}}^1)\}} \\
    &=  \sum_{i=1}^{s}\sum_{j=1}^{m_i} \frac{\hat{g}_{ij}(\hat{\boldsymbol{\eta}},\hat{\boldsymbol{\gamma}})}{\phi_{ij}^{1}(\boldsymbol{\hat{\eta}}^1)\lambda_{i}^{1}(\boldsymbol{\hat{\gamma}}^1) + \hat{\boldsymbol{\varepsilon}}^\mathrm{T}\hat{g}_{ij}(\hat{\boldsymbol{\eta}},\hat{\boldsymbol{\gamma}})} \\
    &= \frac{\chi^{1,1}\left(\hat{\boldsymbol{\eta}}^{1}, \hat{\boldsymbol{\gamma}}^{1}\right)}{\chi^{1,1}\left(\hat{\boldsymbol{\eta}}^{1}, \hat{\boldsymbol{\gamma}}^{1}\right)}  \sum_{i=1}^{s}\sum_{j=1}^{m_i} \frac{\hat{g}_{ij}(\hat{\boldsymbol{\eta}},\hat{\boldsymbol{\gamma}})}{\phi_{ij}^{1}(\boldsymbol{\hat{\eta}}^1)\lambda_{i}^{1}(\boldsymbol{\hat{\gamma}}^1) + \hat{\boldsymbol{\varepsilon}}^\mathrm{T}\hat{g}_{ij}(\hat{\boldsymbol{\eta}},\hat{\boldsymbol{\gamma}})} \\
    &= \frac{1}{\chi^{1,1}\left(\hat{\boldsymbol{\eta}}^{1}, \hat{\boldsymbol{\gamma}}^{1}\right)}  \sum_{i=1}^{s}\sum_{j=1}^{m_i} \frac{\hat{g}_{ij}(\hat{\boldsymbol{\eta}},\hat{\boldsymbol{\gamma}})}{\frac{\phi_{ij}^{1}(\boldsymbol{\hat{\eta}}^1)\lambda_{i}^{1}(\boldsymbol{\hat{\gamma}}^1)}{\chi^{1,1}\left(\hat{\boldsymbol{\eta}}^{1}, \hat{\boldsymbol{\gamma}}^{1}\right)} + \frac{\hat{\boldsymbol{\varepsilon}}^\mathrm{T}\hat{g}_{ij}(\hat{\boldsymbol{\eta}},\hat{\boldsymbol{\gamma}})}{\chi^{1,1}\left(\hat{\boldsymbol{\eta}}^{1}, \hat{\boldsymbol{\gamma}}^{1}\right)}} \\
    &= \frac{1}{\chi^{1,1}\left(\hat{\boldsymbol{\eta}}^{1}, \hat{\boldsymbol{\gamma}}^{1}\right)} \sum_{i=1}^{s}\sum_{j=1}^{m_i} \frac{\hat{g}_{ij}(\hat{\boldsymbol{\eta}},\hat{\boldsymbol{\gamma}})}{1 + \frac{\phi_{ij}^{1}(\boldsymbol{\hat{\eta}}^1)\lambda_{i}^{1}(\boldsymbol{\hat{\gamma}}^1) - \chi^{1,1}\left(\hat{\boldsymbol{\eta}}^{1}, \hat{\boldsymbol{\gamma}}^{1}\right)}{\chi^{1,1}\left(\hat{\boldsymbol{\eta}}^{1}, \hat{\boldsymbol{\gamma}}^{1}\right)} + \frac{\hat{\boldsymbol{\varepsilon}}^\mathrm{T}}{\chi^{1,1}\left(\hat{\boldsymbol{\eta}}^{1}, \hat{\boldsymbol{\gamma}}^{1}\right)}\hat{g}_{ij}(\hat{\boldsymbol{\eta}},\hat{\boldsymbol{\gamma}})} \\
    &= \frac{1}{\chi^{1,1}\left(\hat{\boldsymbol{\eta}}^{1}, \hat{\boldsymbol{\gamma}}^{1}\right)} \sum_{i=1}^{s}\sum_{j=1}^{m_i} \frac{\hat{g}_{ij}(\hat{\boldsymbol{\eta}},\hat{\boldsymbol{\gamma}})}{1 + \left\{\frac{\hat{\varepsilon}_{11}+1}{\chi^{1,1}\left(\hat{\boldsymbol{\eta}}^{1}, \hat{\boldsymbol{\gamma}}^{1}\right)}, \frac{\hat{\varepsilon}_{12}}{\chi^{1,1}\left(\hat{\boldsymbol{\eta}}^{1}, \hat{\boldsymbol{\gamma}}^{1}\right)},...,\frac{\hat{\varepsilon}_{KL}}{\chi^{1,1}\left(\hat{\boldsymbol{\eta}}^{1}, \hat{\boldsymbol{\gamma}}^{1}\right)} \right\}^\mathrm{T} \hat{g}_{ij}(\hat{\boldsymbol{\eta}},\hat{\boldsymbol{\gamma}})}
\end{align*}
Thus, we establish the relationship that the solution to equation (\ref{eq:B5}) is equivalent to
\begin{align*}
    \hat{\boldsymbol{\rho}}^\mathrm{T} = \left\{\frac{\hat{\varepsilon}_{11}+1}{\chi^{1,1}\left(\hat{\boldsymbol{\eta}}^{1}, \hat{\boldsymbol{\gamma}}^{1}\right)}, \frac{\hat{\varepsilon}_{12}}{\chi^{1,1}\left(\hat{\boldsymbol{\eta}}^{1}, \hat{\boldsymbol{\gamma}}^{1}\right)},...,\frac{\hat{\varepsilon}_{KL}}{\chi^{1,1}\left(\hat{\boldsymbol{\eta}}^{1}, \hat{\boldsymbol{\gamma}}^{1}\right)} \right\}.
\end{align*}

Plugging this alternative form of $\hat{\boldsymbol{\rho}}^\mathrm{T}$ into the multiply robust weight, we have
\begin{align*}
    \hat{w}_{ij}^{MR}
    &= \left( \frac{1}{\sum_{i=1}^{s}m_i} \right) \left( \frac{1}{1 + \hat{\boldsymbol{\rho}}^\mathrm{T} \hat{g}_{ij}(\hat{\boldsymbol{\eta}},\hat{\boldsymbol{\gamma}})} \right) \\
    &= \left( \frac{1}{\sum_{i=1}^{s}m_i} \right) \left( \frac{1}{1 + \left\{\frac{\hat{\varepsilon}_{11}+1}{\chi^{1,1}\left(\hat{\boldsymbol{\eta}}^{1}, \hat{\boldsymbol{\gamma}}^{1}\right)}, \frac{\hat{\varepsilon}_{12}}{\chi^{1,1}\left(\hat{\boldsymbol{\eta}}^{1}, \hat{\boldsymbol{\gamma}}^{1}\right)},...,\frac{\hat{\varepsilon}_{KL}}{\chi^{1,1}\left(\hat{\boldsymbol{\eta}}^{1}, \hat{\boldsymbol{\gamma}}^{1}\right)} \right\} \hat{g}_{ij}(\hat{\boldsymbol{\eta}},\hat{\boldsymbol{\gamma}})} \right) \\
    &= \left( \frac{1}{\sum_{i=1}^{s}m_i} \right) \left( \frac{1}{1 + \frac{\phi_{ij}^{1}(\boldsymbol{\hat{\eta}}^1)\lambda_{i}^{1}(\boldsymbol{\hat{\gamma}}^1) - \chi^{1,1}\left(\hat{\boldsymbol{\eta}}^{1}, \hat{\boldsymbol{\gamma}}^{1}\right)}{\chi^{1,1}\left(\hat{\boldsymbol{\eta}}^{1}, \hat{\boldsymbol{\gamma}}^{1}\right)} + \left\{\frac{\hat{\varepsilon}_{11}}{\chi^{1,1}\left(\hat{\boldsymbol{\eta}}^{1}, \hat{\boldsymbol{\gamma}}^{1}\right)}, \frac{\hat{\varepsilon}_{12}}{\chi^{1,1}\left(\hat{\boldsymbol{\eta}}^{1}, \hat{\boldsymbol{\gamma}}^{1}\right)},...,\frac{\hat{\varepsilon}_{KL}}{\chi^{1,1}\left(\hat{\boldsymbol{\eta}}^{1}, \hat{\boldsymbol{\gamma}}^{1}\right)} \right\} \hat{g}_{ij}(\hat{\boldsymbol{\eta}},\hat{\boldsymbol{\gamma}})} \right) \\
    &= \left( \frac{1}{\sum_{i=1}^{s}m_i} \right) \left( \frac{1}{\frac{\phi_{ij}^{1}(\boldsymbol{\hat{\eta}}^1)\lambda_{i}^{1}(\boldsymbol{\hat{\gamma}}^1)}{\chi^{1,1}\left(\hat{\boldsymbol{\eta}}^{1}, \hat{\boldsymbol{\gamma}}^{1}\right)} + \left\{\frac{\hat{\varepsilon}_{11}}{\chi^{1,1}\left(\hat{\boldsymbol{\eta}}^{1}, \hat{\boldsymbol{\gamma}}^{1}\right)}, \frac{\hat{\varepsilon}_{12}}{\chi^{1,1}\left(\hat{\boldsymbol{\eta}}^{1}, \hat{\boldsymbol{\gamma}}^{1}\right)},...,\frac{\hat{\varepsilon}_{KL}}{\chi^{1,1}\left(\hat{\boldsymbol{\eta}}^{1}, \hat{\boldsymbol{\gamma}}^{1}\right)} \right\} \hat{g}_{ij}(\hat{\boldsymbol{\eta}},\hat{\boldsymbol{\gamma}})} \right) \\
    &= \left( \frac{1}{\sum_{i=1}^{s}m_i} \right) \left( \frac{\chi^{1,1}\left(\hat{\boldsymbol{\eta}}^{1}, \hat{\boldsymbol{\gamma}}^{1}\right)}{\phi_{ij}^{1}(\boldsymbol{\hat{\eta}}^1)\lambda_{i}^{1}(\boldsymbol{\hat{\gamma}}^1) + \left\{\hat{\varepsilon}_{11}, \hat{\varepsilon}_{12},...,\hat{\varepsilon}_{KL} \right\} \hat{g}_{ij}(\hat{\boldsymbol{\eta}},\hat{\boldsymbol{\gamma}})} \right) \\
    &= \left( \frac{1}{\sum_{i=1}^{s}m_i} \right) \left( \frac{\chi^{1,1}\left(\hat{\boldsymbol{\eta}}^{1}, \hat{\boldsymbol{\gamma}}^{1}\right)/\phi_{ij}^{1}(\boldsymbol{\hat{\eta}}^1)\lambda_{i}^{1}(\boldsymbol{\hat{\gamma}}^1)}{1 + \hat{\boldsymbol{\varepsilon}}^\mathrm{T} \hat{g}_{ij}(\hat{\boldsymbol{\eta}},\hat{\boldsymbol{\gamma}})/\phi_{ij}^{1}(\boldsymbol{\hat{\eta}}^1)\lambda_{i}^{1}(\boldsymbol{\hat{\gamma}}^1)} \right) \\
    &= \underbrace{\left( \frac{1}{\sum_{i=1}^{s}m_i} \right) \left( \frac{1}{1 + \hat{\boldsymbol{\varepsilon}}^\mathrm{T} \hat{g}_{ij}(\hat{\boldsymbol{\eta}},\hat{\boldsymbol{\gamma}})/\phi_{ij}^{1}(\boldsymbol{\hat{\eta}}^1)\lambda_{i}^{1}(\boldsymbol{\hat{\gamma}}^1)} \right)}_{\hat{p}_{ij}} \left\{ \chi^{1,1}\left(\hat{\boldsymbol{\eta}}^{1}, \hat{\boldsymbol{\gamma}}^{1}\right)/\phi_{ij}^{1}(\boldsymbol{\hat{\eta}}^1)\lambda_{i}^{1}(\boldsymbol{\hat{\gamma}}^1) \right\} \\ 
    &= \hat{p}_{ij} \chi^{1,1}\left(\hat{\boldsymbol{\eta}}^{1}, \hat{\boldsymbol{\gamma}}^{1}\right)/\{\phi_{ij}^1(\hat{\boldsymbol{\eta}}^1)\lambda_{i}^1(\hat{\boldsymbol{\gamma}}^1)\}
\end{align*}

\subsection{Showing that $\frac{\sum_{i=1}^{s} m_i }{\sum_{i=1}^{M} n_{i}}\stackrel{P}{\rightarrow} \chi_{*}^{1,1}$}
\label{appdx:B3}

\begin{align*}
    \frac{\sum_{i=1}^{s} m_i }{\sum_{i=1}^{M} n_{i}}
    &= \frac{\sum_{i=1}^{M}\sum_{j=1}^{n_i} R_{ij} C_{i}}{\sum_{i=1}^{M} n_{i}} \\
    &= \frac{1}{M} \sum_{i=1}^{M} \left(\frac{M }{\sum_{i=1}^{M} n_{i}}\right)\sum_{j=1}^{n_i} R_{ij}C_{i} \\
    &\stackrel{P}{\rightarrow} \lim_{M \to \infty}\frac{1}{M} \sum_{i=1}^{M} \mathbb{E}\left[ \left(\frac{M }{\sum_{i=1}^{M} n_{i}}\right)\sum_{j=1}^{n_i} R_{ij}C_{i}\right] \\
    &= \lim_{M \to \infty}\frac{1 }{\sum_{i=1}^{M} n_{i}} \sum_{i=1}^{M}\sum_{j=1}^{n_i} \mathbb{E}\left[ R_{ij}C_{i}\right] \\
    &= \lim_{M \to \infty}\frac{1 }{\sum_{i=1}^{M} n_{i}} \sum_{i=1}^{M}\sum_{j=1}^{n_i} \mathbb{E}\left\{ \mathbb{E}\left[R_{ij}C_{i}\mid \boldsymbol{X}_{ij}, \boldsymbol{Z}_{i}, A_i\right]\right\} \\
    &= \lim_{M \to \infty}\frac{1 }{\sum_{i=1}^{M} n_{i}} \sum_{i=1}^{M}\sum_{j=1}^{n_i} \mathbb{E}\left\{ \phi_{ij}^1(\boldsymbol{\eta}_0)\lambda_{i}^1(\boldsymbol{\gamma}_0)\right\} \\
    &= \mathbb{E}\left\{ \phi_{ij}^1(\boldsymbol{\eta}_0)\lambda_{i}^1(\boldsymbol{\gamma}_0)\right\} \\
    &= \chi_{*}^{1,1}
\end{align*}

\subsection{Showing that $\hat{w}_{ij}^{MR} = \frac{1 + o_p(1)}{\left(\sum_{i=1}^{M} n_i\right)\phi_{ij}^1(\boldsymbol{\eta_0})\lambda_{i}^1(\boldsymbol{\gamma_0})} $}
\label{appdx:B4}

\begin{align*}
    \hat{w}_{ij}^{MR} 
    &= \hat{p}_{ij} \frac{ \chi^{1,1}\left(\hat{\boldsymbol{\eta}}^{1}, \hat{\boldsymbol{\gamma}}^{1}\right)}{\phi_{ij}^1(\hat{\boldsymbol{\eta}}^1)\lambda_{i}^1(\hat{\boldsymbol{\gamma}}^1)} \\
    &= \left\{\left(\frac{1}{\sum_{i=1}^{s}m_i}\right) \frac{1}{1 + \hat{\boldsymbol{\varepsilon}}^{\mathrm{T}}\hat{g}_{ij}(\hat{\boldsymbol{\eta}},\hat{\boldsymbol{\gamma}})/\{\phi_{ij}^{1}(\boldsymbol{\hat{\eta}}^1)\lambda_{i}^{1}(\boldsymbol{\hat{\gamma}}^1)\}} \right\}\frac{ \chi^{1,1}\left(\hat{\boldsymbol{\eta}}^{1}, \hat{\boldsymbol{\gamma}}^{1}\right)}{\phi_{ij}^1(\hat{\boldsymbol{\eta}}^1)\lambda_{i}^1(\hat{\boldsymbol{\gamma}}^1)} \\
    &= \left(\frac{1}{\sum_{i=1}^{s}m_i}\right) \frac{\chi^{1,1}\left(\hat{\boldsymbol{\eta}}^{1}, \hat{\boldsymbol{\gamma}}^{1}\right)}{\phi_{ij}^{1}(\boldsymbol{\hat{\eta}}^1)\lambda_{i}^{1}(\boldsymbol{\hat{\gamma}}^1) + \hat{\boldsymbol{\varepsilon}}^{\mathrm{T}}\hat{g}_{ij}(\hat{\boldsymbol{\eta}},\hat{\boldsymbol{\gamma}})} \\
    &= \left(\frac{1}{\sum_{i=1}^{s}m_i}\right) \frac{\chi^{1,1}_{*} + o_p(1)}{\phi_{ij}^{1}(\boldsymbol{\eta}_{0}^1)\lambda_{i}^{1}(\boldsymbol{\gamma}_{0}^1) + \hat{\boldsymbol{\varepsilon}}^{\mathrm{T}}\hat{g}_{ij}(\hat{\boldsymbol{\eta}},\hat{\boldsymbol{\gamma}})} \\
    &= \frac{1}{\sum_{i=1}^{M}n_i} \underbrace{\left(\frac{\sum_{i=1}^{M}n_i}{\sum_{i=1}^{s}m_i} \right)}_{\stackrel{P}{\rightarrow} \left(\chi^{1,1}_{*}\right)^{-1}} \frac{\chi^{1,1}_{*} + o_p(1)}{\phi_{ij}^{1}(\boldsymbol{\eta}_{0}^1)\lambda_{i}^{1}(\boldsymbol{\gamma}_{0}^1) + \hat{\boldsymbol{\varepsilon}}^{\mathrm{T}}\hat{g}_{ij}(\hat{\boldsymbol{\eta}},\hat{\boldsymbol{\gamma}})} \\
    &= \left(\frac{1}{\sum_{i=1}^{M}n_i}\right) \frac{1 +  o_p(1)}{\phi_{ij}^{1}(\boldsymbol{\eta}_{0}^1)\lambda_{i}^{1}(\boldsymbol{\gamma}_{0}^1) + \hat{\boldsymbol{\varepsilon}}^{\mathrm{T}}\hat{g}_{ij}(\hat{\boldsymbol{\eta}},\hat{\boldsymbol{\gamma}})} \\
    &= \left(\frac{1}{\sum_{i=1}^{M}n_i}\right) \frac{1 +  o_p(1)}{\phi_{ij}^{1}(\boldsymbol{\eta}_0)\lambda_{i}^{1}(\boldsymbol{\gamma}_0)} \\
\end{align*}

\subsection{Showing that $\mathbb{E}\left\{\frac{\partial{\boldsymbol{\mu}_i(\boldsymbol{\beta}_0, A_i)}}{\partial{\boldsymbol{\beta}}}\boldsymbol{V}_{i}^{-1} \text{diag}\left[
\frac{R_{ij}}{\phi_{ij}^1(\boldsymbol{\eta_0})\lambda_{i}^1(\boldsymbol{\gamma_0})}\right] (\boldsymbol{Y}_i - \boldsymbol{\mu}_i(\boldsymbol{\beta}_0, A_i))\right\} = \boldsymbol{0}$}
\label{appdx:B5}
    
\begin{align*}
    &\hspace{0.5cm} \mathbb{E}_{\boldsymbol{Y}_i, \boldsymbol{R}_i,\boldsymbol{C}_i, A_i, \boldsymbol{Z}_i, \boldsymbol{X}_i }\left\{\frac{\partial{\boldsymbol{\mu}_i(\boldsymbol{\beta}_0, A_i)}}{\partial{\boldsymbol{\beta}}}\boldsymbol{V}_{i}^{-1} \text{diag}\left[
    \frac{R_{ij}C_{i}}{\phi_{ij}^1(\boldsymbol{\eta_0})\lambda_{i}^1(\boldsymbol{\gamma_0})}\right] (\boldsymbol{Y}_i - \boldsymbol{\mu}_i(\boldsymbol{\beta}_0, A_i))\right\} \\
    &= \mathbb{E}_{\boldsymbol{Y}_i, A_i, \boldsymbol{Z}_i, \boldsymbol{X}_i}\left\{\mathbb{E}_{\boldsymbol{R}_i,\boldsymbol{C}_{i}|\boldsymbol{Y}_i, A_i, \boldsymbol{Z}_i, \boldsymbol{X}_i} \left\{ \frac{\partial{\boldsymbol{\mu}_i(\boldsymbol{\beta}_0, A_i)}}{\partial{\boldsymbol{\beta}}} \boldsymbol{V}_{i}^{-1}   \text{diag}\left[
    \frac{R_{ij}C_{i}}{\phi_{ij}^1(\boldsymbol{\eta_0})\lambda_{i}^1(\boldsymbol{\gamma_0})}\right] (\boldsymbol{Y}_i - \boldsymbol{\mu}_i(\boldsymbol{\beta}_0, A_i))\right\} \right\}  \\
    &= \mathbb{E}_{\boldsymbol{Y}_i,A_i, \boldsymbol{Z}_i, \boldsymbol{X}_i}\left\{ \frac{\partial{\boldsymbol{\mu}_i(\boldsymbol{\beta}_0, A_i)}}{\partial{\boldsymbol{\beta}}} \boldsymbol{V}_{i}^{-1}  \underbrace{\mathbb{E}_{\boldsymbol{R}_i, \boldsymbol{C}_{i}| \boldsymbol{X}_i,\boldsymbol{Z}_i, A_i} \left\{ \text{diag}\left[
    \frac{R_{ij}C_{i}}{\phi_{ij}^1(\boldsymbol{\eta_0})\lambda_{i}^1(\boldsymbol{\gamma_0})}\right]\right\}}_{= I} (\boldsymbol{Y}_i - \boldsymbol{\mu}_i(\boldsymbol{\beta}_0, A_i)) \right\}    \\
    &= \mathbb{E}_{\boldsymbol{Y}_i,A_i, \boldsymbol{Z}_i, \boldsymbol{X}_i} \left\{ \frac{\partial{\boldsymbol{\mu}_i(\boldsymbol{\beta}_0, A_i)}}{\partial{\boldsymbol{\beta}}}\boldsymbol{V}_{i}^{-1}  (\boldsymbol{Y}_i - \boldsymbol{\mu}_i(\boldsymbol{\beta}_0, A_i))\right\}  \\
    &= \mathbb{E}_{\boldsymbol{Y}_i, A_i} \left\{ \frac{\partial{\boldsymbol{\mu}_i(\boldsymbol{\beta}_0, A_i)}}{\partial{\boldsymbol{\beta}}}\boldsymbol{V}_{i}^{-1}  (\boldsymbol{Y}_i - \boldsymbol{\mu}_i(\boldsymbol{\beta}_0, A_i))\right\}  \\
    &= \mathbb{E}_{A_i} \left\{\mathbb{E}_{\boldsymbol{Y}_i|A_i} \left\{ \frac{\partial{\boldsymbol{\mu}_i(\boldsymbol{\beta}_0, A_i)}}{\partial{\boldsymbol{\beta}}}\boldsymbol{V}_{i}^{-1}  (\boldsymbol{Y}_i - \boldsymbol{\mu}_i(\boldsymbol{\beta}_0, A_i))\right\} \right\} \\
    &= \mathbb{E}_{A_i} \left\{ \frac{\partial{\boldsymbol{\mu}_i(\boldsymbol{\beta}_0, A_i)}}{\partial{\boldsymbol{\beta}}}\boldsymbol{V}_{i}^{-1}  \underbrace{\mathbb{E}_{\boldsymbol{Y}_i|A_i} \left\{\boldsymbol{Y}_i - \boldsymbol{\mu}_i(\boldsymbol{\beta}_0, A_i)\right\}}_{=\mathbf{0}} \right\} \\
    &= \boldsymbol{0}
\end{align*}

\renewcommand{\theequation}{C.\arabic{equation}}
\setcounter{equation}{0}
\section{Asymptotic Distribution of $\hat{\boldsymbol{\beta}}_{MR}$} \label{appdx:C}

This section proves the asymptotic normality of $\hat{\boldsymbol{\beta}}_{MR}$, which is summarized by Theorem \ref{thm:2} in Web Appendix \ref{appdx:C2}. Before proving Theorem \ref{thm:2}, we first establish Lemma \ref{lem:1} in Web Appendix \ref{appdx:C1}. Similar to the assumptions made in \ref{appdx:B}, we consider the setting with $n_i = n$ and $M \to \infty$. Furthermore, we use subscript asterisk to denote probability limits, $(\boldsymbol{\eta}_0, \boldsymbol{\gamma}_0)$ to denote the true parameters of the PS models, and $\boldsymbol{\beta}_0$ to denote the true parameters of the marginal model.

\subsection{Proof of Lemma 1}
\label{appdx:C1}
\begin{lemma}\label{lem:1}
When $\phi_{ij}^{1}(\boldsymbol{\eta}^{1})$ and $\lambda_{i}^{1}(\boldsymbol{\gamma}^{1})$ are the correctly specified models for $\phi_{ij}(\boldsymbol{\eta})$ and $\lambda_{i}(\boldsymbol{\gamma})$, respectively, we have:
\begin{align*}
M^{1 / 2} \hat{\boldsymbol{\varepsilon}}=& \boldsymbol{G}^{-1}\left(M^{-1 / 2} \sum_{i=1}^{M}\mathbbm{1}^{\mathrm{T}} \text{diag} \left\{\frac{R_{ij}C_{i}-\phi_{ij}^{1}\left(\boldsymbol{\eta}_{0}\right)\lambda_{i}^{1}\left(\boldsymbol{\gamma}_{0}\right)}{\phi_{ij}^{1}\left(\boldsymbol{\eta}_{0}\right)\lambda_{i}^{1}\left(\boldsymbol{\gamma}_{0}\right)} \boldsymbol{g}_{ij}\left(\boldsymbol{\eta}_{*}, \boldsymbol{\gamma}_{*}\right)\right\} \mathbbm{1} \right.\\
&\left.- \mathbbm{1}^{\mathrm{T}} \mathbb{E}\left[\text{diag} \left\{\frac{\boldsymbol{g}_{ij}\left(\boldsymbol{\eta}_{*}, \boldsymbol{\gamma}_{*}\right)}{\phi_{ij}^{1}\left(\boldsymbol{\eta}_{0}\right)}\left\{\frac{\partial \phi_{ij}^{1}\left(\boldsymbol{\eta}_{0}\right)}{\partial \boldsymbol{\eta}^{1}}\right\}^{\mathrm{T}} \right\}\right] \mathbbm{1} M^{1 / 2}\left(\hat{\boldsymbol{\eta}}^{1}-\boldsymbol{\eta}_{0}\right)
\right.\\
& \left.- \mathbbm{1}^{\mathrm{T}} \mathbb{E}\left[ \text{diag} \left\{ \frac{\boldsymbol{g}_{ij}\left(\boldsymbol{\eta}_{*}, \boldsymbol{\gamma}_{*}\right)}{\lambda_{i}^{1}\left(\boldsymbol{\gamma}_{0}\right)}\left\{\frac{\partial \lambda_{i}^{1}\left(\boldsymbol{\gamma}_{0}\right)}{\partial \boldsymbol{\gamma}^{1}}\right\}^{\mathrm{T}} \right\}\right] \mathbbm{1} M^{1 / 2}\left(\hat{\boldsymbol{\gamma}}^{1}-\boldsymbol{\gamma}_{0}\right) \right)+o_{p}(1),
\end{align*}
where $\boldsymbol{G} = \mathbbm{1}^{\mathrm{T}} \mathbb{E}\left[ \text{diag} \left\{ \frac{\boldsymbol{g}_{ij}\left(\boldsymbol{\eta}_{*}, \boldsymbol{\gamma}_{*}\right)^{\otimes 2}}{\phi_{ij}^{1}\left(\boldsymbol{\eta}_{0} \right) \lambda_{i}^{1}\left(\boldsymbol{\gamma}_{0} \right)} \right\}\right] \mathbbm{1}$, $\mathbbm{1} = (1, 1, \ldots, 1)^{\mathrm{T}}$, and $\hat{\boldsymbol{\varepsilon}}^{\mathrm{T}}=\left(\hat{\varepsilon}_{1}, \ldots, \hat{\varepsilon}_{K L}\right)$ is a $K L \times 1$ vector that solves the following equation:
$$
\sum_{i=1}^{s} \sum_{j=1}^{m_{i}} \frac{\hat{g}_{i j}(\hat{\boldsymbol{\eta}}, \hat{\boldsymbol{\gamma}}) /\left\{\phi_{i j}^{1}\left(\hat{\boldsymbol{\eta}}^{1}\right) \lambda_{i}^{1}\left(\hat{\boldsymbol{\gamma}}^{1}\right)\right\}}{1+\boldsymbol{\varepsilon}^{\mathrm{T}} \hat{g}_{i j}(\hat{\boldsymbol{\eta}}, \hat{\boldsymbol{\gamma}}) /\left\{\phi_{i j}^{1}\left(\hat{\boldsymbol{\eta}}^{1}\right) \lambda_{i}^{1}\left(\hat{\boldsymbol{\gamma}}^{1}\right)\right\}}=\mathbf{0}.
$$
\end{lemma}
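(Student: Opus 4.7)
The plan is to view the equation defining $\hat{\boldsymbol{\varepsilon}}$ as an estimating equation in the triple $(\boldsymbol{\varepsilon}, \boldsymbol{\eta}, \boldsymbol{\gamma})$ and apply a first-order Taylor expansion around the pivot $(\boldsymbol{0}, \boldsymbol{\eta}_0, \boldsymbol{\gamma}_0)$, then solve for $M^{1/2}\hat{\boldsymbol{\varepsilon}}$. Write
\[
U_M(\boldsymbol{\varepsilon}, \boldsymbol{\eta}, \boldsymbol{\gamma}) = \sum_{i=1}^M \sum_{j=1}^{n_i} \frac{R_{ij}C_i\, h_{ij}(\boldsymbol{\eta}, \boldsymbol{\gamma})}{1 + \boldsymbol{\varepsilon}^{\mathrm{T}} h_{ij}(\boldsymbol{\eta}, \boldsymbol{\gamma})}, \qquad h_{ij}(\boldsymbol{\eta}, \boldsymbol{\gamma}) = \frac{\hat{g}_{ij}(\boldsymbol{\eta}, \boldsymbol{\gamma})}{\phi_{ij}^1(\boldsymbol{\eta}^1)\lambda_i^1(\boldsymbol{\gamma}^1)},
\]
so the defining equation reads $U_M(\hat{\boldsymbol{\varepsilon}}, \hat{\boldsymbol{\eta}}, \hat{\boldsymbol{\gamma}}) = \boldsymbol{0}$. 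The empirical-likelihood theory cited for Theorem \ref{thm:1} supplies $\hat{\boldsymbol{\varepsilon}} = O_p(M^{-1/2})$, so second-order Taylor remainders can be absorbed into a final $o_p(1)$.

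The intercept $U_M(\boldsymbol{0}, \boldsymbol{\eta}_0, \boldsymbol{\gamma}_0) = \sum_{i,j} R_{ij}C_i \hat{g}_{ij}(\boldsymbol{\eta}_0, \boldsymbol{\gamma}_0)/\pi_{ij}$ with $\pi_{ij} := \phi_{ij}^1(\boldsymbol{\eta}_0)\lambda_i^1(\boldsymbol{\gamma}_0)$ is handled via the exact identity $\sum_{i=1}^M \sum_{j=1}^{n_i} \hat{g}_{ij}(\boldsymbol{\eta}_0, \boldsymbol{\gamma}_0) = \boldsymbol{0}$, which is immediate from $\hat{\chi}^{k,\ell}$ being the full-sample mean of $\phi^k\lambda^\ell$. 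Subtracting this zero sum recenters the intercept as $\sum_{i,j}[(R_{ij}C_i - \pi_{ij})/\pi_{ij}]\hat{g}_{ij}(\boldsymbol{\eta}_0, \boldsymbol{\gamma}_0)$; replacing $\hat{g}_{ij}$ by its probability limit $g_{ij}(\boldsymbol{\eta}_*, \boldsymbol{\gamma}_*)$ costs only $o_p(M^{1/2})$, since $\hat{\chi} - \chi_* = O_p(M^{-1/2})$ while $\sum(R_{ij}C_i - \pi_{ij})/\pi_{ij} = O_p(M^{1/2})$ (zero mean under correct specification). After multiplying by $M^{-1/2}$ this reproduces Term 1 of the lemma.

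For the Jacobian pieces at the pivot, $\partial_{\boldsymbol{\varepsilon}}U_M = -\sum R_{ij}C_i h_{ij}h_{ij}^{\mathrm{T}}$ has $M^{-1}$-scaled limit $-\boldsymbol{G}$ by cluster-level LLN together with $\mathbb{E}[R_{ij}C_i \mid \boldsymbol{X},\boldsymbol{Z},A] = \pi_{ij}$. For $\partial_{\boldsymbol{\eta}^1}U_M$, the product rule gives a numerator piece $\sum R_{ij}C_i (\partial_{\boldsymbol{\eta}^1}\hat{g}_{ij})/\pi_{ij}$ and a denominator piece $-\sum R_{ij}C_i \hat{g}_{ij}(\partial_{\boldsymbol{\eta}^1}\phi_{ij}^1)/[\phi_{ij}^1 \pi_{ij}]$. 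The numerator piece is nonzero only in the $k=1$ blocks, where it equals $\sum R_{ij}C_i[(\partial_{\boldsymbol{\eta}^1}\phi_{ij}^1)\lambda_i^\ell - \partial_{\boldsymbol{\eta}^1}\hat{\chi}^{1,\ell}]/\pi_{ij}$; since $\partial_{\boldsymbol{\eta}^1}\hat{\chi}^{1,\ell}$ is itself a sample mean of $(\partial_{\boldsymbol{\eta}^1}\phi^1)\lambda^\ell$, LLN makes it track $\mathbb{E}[R_{ij}C_i(\partial_{\boldsymbol{\eta}^1}\phi^1)\lambda^\ell/\pi_{ij}] = \mathbb{E}[(\partial_{\boldsymbol{\eta}^1}\phi^1)\lambda^\ell]$ and the two contributions cancel, so this piece is $o_p(M)$. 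Only the denominator piece survives, whose $M^{-1}$-scaled limit is $-\mathbbm{1}^{\mathrm{T}}\mathbb{E}[\mathrm{diag}\{g_{ij}(\boldsymbol{\eta}_*,\boldsymbol{\gamma}_*)(\partial_{\boldsymbol{\eta}^1}\phi_{ij}^1(\boldsymbol{\eta}_0))^{\mathrm{T}}/\phi_{ij}^1(\boldsymbol{\eta}_0)\}]\mathbbm{1}$, matching the Term 2 coefficient (with the minus sign it carries in the lemma). The $\boldsymbol{\gamma}^1$-derivative is symmetric and yields Term 3. For $k, \ell \geq 2$ there is no surviving denominator piece (since $\phi^1, \lambda^1$ do not involve $\boldsymbol{\eta}^k, \boldsymbol{\gamma}^\ell$) and the same $\hat{\chi}$-cancellation kills the numerator piece, so those Taylor terms contribute only $o_p(1) \cdot O_p(1) = o_p(1)$ after multiplying by $M^{1/2}(\hat{\boldsymbol{\eta}}^k - \boldsymbol{\eta}_*^k)$ or $M^{1/2}(\hat{\boldsymbol{\gamma}}^\ell - \boldsymbol{\gamma}_*^\ell)$.

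Collecting these pieces in the scaled Taylor identity $\boldsymbol{0} = M^{-1/2}U_M(\boldsymbol{0}, \boldsymbol{\eta}_0, \boldsymbol{\gamma}_0) - \boldsymbol{G}\,M^{1/2}\hat{\boldsymbol{\varepsilon}} - (\text{Term 2 coefficient})\,M^{1/2}(\hat{\boldsymbol{\eta}}^1 - \boldsymbol{\eta}_0) - (\text{Term 3 coefficient})\,M^{1/2}(\hat{\boldsymbol{\gamma}}^1 - \boldsymbol{\gamma}_0) + o_p(1)$ and left-multiplying by $\boldsymbol{G}^{-1}$ yields the stated expansion. The main obstacle I anticipate is the careful bookkeeping of the $\hat{\chi}$-centering in both the intercept and the $\boldsymbol{\eta}^1$-derivative steps, because those cancellations look coincidental but actually reflect the fact that $\hat{\chi}^{k,\ell}$ and $\partial_{\boldsymbol{\eta}^k}\hat{\chi}^{k,\ell}$ are themselves sample means that LLN forces to track their expectations; a secondary, more routine obstacle is justifying that the Taylor remainder is uniformly $o_p(1)$ on a shrinking neighborhood of the pivot, which follows from smoothness of $\phi^k$ and $\lambda^\ell$ in their parameters together with the fixed cluster size $n_i = n$.
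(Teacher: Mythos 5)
Your proposal follows essentially the same route as the paper's proof: a first-order Taylor expansion of the estimating equation for $\hat{\boldsymbol{\varepsilon}}$ around $(\boldsymbol{0},\boldsymbol{\eta}_{*},\boldsymbol{\gamma}_{*})$, with the intercept recentered via the exact identity $\sum_{i,j}\hat{g}_{ij}=\boldsymbol{0}$, the $\boldsymbol{\varepsilon}$-Jacobian converging to $-\boldsymbol{G}$ through $\mathbb{E}[R_{ij}C_i\mid\cdot]=\phi_{ij}^1\lambda_i^1$, the numerator pieces of the $\boldsymbol{\eta}^1,\boldsymbol{\gamma}^1$ derivatives cancelling against the $\partial\hat{\chi}$ terms, and the $k,\ell\geq 2$ blocks vanishing in probability. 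The argument is correct; the only cosmetic discrepancy is writing the pivot as $(\boldsymbol{0},\boldsymbol{\eta}_0,\boldsymbol{\gamma}_0)$ rather than $(\boldsymbol{0},\boldsymbol{\eta}_{*},\boldsymbol{\gamma}_{*})$, which matters only for the misspecified components $k,\ell\geq 2$ that you in any case expand around their own probability limits.
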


\begin{proof}
For notational simplicity, let $N = \sum_{i=1}^{M} n_i$. Taking Taylor expansion of the left-hand side of Equation (\ref{eq:B5}) around $\left(\mathbf{0}^{\mathrm{T}}, \boldsymbol{\eta}_{*}^{\mathrm{T}}, \boldsymbol{\gamma}_{*}^{\mathrm{T}}\right)$ leads to:
\begin{align*}
\mathbf{0}
&= M^{-1/2}\sum_{i=1}^{s} \sum_{j=1}^{m_{i}} \frac{\hat{g}_{i j}(\hat{\boldsymbol{\eta}}, \hat{\boldsymbol{\gamma}}) /\left\{\phi_{i j}^{1}\left(\hat{\boldsymbol{\eta}}^{1}\right) \lambda_{i}^{1}\left(\hat{\boldsymbol{\gamma}}^{1}\right)\right\}}{1+\hat{\boldsymbol{\varepsilon}}^{\mathrm{T}} \hat{g}_{i j}(\hat{\boldsymbol{\eta}}, \hat{\boldsymbol{\gamma}}) /\left\{\phi_{i j}^{1}\left(\hat{\boldsymbol{\eta}}^{1}\right) \lambda_{i}^{1}\left(\hat{\boldsymbol{\gamma}}^{1}\right)\right\}} \\
&= M^{-1/2} \sum_{i=1}^{M} \sum_{j=1}^{n_{i}}  \frac{R_{ij}C_{i}\hat{g}_{i j}(\hat{\boldsymbol{\eta}}, \hat{\boldsymbol{\gamma}}) /\left\{\phi_{i j}^{1}\left(\hat{\boldsymbol{\eta}}^{1}\right) \lambda_{i}^{1}\left(\hat{\boldsymbol{\gamma}}^{1}\right)\right\}}{1+\hat{\boldsymbol{\varepsilon}}^{\mathrm{T}} \hat{g}_{i j}(\hat{\boldsymbol{\eta}}, \hat{\boldsymbol{\gamma}}) /\left\{\phi_{i j}^{1}\left(\hat{\boldsymbol{\eta}}^{1}\right) \lambda_{i}^{1}\left(\hat{\boldsymbol{\gamma}}^{1}\right)\right\}} \\
&= M^{-1/2} \sum_{i=1}^{M} \mathbbm{1}^{\mathrm{T}}
\text{diag} \left\{ \frac{R_{ij}C_{i}\hat{g}_{i j}(\hat{\boldsymbol{\eta}}, \hat{\boldsymbol{\gamma}}) /\left\{\phi_{i j}^{1}\left(\hat{\boldsymbol{\eta}}^{1}\right) \lambda_{i}^{1}\left(\hat{\boldsymbol{\gamma}}^{1}\right)\right\}}{1+\hat{\boldsymbol{\varepsilon}}^{\mathrm{T}} \hat{g}_{i j}(\hat{\boldsymbol{\eta}}, \hat{\boldsymbol{\gamma}}) /\left\{\phi_{i j}^{1}\left(\hat{\boldsymbol{\eta}}^{1}\right) \lambda_{i}^{1}\left(\hat{\boldsymbol{\gamma}}^{1}\right)\right\}} \right\} \mathbbm{1}\\
&= M^{-1/2} \sum_{i=1}^{M} \mathbbm{1}^{\mathrm{T}}
\text{diag} \left\{ \frac{R_{ij}C_{i}g_{i j}(\boldsymbol{\eta}_{*}, \boldsymbol{\gamma}_{*}) /\left\{\phi_{i j}^{1}\left(\boldsymbol{\eta}_{0}\right) \lambda_{i}^{1}\left(\boldsymbol{\gamma}_{0}\right)\right\}}{1+\hat{\boldsymbol{\varepsilon}}^{\mathrm{T}} g_{i j}(\boldsymbol{\eta}_{*},\boldsymbol{\gamma}_{*}) /\left\{\phi_{i j}^{1}\left(\boldsymbol{\eta}_{0}\right) \lambda_{i}^{1}\left(\boldsymbol{\gamma}_{0}\right)\right\}} \right\} \mathbbm{1} \\
&+\left\{\frac{1}{M} \sum_{i=1}^{M} \mathbbm{1}^{\mathrm{T}}
\text{diag} \left[  \frac{R_{ij}C_{i}}{\phi_{i j}^{1}\left(\boldsymbol{\eta}_{0}\right) \lambda_{i}^{1}\left(\boldsymbol{\gamma}_{0}\right)} \left\{ \frac{\partial g_{i j}(\boldsymbol{\eta}_{*}, \boldsymbol{\gamma}_{*})}{\partial \boldsymbol{\eta}^{1}}\right\} ^{\mathrm{T}} - \frac{R_{ij}C_{i} g_{i j}(\boldsymbol{\eta}_{*}, \boldsymbol{\gamma}_{*})}{\{\phi_{i j}^{1}\left(\boldsymbol{\eta}_{0}\right)\}^2 \lambda_{i}^{1}\left(\boldsymbol{\gamma}_{0}\right)} \left\{\frac{\partial\phi_{i j}^{1}\left(\boldsymbol{\eta}_{0}\right)  }{\partial \boldsymbol{\eta}^{1}}\right\} ^{\mathrm{T}} \right] \mathbbm{1} \right\} \\
&\times M^{1/2}(\hat{\boldsymbol{\eta}}^{1} - \boldsymbol{\eta}_{0} ) \\
&+  \left\{\frac{1}{M} \sum_{i=1}^{M} \mathbbm{1}^{\mathrm{T}}
\text{diag} \left[\frac{R_{ij}C_{i}}{\phi_{i j}^{1}\left(\boldsymbol{\eta}_{0}\right) \lambda_{i}^{1}\left(\boldsymbol{\gamma}_{0}\right)} \left\{ \frac{\partial g_{i j}(\boldsymbol{\eta}_{*}, \boldsymbol{\gamma}_{*})}{\partial \boldsymbol{\gamma}^{1}}\right\} ^{\mathrm{T}} - \frac{R_{ij}C_{i}g_{i j}(\boldsymbol{\eta}_{*}, \boldsymbol{\gamma}_{*})}{\phi_{ij }^{1}\left(\boldsymbol{\eta}_{0}\right) \left\{ \lambda_{i}^{1}\left(\boldsymbol{\gamma}_{0}\right)\right\}^2} \left\{ \frac{\partial\lambda_{i }^{1}\left(\boldsymbol{\gamma}_{0}\right)  }{\partial \boldsymbol{\gamma}^{1}}\right\} ^{\mathrm{T}} \right] \mathbbm{1} \right\} \\
&\times M^{1/2}(\hat{\boldsymbol{\gamma}}^{1} - \boldsymbol{\gamma}_{0} )  \\
&+ \sum_{k=2}^{K}\sum_{\ell=1}^{L}\left\{\frac{1}{M} \sum_{i=1}^{M} \mathbbm{1}^{\mathrm{T}}
\text{diag} \left[  \frac{R_{ij}C_{i}}{\phi_{i j}^{1}\left(\boldsymbol{\eta}_{0}\right) \lambda_{i}^{1}\left(\boldsymbol{\gamma}_{0}\right)} \left\{\frac{\partial g_{i j}(\boldsymbol{\eta}_{*}, \boldsymbol{\gamma}_{*})}{\partial \boldsymbol{\eta}_{*}^{k}}\right\} ^{\mathrm{T}} \right] \mathbbm{1} \right\}  M^{1/2}(\hat{\boldsymbol{\eta}}^{k} - \boldsymbol{\eta}_{*}^{k} )  \\
&+ \sum_{k=1}^{K}\sum_{\ell=2}^{L}\left\{\frac{1}{M} \sum_{i=1}^{M} \mathbbm{1}^{\mathrm{T}}
\text{diag} \left[ \frac{R_{ij}C_{i}}{\phi_{i j}^{1}\left(\boldsymbol{\eta}_{0}\right) \lambda_{i}^{1}\left(\boldsymbol{\gamma}_{0}\right)} \left\{\frac{\partial g_{i j}(\boldsymbol{\eta}_{*}, \boldsymbol{\gamma}_{*})}{\partial \boldsymbol{\gamma}^{\ell}} \right\}^{\mathrm{T}} \right] \mathbbm{1}\right\} M^{1/2}(\hat{\boldsymbol{\gamma}}^{\ell} - \boldsymbol{\gamma}_{*}^{\ell} ) + o_p(1)  \\
&= M^{-1/2} \sum_{i=1}^{M} \mathbbm{1}^{\mathrm{T}}
\text{diag} \left[ \frac{R_{ij}C_{i}g_{i j}(\boldsymbol{\eta}_{*}, \boldsymbol{\gamma}_{*})}{\phi_{i j}^{1}\left(\boldsymbol{\eta}_{0}\right) \lambda_{i}^{1}\left(\boldsymbol{\gamma}_{0}\right)} \right]\mathbbm{1} - \left\{\frac{1}{M} \sum_{i=1}^{M} \mathbbm{1}^{\mathrm{T}}
\text{diag} \left[   \frac{R_{ij}C_{i}g_{i j}(\boldsymbol{\eta}_{*}, \boldsymbol{\gamma}_{*})^{\otimes 2}}{\left\{\phi_{i j}^{1}\left(\boldsymbol{\eta}_{0}\right) \lambda_{i}^{1}\left(\boldsymbol{\gamma}_{0}\right)\right\}^2}\right]\mathbbm{1}\right\} M^{1/2}\hat{\boldsymbol{\varepsilon}}  \\
&+  \left\{\frac{1}{M}\sum_{i=1}^{M} \mathbbm{1}^{\mathrm{T}}
\text{diag} \left[   \frac{R_{ij}C_{i}}{\phi_{i j}^{1}\left(\boldsymbol{\eta}_{0}\right) \lambda_{i}^{1}\left(\boldsymbol{\gamma}_{0}\right)} \left( \left\{ \frac{\partial \phi_{i j}^{1}\left(\boldsymbol{\eta}_{0}\right) }{\partial \boldsymbol{\eta}^{1}} \lambda_{i }^{1}\left(\boldsymbol{\gamma}_{0}\right) \right\}^{\mathrm{T}} - \left\{ \frac{1}{N} \sum_{i=1}^{M} \sum_{j=1}^{n_i} \frac{\partial \phi_{i j}^{1}\left(\boldsymbol{\eta}_{0}\right) }{\partial \boldsymbol{\eta}^{1}} \lambda_{i }^{1}\left(\boldsymbol{\gamma}_{0}\right) \right\}^{\mathrm{T}}, \boldsymbol{0}  \right) \right. \right. \\
&-\left. \left. \frac{R_{ij}C_{i}g_{i j}(\boldsymbol{\eta}_{*}, \boldsymbol{\gamma}_{*})}{\left\{\phi_{ij }^{1}\left(\boldsymbol{\eta}_{0}\right)\right\}^2 \lambda_{i}^{1}\left(\boldsymbol{\gamma}_{0}\right)} \left\{\frac{\partial\phi_{ij }^{1}\left(\boldsymbol{\eta}_{0}\right)  }{\partial \boldsymbol{\eta}^{1}} \right\} ^{\mathrm{T}} \right] \mathbbm{1} \right\}  M^{1/2}(\hat{\boldsymbol{\eta}}^{1} - \boldsymbol{\eta}_{0} )  \\
&+  \left\{\frac{1}{M}\sum_{i=1}^{M} \mathbbm{1}^{\mathrm{T}}
\text{diag} \left[   \frac{R_{ij}C_{i}}{\phi_{i j}^{1}\left(\boldsymbol{\eta}_{0}\right) \lambda_{i}^{1}\left(\boldsymbol{\gamma}_{0}\right)} \left( \left\{ \phi_{ij }^{1}\left(\boldsymbol{\eta}_{0}\right)\frac{\partial \lambda_{i}^{1}\left(\boldsymbol{\gamma}_{0}\right) }{\partial \boldsymbol{\gamma}^{1}}  \right\}^{\mathrm{T}} - \left\{ \frac{1}{N} \sum_{i=1}^{M} \sum_{j=1}^{n_i} \phi_{ij }^{1}\left(\boldsymbol{\eta}_{0}\right)\frac{\partial \lambda_{i}^{1}\left(\boldsymbol{\gamma}_{0}\right) }{\partial \boldsymbol{\gamma}^{1}}  \right\}^{\mathrm{T}}, \boldsymbol{0}  \right) \right. \right. \\
&-\left. \left. \frac{R_{ij}C_{i} g_{i j}(\boldsymbol{\eta}_{*}, \boldsymbol{\gamma}_{*})}{\phi_{ij }^{1}\left(\boldsymbol{\eta}_{0}\right) \left\{ \lambda_{i}^{1}\left(\boldsymbol{\gamma}_{0}\right)\right\}^2} \left\{\frac{\partial\lambda_{i }^{1}\left(\boldsymbol{\gamma}_{0}\right)  }{\partial \boldsymbol{\gamma}^{1}} \right\} ^{\mathrm{T}} \right] \mathbbm{1} \right\}  M^{1/2}(\hat{\boldsymbol{\gamma}}^{1} - \boldsymbol{\gamma}_{0} )  \\
&+ \sum_{k=2}^{K} \sum_{\ell=1}^{L} \left\{\frac{1}{M}\sum_{i=1}^{M} \mathbbm{1}^{\mathrm{T}}
\text{diag} \left[  \frac{R_{ij}C_{i}}{\phi_{i j}^{1}\left(\boldsymbol{\eta}_{0}\right) \lambda_{i}^{1}\left(\boldsymbol{\gamma}_{0}\right)} \left(\boldsymbol{0}, \left\{ \frac{\partial \phi_{i j}^{k}\left(\boldsymbol{\eta}_{*}^{k}\right) }{\partial \boldsymbol{\eta}^{k}} \lambda_{i }^{\ell}\left(\boldsymbol{\gamma}_{*}^{\ell}\right) \right\}^{\mathrm{T}} \right. \right. \right. \\
&-\left. \left. \left.  \left\{ \frac{1}{N} \sum_{i=1}^{M} \sum_{j=1}^{n_i} \frac{\partial \phi_{i j}^{k}\left(\boldsymbol{\eta}_{*}^{k}\right) }{\partial \boldsymbol{\eta}^{k}} \lambda_{i }^{\ell}\left(\boldsymbol{\gamma}_{*}^{\ell}\right) \right\}^{\mathrm{T}}  \right) \right] \mathbbm{1} \right\}  M^{1/2}(\hat{\boldsymbol{\eta}}^{k} - \boldsymbol{\eta}_{*}^{k} )  \\
&+ \sum_{k=1}^{K} \sum_{\ell=2}^{L} \left\{\frac{1}{M}\sum_{i=1}^{M} \mathbbm{1}^{\mathrm{T}}
\text{diag} \left[  \frac{R_{ij}C_{i}}{\phi_{i j}^{1}\left(\boldsymbol{\eta}_{0}\right) \lambda_{i}^{1}\left(\boldsymbol{\gamma}_{0}\right)} \left (\boldsymbol{0}, \left\{ \phi_{ij }^{k}\left(\boldsymbol{\eta}_{*}^{k}\right) \frac{ \partial \lambda_{i }^{\ell}\left(\boldsymbol{\gamma}_{*}^{\ell}\right) }{\partial \boldsymbol{\gamma}^{\ell}} \right\}^{\mathrm{T}} \right. \right. \right. \\
&-\left. \left. \left. \left\{ \frac{1}{N} \sum_{i=1}^{M} \sum_{j=1}^{n_i} \phi_{ij }^{k}\left(\boldsymbol{\eta}_{*}^{k}\right) \frac{ \partial \lambda_{i }^{\ell}\left(\boldsymbol{\gamma}_{*}^{\ell}\right) }{\partial \boldsymbol{\gamma}^{\ell}} \right\}^{\mathrm{T}}  \right) \right] \mathbbm{1} \right\}  M^{1/2}(\hat{\boldsymbol{\gamma}}^{\ell} - \boldsymbol{\gamma}_{*}^{\ell} )  + o_p(1) \\
&= M^{-1/2} \sum_{i=1}^{M} \mathbbm{1}^{\mathrm{T}}
\text{diag} \left\{ \frac{R_{ij}C_{i} g_{i j}(\boldsymbol{\eta}_{*}, \boldsymbol{\gamma}_{*})}{\phi_{i j}^{1}\left(\boldsymbol{\eta}_{0}\right) \lambda_{i}^{1}\left(\boldsymbol{\gamma}_{0}\right)} \right\}\mathbbm{1} - \mathbbm{1}^{\mathrm{T}} \mathbb{E}\left[ \text{diag} \left\{ \frac{g_{i j}(\boldsymbol{\eta}_{*}, \boldsymbol{\gamma}_{*})^{\otimes 2}}{\phi_{i j}^{1}\left(\boldsymbol{\eta}_{0}\right) \lambda_{i}^{1}\left(\boldsymbol{\gamma}_{0}\right)} \right\}\right] \mathbbm{1} M^{1/2} \hat{\boldsymbol{\varepsilon}}  \\
&- \mathbbm{1}^{\mathrm{T}} \mathbb{E}\left[ \text{diag} \left\{ \frac{ g_{i j}\left(\boldsymbol{\eta}_{*}, \boldsymbol{\gamma}_{*}\right)}{\phi_{i j}^{1}\left(\boldsymbol{\eta}_{0}\right)}\left\{\frac{\partial \phi_{i j}^{1}\left(\boldsymbol{\eta}_{0}\right)}{\partial \boldsymbol{\eta}^{1}}\right\}^{\mathrm{T}} \right\} \right] \mathbbm{1} M^{1 / 2}\left(\hat{\boldsymbol{\eta}}^{1}-\boldsymbol{\eta}_{0}\right)  \\
&- \mathbbm{1}^{\mathrm{T}} \mathbb{E}\left[ \text{diag} \left\{ \frac{ g_{i j}\left(\boldsymbol{\eta}_{*}, \boldsymbol{\gamma}_{*}\right)}{\lambda_{i}^{1}\left(\boldsymbol{\gamma}_{0}\right)}\left\{\frac{\partial \lambda_{i }^{1}\left(\boldsymbol{\gamma}_{0}\right)}{\partial \boldsymbol{\gamma}^{1}}\right\}^{\mathrm{T}}\right\}\right] \mathbbm{1} M^{1 / 2}\left(\hat{\boldsymbol{\gamma}}^{1}-\boldsymbol{\gamma}_{0}\right) + o_p(1).
\end{align*}
Rearranging the last equation gives the result.
\end{proof}

\subsection{Proof of Theorem 2}
\label{appdx:C2}
\begin{theorem} \label{thm:2}
When both $\mathcal{P}_1$ and $\mathcal{P}_2$ contain a correctly specified model for $\lambda_{i}\left(\boldsymbol{\gamma}\right)$ and $\phi_{ij}\left(\boldsymbol{\eta}\right)$, respectively, $\sqrt{M}(\hat{\boldsymbol{\beta}}_{MR} - \boldsymbol{\beta}_0)$ has an asymptotic normal distribution with mean $\boldsymbol{0}$ and variance var($\boldsymbol{Z}_i$), where
\begin{align*}
    \boldsymbol{Z}_i = \left[E\left\{\frac{\partial \boldsymbol{U}_i\left(\boldsymbol{\beta}_0\right)}{\partial \boldsymbol{\beta}}\right\}\right]^{-1}\left[\boldsymbol{Q}_i-\left\{\mathbb{E}\left(\boldsymbol{Q}_i \boldsymbol{S}_{1,i}^{\mathrm{T}}\right)\right\}\left\{\mathbb{E}\left(\boldsymbol{S}_{1,i}^{\otimes 2}\right)\right\}^{-1} \boldsymbol{S}_{1,i}-\left\{\mathbb{E}\left(\boldsymbol{Q}_i \boldsymbol{S}_{2,i}^{\mathrm{T}}\right)\right\}\left\{\mathbb{E}\left(\boldsymbol{S}_{2,i}^{\otimes 2}\right)\right\}^{-1} \boldsymbol{S}_{2,i}\right].
\end{align*}
\end{theorem}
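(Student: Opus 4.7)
The plan is a standard Z-estimator argument for $\hat{\boldsymbol{\beta}}_{MR}$, but with the twist that the weights $\hat{w}_{ij}^{MR}$ depend on three sets of estimated quantities: $\hat{\boldsymbol{\eta}}^1$, $\hat{\boldsymbol{\gamma}}^1$, and the Lagrange multiplier $\hat{\boldsymbol{\varepsilon}}$. Write the MMR-GEE score as $\sum_{i=1}^{M} \boldsymbol{U}_i(\boldsymbol{\beta}; \hat{\boldsymbol{\eta}}^1, \hat{\boldsymbol{\gamma}}^1, \hat{\boldsymbol{\varepsilon}}) = \boldsymbol{0}$, using the representation of $\hat{w}_{ij}^{MR}$ from (\ref{eq:B9}) so that asymptotically the weights reduce to the correctly specified multi-level inverse probability weights (the contributions of the misspecified models in $\mathcal{P}_1, \mathcal{P}_2$ disappear at first order through $\hat{\boldsymbol{\varepsilon}}$). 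First, I would Taylor-expand $M^{-1/2} \sum_i \boldsymbol{U}_i$ jointly about $(\boldsymbol{\beta}_0, \boldsymbol{\eta}_0, \boldsymbol{\gamma}_0, \boldsymbol{0})$, yielding the usual four pieces: the oracle score $M^{-1/2}\sum_i \boldsymbol{Q}_i$ with $\boldsymbol{Q}_i = \boldsymbol{U}_i(\boldsymbol{\beta}_0; \boldsymbol{\eta}_0, \boldsymbol{\gamma}_0, \boldsymbol{0})$; a $\boldsymbol{\beta}$-Jacobian term converging to $\mathbb{E}\{\partial \boldsymbol{U}_i(\boldsymbol{\beta}_0)/\partial\boldsymbol{\beta}\}$ and multiplying $M^{1/2}(\hat{\boldsymbol{\beta}}_{MR}-\boldsymbol{\beta}_0)$; Jacobian-in-$\boldsymbol{\eta}^1,\boldsymbol{\gamma}^1$ terms multiplying $M^{1/2}(\hat{\boldsymbol{\eta}}^1-\boldsymbol{\eta}_0)$ and $M^{1/2}(\hat{\boldsymbol{\gamma}}^1-\boldsymbol{\gamma}_0)$; and a Jacobian-in-$\boldsymbol{\varepsilon}$ term multiplying $M^{1/2}\hat{\boldsymbol{\varepsilon}}$.

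The second step is to eliminate $M^{1/2}\hat{\boldsymbol{\varepsilon}}$ by substituting the linear expansion provided by Lemma \ref{lem:1}. This gives $M^{1/2}\hat{\boldsymbol{\varepsilon}} = \boldsymbol{G}^{-1}\{\text{empirical-process term} + \text{linear corrections in } M^{1/2}(\hat{\boldsymbol{\eta}}^1-\boldsymbol{\eta}_0) \text{ and } M^{1/2}(\hat{\boldsymbol{\gamma}}^1-\boldsymbol{\gamma}_0)\} + o_p(1)$. After plugging this in, the expanded estimating equation becomes a linear combination of $M^{-1/2}\sum_i \boldsymbol{Q}_i$ and of $M^{1/2}(\hat{\boldsymbol{\eta}}^1-\boldsymbol{\eta}_0)$, $M^{1/2}(\hat{\boldsymbol{\gamma}}^1-\boldsymbol{\gamma}_0)$ only, with all residuals $o_p(1)$. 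The third step uses the standard MLE expansions $M^{1/2}(\hat{\boldsymbol{\eta}}^1-\boldsymbol{\eta}_0) = \{\mathbb{E}(\boldsymbol{S}_{1,i}^{\otimes 2})\}^{-1} M^{-1/2}\sum_i \boldsymbol{S}_{1,i} + o_p(1)$ and the analog for $\hat{\boldsymbol{\gamma}}^1$ with score $\boldsymbol{S}_{2,i}$, so that the entire right-hand side becomes the empirical mean of a fixed i.i.d.\ function of cluster $i$.

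The fourth step is to identify the influence function. Solving for $M^{1/2}(\hat{\boldsymbol{\beta}}_{MR}-\boldsymbol{\beta}_0)$ and matching coefficients, one sees that the contribution multiplying $\boldsymbol{S}_{1,i}$ equals precisely $-\{\mathbb{E}(\boldsymbol{Q}_i \boldsymbol{S}_{1,i}^{\mathrm{T}})\}\{\mathbb{E}(\boldsymbol{S}_{1,i}^{\otimes 2})\}^{-1}$, and similarly for $\boldsymbol{S}_{2,i}$, because the Jacobian of $\boldsymbol{U}_i$ with respect to each nuisance parameter at truth coincides with the covariance $\mathbb{E}(\boldsymbol{Q}_i \boldsymbol{S}_{\cdot,i}^{\mathrm{T}})$ (an information-type identity used throughout the estimated-weights literature). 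Thus $M^{1/2}(\hat{\boldsymbol{\beta}}_{MR}-\boldsymbol{\beta}_0) = M^{-1/2}\sum_i \boldsymbol{Z}_i + o_p(1)$ with $\boldsymbol{Z}_i$ as stated, and the multivariate CLT together with Slutsky's theorem yields the claimed asymptotic normal law with variance $\mathrm{var}(\boldsymbol{Z}_i)$.

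The main obstacle is the bookkeeping in the second and fourth steps: verifying that the $\hat{\boldsymbol{\varepsilon}}$-Jacobian coefficient in the $\boldsymbol{U}_i$ expansion composed with $\boldsymbol{G}^{-1}$ from Lemma \ref{lem:1} cancels the na\"ive $\hat{\boldsymbol{\eta}}^1, \hat{\boldsymbol{\gamma}}^1$ Jacobians in exactly the right way to reduce everything to the orthogonal-projection form $\boldsymbol{Q}_i - \text{proj}_{\boldsymbol{S}_{1,i}}\boldsymbol{Q}_i - \text{proj}_{\boldsymbol{S}_{2,i}}\boldsymbol{Q}_i$. This requires the information-type identity that relates $\partial \boldsymbol{U}_i/\partial \boldsymbol{\eta}^1$ and $\partial \boldsymbol{U}_i/\partial \boldsymbol{\gamma}^1$ at the truth to $\mathbb{E}(\boldsymbol{Q}_i \boldsymbol{S}_{1,i}^{\mathrm{T}})$ and $\mathbb{E}(\boldsymbol{Q}_i \boldsymbol{S}_{2,i}^{\mathrm{T}})$, respectively, and it also requires a uniform-convergence check to upgrade the pointwise Taylor remainder to $o_p(1)$ after multiplying by $M^{1/2}$-rate estimators. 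Once these are established, assembling the influence function and applying the CLT is routine.
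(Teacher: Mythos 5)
Your proposal follows essentially the same route as the paper: Taylor-expand the weighted estimating equation around $(\boldsymbol{0},\boldsymbol{\eta}_{*},\boldsymbol{\gamma}_{*},\boldsymbol{\beta}_0)$, eliminate $M^{1/2}\hat{\boldsymbol{\varepsilon}}$ via Lemma \ref{lem:1}, substitute the score expansions for $\hat{\boldsymbol{\eta}}^1$ and $\hat{\boldsymbol{\gamma}}^1$, and invoke the generalized information equality to collapse the nuisance-Jacobian coefficients into $\mathbb{E}(\boldsymbol{Q}_i\boldsymbol{S}_{1,i}^{\mathrm{T}})$ and $\mathbb{E}(\boldsymbol{Q}_i\boldsymbol{S}_{2,i}^{\mathrm{T}})$. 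The one discrepancy is definitional: in the paper $\boldsymbol{Q}_i$ is not the oracle score alone but the oracle MIPW score plus the i.i.d.\ empirical-process component $-\boldsymbol{L}\boldsymbol{G}^{-1}\mathbbm{1}^{\mathrm{T}}\mathrm{diag}\{\cdot\}\mathbbm{1}$ inherited from the $\hat{\boldsymbol{\varepsilon}}$ expansion, which your bookkeeping step would need to absorb into $\boldsymbol{Q}_i$ to arrive at the stated variance.
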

\begin{proof}
Recall that $\hat{\boldsymbol{\beta}}_{MR}$ is the solution to the estimating equation:
\begin{align*}
    \mathbf{0}
    &= \sum_{i=1}^{M} \text{diag} \left\{R_{ij}C_{i} \hat{w}_{ij}^{MR} \right\} \boldsymbol{U}_{i}\left(\boldsymbol{\beta}_{MR}\right).
\end{align*}
We take Taylor expansion of the above estimating equation around $\left(\boldsymbol{0}^{\mathrm{T}}, \boldsymbol{\eta}_{*}^{\mathrm{T}}, \boldsymbol{\gamma}_{*}^{\mathrm{T}},  \boldsymbol{\beta}_{0}^{\mathrm{T}} \right)$ and apply the asymptotic expansion of $\hat{\boldsymbol{\varepsilon}}$ from Lemma \ref{lem:1}, which yields:
\begin{align*}
        \mathbf{0}
        &= N M^{-1/2} \sum_{i=1}^{M} \text{diag} \left\{R_{ij} C_{i}\hat{w}_{ij}^{MR} \right\} \boldsymbol{U}_{i}\left(\hat{\boldsymbol{\beta}}_{MR}\right) \\
        &= N M^{-1/2} \sum_{i=1}^{M} \text{diag} \left\{R_{ij}C_{i} \hat{p}_{i j} \frac{\chi^{1,1}\left(\hat{\boldsymbol{\eta}}^{1}, \hat{\boldsymbol{\gamma}}^{1}\right)}{\left\{\phi_{i j}^{1}\left(\hat{\boldsymbol{\eta}}^{1}\right) \lambda_{i}^{1}\left(\hat{\boldsymbol{\gamma}}^{1}\right)\right\}} \right\} \boldsymbol{U}_{i}\left(\hat{\boldsymbol{\beta}}_{MR}\right) \\
        &= N M^{-1/2} \sum_{i=1}^{M}            \text{diag} \left\{R_{ij}           C_{i}\left(\frac{1}     {\sum_{i=1}^{s} m_{i}}\right) \frac{\chi^{1,1}\left(\hat{\boldsymbol{\eta}}^{1}, \hat{\boldsymbol{\gamma}}^{1}\right)/\left\{\phi_{i j}^{1}\left(\hat{\boldsymbol{\eta}}^{1}\right) \lambda_{i}^{1}\left(\hat{\boldsymbol{\gamma}}^{1}\right)\right\}}{1+\hat{\boldsymbol{\varepsilon}}^{T} \hat{g}_{i j}(\hat{\boldsymbol{\eta}}, \hat{\boldsymbol{\gamma}}) /\left\{\phi_{i j}^{1}\left(\hat{\boldsymbol{\eta}}^{1}\right) \lambda_{i}^{1}\left(\hat{\boldsymbol{\gamma}}^{1}\right)\right\}}  \right\} \boldsymbol{U}_{i}\left(\hat{\boldsymbol{\beta}}_{MR}\right) \\
        &= M^{-1/2} \left(\frac{N}{\sum_{i=1}^{s} m_{i}}\right) \left\{\frac{1}{N} \sum_{i=1}^{M} \sum_{j=1}^{n_i} \phi_{i j}^{1}\left(\boldsymbol{\eta}_{0}\right) \lambda_{i}^{1}\left(\boldsymbol{\gamma}_{0}\right) \right\} \sum_{i=1}^{M} \text{diag} \left\{ \frac{R_{ij}C_{i}}{\phi_{i j}^{1}\left(\boldsymbol{\eta}_{0}\right) \lambda_{i}^{1}\left(\boldsymbol{\gamma}_{0}\right)}  \right\} \boldsymbol{U}_{i}\left(\boldsymbol{\beta}_0\right) \\
        &-   M^{-1/2}\left(\frac{N}{\sum_{i=1}^{s} m_{i}}\right) \left\{\frac{1}{N} \sum_{i=1}^{M} \sum_{j=1}^{n_i} \phi_{i j}^{1}\left(\boldsymbol{\eta}_{0}\right) \lambda_{i}^{1}\left(\boldsymbol{\gamma}_{0}\right) \right\} \sum_{i=1}^{M} \text{diag} \left\{ \frac{R_{ij} C_{i}\hat{\boldsymbol{\varepsilon}} g_{i j}(\boldsymbol{\eta}_{*}, \boldsymbol{\gamma}_{*}) }{\left[\phi_{i j}^{1}\left(\boldsymbol{\eta}_{0}\right) \lambda_{i}^{1}\left(\boldsymbol{\gamma}_{0}\right)\right]^2}  \right\} \boldsymbol{U}_{i}\left(\boldsymbol{\beta}_0\right)\\
        &+ \left(\frac{N}{\sum_{i=1}^{s} m_{i}}\right) \left[ \sum_{i=1}^{M} \text{diag} \left\{ \frac{R_{ij} C_{i}}{\phi_{i j}^{1}\left(\boldsymbol{\eta}_{0}\right) \lambda_{i}^{1}\left(\boldsymbol{\gamma}_{0}\right)} \left\{\frac{1}{N} \sum_{i=1}^{M} \sum_{j=1}^{n_i} \frac{ \partial \phi_{i j}^{1}\left(\boldsymbol{\eta}_{0}\right) }{\partial \boldsymbol{\eta}^{1}} \lambda_{i}^{1}\left(\boldsymbol{\gamma}_{0}\right)  \right\}^\mathrm{T} \right. \right. \\
        &- \left. \left. \frac{R_{ij}C_{i}\chi^{1,1}\left(\boldsymbol{\eta}_0, \boldsymbol{\gamma}_0\right) }{\left\{\phi_{i j}^{1}\left(\boldsymbol{\eta}_{0}\right)\right\}^2  \lambda_{i}^{1}\left(\boldsymbol{\gamma}_{0}\right)} \left\{\frac{ \partial \phi_{i j}^{1}\left(\boldsymbol{\eta}_{0}\right)}{\partial \boldsymbol{\eta}^{1}} \right\}^{\mathrm{T}} \right\} \boldsymbol{U}_{i}\left(\boldsymbol{\beta}_0\right) \right]  M^{-1/2} (\hat{\boldsymbol{\eta}}^1 - \boldsymbol{\eta}_0) \\
        &+ \left(\frac{N}{\sum_{i=1}^{s} m_{i}}\right) \left[ \sum_{i=1}^{M} \text{diag} \left\{ \frac{R_{ij}C_{i}}{\phi_{i j}^{1}\left(\boldsymbol{\eta}_{0}\right) \lambda_{i}^{1}\left(\boldsymbol{\gamma}_{0}\right)} \left\{\frac{1}{N} \sum_{i=1}^{M} \sum_{j=1}^{n_i} \phi_{i j}^{1}\left(\boldsymbol{\eta}_{0}\right)\frac{ \partial  \lambda_{i}^{1}\left(\boldsymbol{\gamma}_{0}\right)}{\partial \boldsymbol{\gamma}^{1}} \right\}^\mathrm{T} \right. \right. \\
        &- \left. \left. \frac{R_{ij} C_{i}\chi^{1,1}\left(\boldsymbol{\eta}_0, \boldsymbol{\gamma}_0\right) }{\phi_{i j}^{1}\left(\boldsymbol{\eta}_{0}\right) \left\{\lambda_{i}^{1}\left(\boldsymbol{\gamma}_{0}\right)\right\}^2 } \left\{\frac{ \partial \lambda_{i}^{1}\left(\boldsymbol{\gamma}_{0}\right)}{\partial \boldsymbol{\gamma}^{1}} \right\}^{\mathrm{T}} \right\} \boldsymbol{U}_{i}\left(\boldsymbol{\beta}_0\right) \right]  M^{-1/2} (\hat{\boldsymbol{\gamma}}^1 - \boldsymbol{\gamma}_0) \\
        &+ \left(\frac{N}{\sum_{i=1}^{s} m_{i}}\right) \left\{\frac{1}{N} \sum_{i=1}^{M} \sum_{j=1}^{n_i} \phi_{i j}^{1}\left(\boldsymbol{\eta}_{0}\right) \lambda_{i}^{1}\left(\boldsymbol{\gamma}_{0}\right) \right\} \left[ \sum_{i=1}^{M} \text{diag} \left\{ \frac{R_{ij}C_{i}}{\phi_{i j}^{1}\left(\boldsymbol{\eta}_{0}\right) \lambda_{i}^{1}\left(\boldsymbol{\gamma}_{0}\right)}  \right\} \frac{\partial \boldsymbol{U}_{i}\left(\boldsymbol{\beta}_0\right)}{\partial \boldsymbol{\beta}} \right] \\
        &\times M^{-1/2} (\hat{\boldsymbol{\beta}}_{MR} - \boldsymbol{\beta}_0) + o_p(1) \\
        &= M^{-1/2} \sum_{i=1}^{M} \text{diag} \left\{ \frac{R_{ij}C_{i}}{\phi_{i j}^{1}\left(\boldsymbol{\eta}_{0}\right) \lambda_{i}^{1}\left(\boldsymbol{\gamma}_{0}\right)}  \right\} \boldsymbol{U}_{i}\left(\boldsymbol{\beta}_0\right) -  \underbrace{\mathbb{E}\left[  \text{diag} \left\{\frac{ g_{ij}(\boldsymbol{\eta}_{*}, \boldsymbol{\gamma}_{*}) }{\phi_{i j}^{1}\left(\boldsymbol{\eta}_{0}\right) \lambda_{i}^{1}\left(\boldsymbol{\gamma}_{0}\right)}  \right\} \boldsymbol{U}_{i}\left(\boldsymbol{\beta}_0\right) \right]}_{=\boldsymbol{L}}  M^{1/2}\hat{\boldsymbol{\varepsilon}} \\
        &-  \mathbb{E}\left[ \text{diag} \left\{\frac{1}{ \phi_{ij}^{1}\left(\boldsymbol{\eta}_{0}\right)} \left\{\frac{ \partial \phi_{i j}^{1}\left(\boldsymbol{\eta}_{0}\right)}{\partial \boldsymbol{\eta}^{1}} \right\}^{\mathrm{T}} \right\} \boldsymbol{U}_{i}\left(\boldsymbol{\beta}_0\right)   \right] M^{1/2} (\hat{\boldsymbol{\eta}}^1 - \boldsymbol{\eta}_0)  \\
        &-\mathbb{E} \left[ \text{diag} \left\{ \frac{1}{\lambda_{i}^{1}\left(\boldsymbol{\gamma}_{0}\right)} \left\{\frac{ \partial \lambda_{i}^{1}\left(\boldsymbol{\gamma}_{0}\right)}{\partial \boldsymbol{\gamma}^{1}} \right\}^{\mathrm{T}} \right\} \boldsymbol{U}_{i}\left(\boldsymbol{\beta}_0\right)  \right] M^{1/2} (\hat{\boldsymbol{\gamma}}^1 - \boldsymbol{\gamma}_0) \\
        &+ \mathbb{E}\left[\frac{\partial\boldsymbol{U}_{i}\left(\boldsymbol{\beta}_0\right)}{\partial \boldsymbol{\beta}} \right]M^{1/2} (\hat{\boldsymbol{\beta}}_{MR} - \boldsymbol{\beta}_0) + o_p(1) \\
        &= M^{-1/2} \sum_{i=1}^{M} \text{diag} \left\{ \frac{R_{ij}C_{i}}{\phi_{i j}^{1}\left(\boldsymbol{\eta}_{0}\right) \lambda_{i}^{1}\left(\boldsymbol{\gamma}_{0}\right)}  \right\} \boldsymbol{U}_{i}\left(\boldsymbol{\beta}_0\right)  \\ 
        &-\mathbf{L} \mathbf{G}^{-1}M^{-1 / 2} \sum_{i=1}^{M}\mathbbm{1}^{\mathrm{T}} \text{diag} \left\{\frac{R_{ij}C_{i}-\phi_{ij}^{1}\left(\boldsymbol{\eta}_{0}\right)\lambda_{i}^{1}\left(\boldsymbol{\gamma}_{0}\right)}{\phi_{ij}^{1}\left(\boldsymbol{\eta}_{0}\right)\lambda_{i}^{1}\left(\boldsymbol{\gamma}_{0}\right)} \boldsymbol{g}_{ij}\left(\boldsymbol{\eta}_{*}, \boldsymbol{\gamma}_{*}\right)\right\} \mathbbm{1}\\
        &-\mathbf{L}\mathbf{G}^{-1}\mathbbm{1}^{\mathrm{T}} \mathbb{E}\left[\text{diag} \left\{\frac{\boldsymbol{g}_{ij}\left(\boldsymbol{\eta}_{*}, \boldsymbol{\gamma}_{*}\right)}{\phi_{ij}^{1}\left(\boldsymbol{\eta}_{0}\right)}\left\{\frac{\partial \phi_{ij}^{1}\left(\boldsymbol{\eta}_{0}\right)}{\partial \boldsymbol{\eta}^{1}}\right\}^{\mathrm{T}} \right\}\right] \mathbbm{1} M^{1 / 2}\left(\hat{\boldsymbol{\eta}}^{1}-\boldsymbol{\eta}_{0}\right) \\
        &-\mathbf{L}\mathbf{G}^{-1} \mathbbm{1}^{\mathrm{T}} \mathbb{E}\left[\text{diag} \left\{\frac{\boldsymbol{g}_{ij}\left(\boldsymbol{\eta}_{*}, \boldsymbol{\gamma}_{*}\right)}{\phi_{ij}^{1}\left(\boldsymbol{\eta}_{0}\right)}\left\{\frac{\partial \phi_{ij}^{1}\left(\boldsymbol{\eta}_{0}\right)}{\partial \boldsymbol{\eta}^{1}}\right\}^{\mathrm{T}} \right\}\right] \mathbbm{1} M^{1 / 2}\left(\hat{\boldsymbol{\eta}}^{1}-\boldsymbol{\eta}_{0} \right) \\
         &-  \mathbb{E}\left[ \text{diag} \left\{\frac{1}{ \phi_{ij}^{1}\left(\boldsymbol{\eta}_{0}\right)} \left\{\frac{ \partial \phi_{i j}^{1}\left(\boldsymbol{\eta}_{0}\right)}{\partial \boldsymbol{\eta}^{1}} \right\}^{\mathrm{T}} \right\} \boldsymbol{U}_{i}\left(\boldsymbol{\beta}_0\right)   \right] M^{1/2} (\hat{\boldsymbol{\eta}}^1 - \boldsymbol{\eta}_0)  \\
        &-\mathbb{E} \left[ \text{diag} \left\{ \frac{1}{\lambda_{i}^{1}\left(\boldsymbol{\gamma}_{0}\right)} \left\{\frac{ \partial \lambda_{i}^{1}\left(\boldsymbol{\gamma}_{0}\right)}{\partial \boldsymbol{\gamma}^{1}} \right\}^{\mathrm{T}} \right\} \boldsymbol{U}_{i}\left(\boldsymbol{\beta}_0\right)  \right] M^{1/2} (\hat{\boldsymbol{\gamma}}^1 - \boldsymbol{\gamma}_0) \\
        &+ \mathbb{E}\left[\frac{\partial \boldsymbol{U}_{i}\left(\boldsymbol{\beta}_0\right)}{\partial \boldsymbol{\beta}} \right]M^{1/2} (\hat{\boldsymbol{\beta}}_{MR} - \boldsymbol{\beta}_0) + o_p(1) \\
        &= M^{-1/2} \sum_{i=1}^{M} \underbrace{\text{diag} \left\{ \frac{R_{ij}C_{i}}{\phi_{i j}^{1}\left(\boldsymbol{\eta}_{0}\right) \lambda_{i}^{1}\left(\boldsymbol{\gamma}_{0}\right)}  \right\} \boldsymbol{U}_{i}\left(\boldsymbol{\beta}_0\right)  -\mathbf{L} \mathbf{G}^{-1}\mathbbm{1}^{\mathrm{T}} \text{diag} \left\{\frac{R_{ij}C_{i}-\phi_{ij}^{1}\left(\boldsymbol{\eta}_{0}\right)\lambda_{i}^{1}\left(\boldsymbol{\gamma}_{0}\right)}{\phi_{ij}^{1}\left(\boldsymbol{\eta}_{0}\right)\lambda_{i}^{1}\left(\boldsymbol{\gamma}_{0}\right)} \boldsymbol{g}_{ij}\left(\boldsymbol{\eta}_{*}, \boldsymbol{\gamma}_{*}\right)\right\} \mathbbm{1}}_{=\boldsymbol{Q}_i(\boldsymbol{\gamma}_{0}, \boldsymbol{\eta}_{0})}\\
        &- \mathbb{E}\left[\text{diag} \left\{\frac{1}{ \phi_{ij}^{1}\left(\boldsymbol{\eta}_{0}\right)} \left\{\frac{ \partial \phi_{i j}^{1}\left(\boldsymbol{\eta}_{0}\right)}{\partial \boldsymbol{\eta}^{1}} \right\}^{\mathrm{T}} \right\} \boldsymbol{U}_{i}\left(\boldsymbol{\beta}_0\right)  + \mathbf{L}\mathbf{G}^{-1}\mathbbm{1}^{\mathrm{T}} \text{diag} \left\{\frac{\boldsymbol{g}_{ij}\left(\boldsymbol{\eta}_{*}, \boldsymbol{\gamma}_{*}\right)}{\phi_{ij}^{1}\left(\boldsymbol{\eta}_{0}\right)}\left\{\frac{\partial \phi_{ij}^{1}\left(\boldsymbol{\eta}_{0}\right)}{\partial \boldsymbol{\eta}^{1}}\right\}^{\mathrm{T}} \right\}\mathbbm{1}\right]  \\
        &\times M^{1 / 2}\left(\hat{\boldsymbol{\eta}}^{1}-\boldsymbol{\eta}_{0}\right) \\
        &- \mathbb{E}\left[\text{diag} \left\{ \frac{1}{\lambda_{i}^{1}\left(\boldsymbol{\gamma}_{0}\right)} \left\{\frac{ \partial \lambda_{i}^{1}\left(\boldsymbol{\gamma}_{0}\right)}{\partial \boldsymbol{\gamma}^{1}} \right\}^{\mathrm{T}} \right\} \boldsymbol{U}_{i}\left(\boldsymbol{\beta}_0\right) + \mathbf{L}\mathbf{G}^{-1} \mathbbm{1}^{\mathrm{T}}\text{diag} \left\{\frac{\boldsymbol{g}_{ij}\left(\boldsymbol{\eta}_{*}, \boldsymbol{\gamma}_{*}\right)}{\phi_{ij}^{1}\left(\boldsymbol{\eta}_{0}\right)}\left\{\frac{\partial \phi_{ij}^{1}\left(\boldsymbol{\eta}_{0}\right)}{\partial \boldsymbol{\eta}^{1}}\right\}^{\mathrm{T}} \right\}\mathbbm{1}\right]  \\
        &\times M^{1 / 2}\left(\hat{\boldsymbol{\gamma}}^{1}-\boldsymbol{\gamma}_{0} \right) \\
        &+ \mathbb{E}\left[\frac{\partial \boldsymbol{U}_{i}\left(\boldsymbol{\beta}_0\right)}{\partial \boldsymbol{\beta}} \right]M^{1/2} (\hat{\boldsymbol{\beta}}_{MR} - \boldsymbol{\beta}_0) + o_p(1) \\
        &= M^{-1/2} \sum_{i=1}^{M} \boldsymbol{Q}_i(\boldsymbol{\gamma}_{0}, \boldsymbol{\eta}_{0}) \\
        &- \mathbb{E}\left[\text{diag} \left\{\frac{1}{ \phi_{ij}^{1}\left(\boldsymbol{\eta}_{0}\right)} \left\{\frac{ \partial \phi_{i j}^{1}\left(\boldsymbol{\eta}_{0}\right)}{\partial \boldsymbol{\eta}^{1}} \right\}^{\mathrm{T}} \right\} \boldsymbol{U}_{i}\left(\boldsymbol{\beta}_0\right)  + \mathbf{L}\mathbf{G}^{-1}\mathbbm{1}^{\mathrm{T}} \text{diag} \left\{\frac{\boldsymbol{g}_{ij}\left(\boldsymbol{\eta}_{*}, \boldsymbol{\gamma}_{*}\right)}{\phi_{ij}^{1}\left(\boldsymbol{\eta}_{0}\right)}\left\{\frac{\partial \phi_{ij}^{1}\left(\boldsymbol{\eta}_{0}\right)}{\partial \boldsymbol{\eta}^{1}}\right\}^{\mathrm{T}} \right\}\mathbbm{1}\right]  \\
        &\times M^{-1 / 2}\left\{E\left(\boldsymbol{S}_{1}^{\otimes 2}\right)\right\}^{-1} \sum_{i=1}^{M}\boldsymbol{S}_{1,i} \\
        &- \mathbb{E}\left[\text{diag} \left\{ \frac{1}{\lambda_{i}^{1}\left(\boldsymbol{\gamma}_{0}\right)} \left\{\frac{ \partial \lambda_{i}^{1}\left(\boldsymbol{\gamma}_{0}\right)}{\partial \boldsymbol{\gamma}^{1}} \right\}^{\mathrm{T}} \right\} \boldsymbol{U}_{i}\left(\boldsymbol{\beta}_0\right) + \mathbf{L}\mathbf{G}^{-1} \mathbbm{1}^{\mathrm{T}}\text{diag} \left\{\frac{\boldsymbol{g}_{ij}\left(\boldsymbol{\eta}_{*}, \boldsymbol{\gamma}_{*}\right)}{\phi_{ij}^{1}\left(\boldsymbol{\eta}_{0}\right)}\left\{\frac{\partial \phi_{ij}^{1}\left(\boldsymbol{\eta}_{0}\right)}{\partial \boldsymbol{\eta}^{1}}\right\}^{\mathrm{T}} \right\}\mathbbm{1}\right]  \\
       &\times M^{-1 / 2}\left\{E\left(\boldsymbol{S}_{2}^{\otimes 2}\right)\right\}^{-1} \sum_{i=1}^{M} \boldsymbol{S}_{2,i} \\  &+ \mathbb{E}\left[\frac{\partial \boldsymbol{U}_{i}\left(\boldsymbol{\beta}_0\right)}{\partial \boldsymbol{\beta}} \right]M^{1/2} (\hat{\boldsymbol{\beta}}_{MR} - \boldsymbol{\beta}_0) + o_p(1).
    \end{align*}
    The last equality holds by taking the Taylor expansion of the score $\boldsymbol{S}_{1,i}(\hat{\boldsymbol{\eta}}^1)$ and $\boldsymbol{S}_{2,i}(\hat{\boldsymbol{\gamma}}^1)$ around $\boldsymbol{S}_{1,i}(\boldsymbol{\eta}_0)$ and $\boldsymbol{S}_{2,i}(\boldsymbol{\gamma}_0)$, respectively, which leads to:
    \begin{align*}
        M^{1 / 2}\left(\hat{\boldsymbol{\eta}}^{1}-\boldsymbol{\eta}_{0}\right) 
        &= M^{-1 / 2}\left\{\mathbb{E}\left(\boldsymbol{S}_{1,i}^{\otimes 2}\right)\right\}^{-1} \sum_{i=1}^{M} \boldsymbol{S}_{1,i} + o_p(1), \\
        M^{1 / 2}\left(\hat{\boldsymbol{\gamma}}^{1}-\boldsymbol{\gamma}_{0}\right) 
        &= M^{-1 / 2}\left\{\mathbb{E}\left(\boldsymbol{S}_{2,i}^{\otimes 2}\right)\right\}^{-1} \sum_{i=1}^{M}\boldsymbol{S}_{2,i} + o_p(1).
    \end{align*}
    With additional algebra manipulation and applying the generalized information equality, i.e. Lemma 9.1 of \citet{tsiatis2006semiparametric}, we have first,
    \begin{align*}
        &\hspace{0.5cm} \mathbb{E}\left[\text{diag} \left\{\frac{1}{ \phi_{ij}^{1}\left(\boldsymbol{\eta}_{0}\right)} \left\{\frac{ \partial \phi_{i j}^{1}\left(\boldsymbol{\eta}_{0}\right)}{\partial \boldsymbol{\eta}^{1}} \right\}^{\mathrm{T}} \right\} \boldsymbol{U}_{i}\left(\boldsymbol{\beta}_0\right)  + \mathbf{L}\mathbf{G}^{-1}\mathbbm{1}^{\mathrm{T}} \text{diag} \left\{\frac{\boldsymbol{g}_{ij}\left(\boldsymbol{\eta}_{*}, \boldsymbol{\gamma}_{*}\right)}{\phi_{ij}^{1}\left(\boldsymbol{\eta}_{0}\right)}\left\{\frac{\partial \phi_{ij}^{1}\left(\boldsymbol{\eta}_{0}\right)}{\partial \boldsymbol{\eta}^{1}}\right\}^{\mathrm{T}} \right\}\mathbbm{1}\right]  \\
        &= -\mathbb{E}\left\{\frac{\partial \boldsymbol{Q}_i\left(\boldsymbol{\gamma}_{0}, \boldsymbol{\eta}_{0}\right)}{\partial \boldsymbol{\eta}^{1}}\right\}  \\
        &= \mathbb{E}\left(\boldsymbol{Q}_i \boldsymbol{S}_{1,i}^{\mathrm{T}}\right),
    \end{align*}
        and second, 
    \begin{align*}
        &\hspace{0.5cm} \mathbb{E}\left[\text{diag} \left\{ \frac{1}{\lambda_{i}^{1}\left(\boldsymbol{\gamma}_{0}\right)} \left\{\frac{ \partial \lambda_{i}^{1}\left(\boldsymbol{\gamma}_{0}\right)}{\partial \boldsymbol{\gamma}^{1}} \right\}^{\mathrm{T}} \right\} \boldsymbol{U}_{i}\left(\boldsymbol{\beta}_0\right) + \mathbf{L}\mathbf{G}^{-1} \mathbbm{1}^{\mathrm{T}}\text{diag} \left\{\frac{\boldsymbol{g}_{ij}\left(\boldsymbol{\eta}_{*}, \boldsymbol{\gamma}_{*}\right)}{\phi_{ij}^{1}\left(\boldsymbol{\eta}_{0}\right)}\left\{\frac{\partial \phi_{ij}^{1}\left(\boldsymbol{\eta}_{0}\right)}{\partial \boldsymbol{\eta}^{1}}\right\}^{\mathrm{T}} \right\}\mathbbm{1}\right] \\
        &= -\mathbb{E}\left\{\frac{\partial \boldsymbol{Q}_i\left(\boldsymbol{\gamma}_{0}, \boldsymbol{\eta}_{0}\right)}{\partial \boldsymbol{\gamma}^{1}}\right\}  \\
        &= \mathbb{E}\left(\boldsymbol{Q}_i \boldsymbol{S}_{2,i}^{\mathrm{T}}\right).
    \end{align*}
    Thus, we have 
    \begin{align*}
    \mathbf{0}=& M^{-1 / 2} \sum_{i=1}^{M}\left[\boldsymbol{Q}_{i}-\left\{\mathbb{E}\left(\boldsymbol{Q}_i \boldsymbol{S}_{1,i}^{\mathrm{T}}\right)\right\}\left\{\mathbb{E}\left(\boldsymbol{S}_{1,i}^{\otimes 2}\right)\right\}^{-1} \boldsymbol{S}_{1,i}-\left\{\mathbb{E}\left(\boldsymbol{Q}_i \boldsymbol{S}_{2,i}^{\mathrm{T}}\right)\right\}\left\{\mathbb{E}\left(\boldsymbol{S}_{2,i}^{\otimes 2}\right)\right\}^{-1} \boldsymbol{S}_{2,i}\right] \\
    &+\mathbb{E}\left\{\frac{\partial \boldsymbol{U}_i\left(\boldsymbol{\beta}_0\right)}{\partial \boldsymbol{\beta}}\right\} M^{1 / 2}\left(\hat{\boldsymbol{\beta}}_{\mathrm{MR}}-\boldsymbol{\beta}_{0}\right)+o_{p}(1)
    \end{align*}
    Rearranging the above equation gives the results. 
\end{proof}

\renewcommand{\theequation}{D.\arabic{equation}}
\setcounter{equation}{0}
\section{Derivation and Implementation of the EM algorithm}\label{appdx:D}

\subsection{Derivation of the EM algorithm}
\label{appdx:D1}

\begin{table}[h]
\caption{Different patterns of $C_i^O$ and $C_i$ with cluster size of 3. $C_i^O$ is the observed cluster-level missingness indicators and $C_i$ is the true cluster-level missingness indicators induced by the cluster-level missingness process. Each pattern illustrates different combination of $(\boldsymbol{R}_{i}, C_i^O, C_i)$. Because the observed data only gives information about $(\boldsymbol{R}_{i}, C_i^O)$, one cannot distinguish pattern 1 from pattern 2. This leads to potential misclassification of $C_i$, especially under small cluster size.}
\label{tab:s1}
\small
\begin{center}
\begin{tabular}{|c|ccc|}
\hline
Pattern & $R_{ij}$ & $C_i^O$ & $C_i$ \\ 
\hline
    & 0 & 0 & 0 \\
1   & 0 & 0 & 0 \\
    & 0 & 0 & 0 \\ 
\hline
    & 0 & 0 & 1 \\
2   & 0 & 0 & 1 \\
    & 0 & 0 & 1 \\
\hline
    & 1 & 1 & 1 \\
3   & 0 & 1 & 1 \\
    & 0 & 1 & 1 \\
\hline
\end{tabular}
\end{center}
\end{table}

Table \ref{tab:s1} summarizes different patterns of $C_i^O$ and $C_i$. To motivate the derivation for the EM algorithm, we consider partitioning the observed data into two groups: the first group contains clusters with $C_i^O = 1$ and the second group contains clusters with $C_i^O = 0$. When $C_i^O = 1$, we must have $C_i = 1$. But when $C_i^O = 0$, the true $C_i$ can either be 1 or 0. Hence, for cluster in the first group, the observed $C_i^O$ is always equal to the true $C_i$. For cluster in the second group, however, $C_i^O$ can be potentially misclassified. We treat $C_i$ as partially observed data. The ``complete'' data is then ($\boldsymbol{C}, \boldsymbol{R}, \boldsymbol{Y}, \boldsymbol{A}, \boldsymbol{Z}, \boldsymbol{X}$). The complete data likelihood is:
\begin{align*}
    &\hspace{0.3cm} L_c(\boldsymbol{C}, \boldsymbol{R}, \boldsymbol{Y}, \boldsymbol{A}, \boldsymbol{Z}, \boldsymbol{X}; \boldsymbol{\eta}, \boldsymbol{\gamma} ) \\
    &= L(\boldsymbol{C}, \boldsymbol{R} \mid \boldsymbol{Y}, \boldsymbol{A}, \boldsymbol{Z}, \boldsymbol{X}; \boldsymbol{\eta}, \boldsymbol{\gamma} ) L(\boldsymbol{Y}, \boldsymbol{A}, \boldsymbol{Z}, \boldsymbol{X}) \\
    &\propto L(\boldsymbol{C}, \boldsymbol{R} \mid \boldsymbol{Y}, \boldsymbol{A}, \boldsymbol{Z}, \boldsymbol{X}; \boldsymbol{\eta}, \boldsymbol{\gamma} )  \\
    &= L(\boldsymbol{C}, \boldsymbol{R} \mid \boldsymbol{A}, \boldsymbol{Z}, \boldsymbol{X}; \boldsymbol{\eta}, \boldsymbol{\gamma} ) \\
    &= \prod_{i=1}^{M}  L(\boldsymbol{R}_i, C_i \mid A_i, \boldsymbol{Z}_{i}, \boldsymbol{X}_{i}; \boldsymbol{\eta}, \boldsymbol{\gamma}) \\
    &= \prod_{i: C_i^O = 1}  L(\boldsymbol{R}_i \mid C_i, A_i, \boldsymbol{Z}_{i}, \boldsymbol{X}_{i}; \boldsymbol{\eta}, \boldsymbol{\gamma})L( C_i \mid A_i, \boldsymbol{Z}_{i}; \boldsymbol{\eta}, \boldsymbol{\gamma}) \\
    &\hspace{0.4cm} \prod_{i: C_i^O = 0} L(\boldsymbol{R}_i \mid C_i, A_i, \boldsymbol{Z}_{i}, \boldsymbol{X}_{i}; \boldsymbol{\eta}, \boldsymbol{\gamma})L( C_i \mid A_i, \boldsymbol{Z}_{i}; \boldsymbol{\eta}, \boldsymbol{\gamma}) \\
    &= \prod_{i: C_i^O = 1}  P(\boldsymbol{R}_i = \boldsymbol{r}_i \mid C_i = 1, A_i, \boldsymbol{Z}_{i}, \boldsymbol{X}_{i}; \boldsymbol{\eta})^{I(\boldsymbol{R}_i = \boldsymbol{r}_i)} P(C_i = 1 \mid A_i, \boldsymbol{Z}_{i}, ; \boldsymbol{\gamma}) \\
    &\hspace{0.4cm}
    \prod_{i: C_i^O = 0} {\underbrace{P(\boldsymbol{R}_i = \boldsymbol{0} \mid C_i = 0, A_i, \boldsymbol{Z}_{i}, \boldsymbol{X}_{i}; \boldsymbol{\eta})}_{= 1}}^{(1 -C_i)}
    P(C_i = 0 \mid A_i, \boldsymbol{Z}_{i}, ; \boldsymbol{\gamma})^{(1 -C_i)} \\ &\hspace{1.5cm} P(\boldsymbol{R}_i = \boldsymbol{0} \mid C_i = 1, A_i, \boldsymbol{Z}_{i}, \boldsymbol{X}_{i}; \boldsymbol{\eta})^{C_i} P(C_i = 1 \mid A_i, \boldsymbol{Z}_{i}, ; \boldsymbol{\gamma})^{C_i} \\
    &= \prod_{i: C_i^O = 1} \left \{ \prod_{j=1}^{n_i} P(R_{ij} = 1 \mid C_i = 1, A_i, \boldsymbol{Z}_{i}, \boldsymbol{X}_{i}; \boldsymbol{\eta})^{R_{ij}} P(R_{ij} = 0 \mid C_i = 1, A_i, \boldsymbol{Z}_{i}, \boldsymbol{X}_{i}; \boldsymbol{ \eta})^{1 - R_{ij}} \right\} \\ 
    &\hspace{1.6cm}  P(C_i = 1 \mid A_i, \boldsymbol{Z}_{i}, ; \boldsymbol{\gamma}) \\
    &\hspace{0.4cm} \prod_{i: C_i^O = 0} P(C_i = 0 \mid A_i, \boldsymbol{Z}_{i}, ; \boldsymbol{\gamma})^{(1 -C_i)} \left \{ \prod_{j=1}^{n_i} P(R_{ij} = 0 \mid C_i = 1, A_i, \boldsymbol{Z}_{i}, \boldsymbol{X}_{i}; \boldsymbol{\eta}) \right\}^{C_i}  \\
    &\hspace{1.5cm} P(C_i = 1 \mid A_i, \boldsymbol{Z}_{i}, ; \boldsymbol{\gamma})^{C_i} \\ 
    &= \prod_{i=1}^{M} \left\{ \prod_{j=1}^{n_i} P(R_{ij} = 1 \mid C_i = 1, A_i, \boldsymbol{Z}_{i}, \boldsymbol{X}_{i}; \boldsymbol{\eta})^{ R_{ij}}  P(R_{ij} = 0 \mid C_i = 1, A_i, \boldsymbol{Z}_{i}, \boldsymbol{X}_{i}; \boldsymbol{\eta})^{(1 - R_{ij})}  \right \}^{C_i^O} \\
    &\hspace{1.2cm}  P(C_i = 1 \mid A_i, \boldsymbol{Z}_{i}, ; \boldsymbol{\gamma})^{C_i^O}  P(C_i = 0 \mid A_i, \boldsymbol{Z}_{i}, ; \boldsymbol{\gamma})^{(1 -C_i)(1 - C_i^O)} \\
    &\hspace{1cm} \left \{ \prod_{j=1}^{n_i} P(R_{ij} = 0 \mid C_i = 1, A_i, \boldsymbol{Z}_{i}, \boldsymbol{X}_{i}; \boldsymbol{\eta}) \right\}^{C_i(1 - C_i^O)}  P(C_i = 1 \mid A_i, \boldsymbol{Z}_{i}, ; \boldsymbol{\gamma})^{C_i(1 - C_i^O)} \\
    &= \prod_{i=1}^{M}  \left\{ \prod_{j=1}^{n_i} \left[
    \operatorname{expit} \left( \boldsymbol{X}_{i j}^{*} \boldsymbol{\eta}\right)^{R_{ij}}  \left(1 - \operatorname{expit}\left(\boldsymbol{X}_{i j}^{*} \boldsymbol{\eta}\right) \right)^{(1 - R_{ij})} \right] \right \}^{C_i^O}\operatorname{expit}\left(\boldsymbol{Z}_{i}^{*} \boldsymbol{\gamma}\right)^{C_i^O}\\
    &\hspace{1cm}  \left(1 - \operatorname{expit}\left(\boldsymbol{Z}_{i}^{*} \boldsymbol{\gamma}\right)\right)^{(1 - C_i)(1 - C_i^O)}  \left\{\prod_{j=1}^{n_i} \left(1 - \operatorname{expit} \left(\boldsymbol{X}_{i j}^{*}  \boldsymbol{\eta} \right) \right)\right\}^{C_i(1 - C_i^O)} \operatorname{expit}\left(\boldsymbol{Z}_{i}^{*} \boldsymbol{\gamma}\right)^{C_i(1 - C_i^O)}.
\end{align*}
The complete data log likelihood is:
\begin{align*}
    &\hspace{0.3cm} \ell(\boldsymbol{C}, \boldsymbol{C}^{O}, \boldsymbol{R}, \boldsymbol{Y}, \boldsymbol{A}, \boldsymbol{Z}, \boldsymbol{X}; \boldsymbol{\eta}, \boldsymbol{\gamma} ) \\
    &= \sum_{i=1}^{M} C_i^O \left( \sum_{j=1}^{n_i}
      R_{ij} \log\left(\operatorname{expit}\left(\boldsymbol{X}_{i j}^{*} \boldsymbol{\eta}\right)\right) +  (1 - R_{ij})  \log\left(1 - \operatorname{expit}\left(\boldsymbol{X}_{i j}^{*} \boldsymbol{\eta}\right)  \right) \right) + C_i^O \log \left( \operatorname{expit}\left(\boldsymbol{Z}_{i}^{*} \boldsymbol{\gamma}\right) \right) + \\
    &\hspace{1cm}  (1 - C_i)(1 - C_i^O) \log\left(1 - \operatorname{expit}\left(\boldsymbol{Z}_{i}^{*} \boldsymbol{\gamma}\right)\right) + C_i(1 - C_i^O) \left(  \sum_{j=1}^{n_i} \log\left( 1 - \operatorname{expit} \left(\boldsymbol{X}_{i j}^{*}  \boldsymbol{\eta} \right) \right) \right) \\
    &\hspace{1.1cm} C_i(1 - C_i^O) \log\left(\operatorname{expit}\left(\boldsymbol{Z}_{i}^{*} \boldsymbol{\gamma}\right) \right)   \\
    &= \sum_{i=1}^{M} C_i^O \left\{ \log \left( \operatorname{expit}\left(\boldsymbol{Z}_{i}^{*} \boldsymbol{\gamma}\right) \right)  +  \sum_{j=1}^{n_i} R_{ij} \log\left(\operatorname{expit}\left(\boldsymbol{X}_{i j}^{*} \boldsymbol{\eta}\right)\right) +  (1 - R_{ij})  \log\left(1 - \operatorname{expit}\left(\boldsymbol{X}_{i j}^{*} \boldsymbol{\eta}\right)  \right) \right\} + \\
    &\hspace{1cm}  (1 - C_i)(1 - C_i^O) \log\left(1 - \operatorname{expit}\left(\boldsymbol{Z}_{i}^{*} \boldsymbol{\gamma}\right)\right) + \\
    &\hspace{1cm} C_i(1 - C_i^O)  \left( \log \left( \operatorname{expit}\left(\boldsymbol{Z}_{i}^{*} \boldsymbol{\gamma}\right) \right) + \sum_{j=1}^{n_i} \log \left(1 - \operatorname{expit} \left(\boldsymbol{X}_{i j}^{*}  \boldsymbol{\eta} \right)  \right)\right).
\end{align*}
The conditional expectation of the E step is given by
\begin{align*}
    &\hspace{0.4cm}Q\left(\boldsymbol{\eta}, \boldsymbol{\gamma}, \boldsymbol{\eta}^{(\nu)}, \boldsymbol{\gamma}^{(\nu)}, \boldsymbol{R}, \boldsymbol{Y}, \boldsymbol{A}, \boldsymbol{Z}, \boldsymbol{X}\right) \\
    &=\mathbb{E}_{\boldsymbol{\eta}^{(\nu)}, \boldsymbol{\gamma}^{(\nu)}}\left\{\ell(\boldsymbol{C},  \boldsymbol{R}, \boldsymbol{Y}, \boldsymbol{A}, \boldsymbol{Z}, \boldsymbol{X}; \boldsymbol{\eta}, \boldsymbol{\gamma}) \mid \boldsymbol{R}, \boldsymbol{Y}, \boldsymbol{A}, \boldsymbol{Z}, \boldsymbol{X}\right\} \\
    &= \sum_{i=1}^{M} C_i^O \left\{ \log \left( \operatorname{expit}\left(\boldsymbol{Z}_{i}^{*} \gamma\right) \right)  +  \sum_{j=1}^{n_i} R_{ij} \log\left(\operatorname{expit}\left(\boldsymbol{X}_{i j}^{*} \boldsymbol{\eta}\right)\right) +  (1 - R_{ij})  \log\left(1 - \operatorname{expit}\left(\boldsymbol{X}_{i j}^{*} \boldsymbol{\eta}\right)  \right) \right\} + \\
    &\hspace{1cm}  (1 - w_i^{(\nu)})(1 - C_i^O) \log\left(1 - \operatorname{expit}\left(\boldsymbol{Z}_{i}^{*} \gamma\right)\right) + \\
    &\hspace{1cm} w_i^{(\nu)}(1 - C_i^O)  \left( \log \left( \operatorname{expit}\left(\boldsymbol{Z}_{i}^{*} \gamma\right) \right) + \sum_{j=1}^{n_i} \log \left(1 - \operatorname{expit} \left(\boldsymbol{X}_{i j}^{*}  \boldsymbol{\eta} \right)  \right)\right),
\end{align*}
where
\begin{align*}
    w_{i}^{(\nu)} 
    &= \mathbb{E}_{\boldsymbol{\eta}^{(\nu)}, \boldsymbol{\gamma}^{(\nu)}}\left[C_i \mid  \boldsymbol{R}_i, Y_i, A_i, \boldsymbol{Z}_i, \boldsymbol{X}_i\right] \\
    &= P(C_i = 1 \mid \boldsymbol{R}_i, Y_i, A_i, \boldsymbol{Z}_i, \boldsymbol{X}_i; \boldsymbol{\eta}^{(\nu)}, \boldsymbol{\gamma}^{(\nu)}) \\
    &= P(C_i = 1 \mid \boldsymbol{R}_i, A_i, \boldsymbol{Z}_i, \boldsymbol{X}_i; \boldsymbol{\eta}^{(\nu)}, \boldsymbol{\gamma}^{(\nu)})  \\
    &= P(C_i = 1 \mid \boldsymbol{R}_i \neq \boldsymbol{0}, A_i, \boldsymbol{Z}_i, \boldsymbol{X}_i; \boldsymbol{\eta}^{(\nu)}, \boldsymbol{\gamma}^{(\nu)})I(\boldsymbol{R}_i \neq \boldsymbol{0}) \\
    &\hspace{0.3cm} + P(C_i = 1 \mid \boldsymbol{R}_i = \boldsymbol{0}, A_i, \boldsymbol{Z}_i, \boldsymbol{X}_i; \boldsymbol{\eta}^{(\nu)}, \boldsymbol{\gamma}^{(\nu)})I(\boldsymbol{R}_i = \boldsymbol{0}) \\
    &= I(\boldsymbol{R}_i \neq \boldsymbol{0}) + P(C_i = 1 \mid \boldsymbol{R}_i = \boldsymbol{0}, A_i, \boldsymbol{Z}_i, \boldsymbol{X}_i; \boldsymbol{\eta}^{(\nu)}, \boldsymbol{\gamma}^{(\nu)})I(\boldsymbol{R}_i = \boldsymbol{0}) \\
    &= C_i^O + P(C_i = 1 \mid \boldsymbol{R}_i = \boldsymbol{0}, A_i, \boldsymbol{Z}_i, \boldsymbol{X}_i; \boldsymbol{\eta}^{(\nu)}, \boldsymbol{\gamma}^{(\nu)})(1 - C_i^O) \\
    &= C_i^O + (1 - C_i^O) \times \\
    &\hspace{0.4cm} \left\{\frac{P(\boldsymbol{R}_i = \boldsymbol{0}|C_i = 1, A_i, \boldsymbol{Z}_i, \boldsymbol{X}_i; \boldsymbol{\eta}^{(\nu)})P(C_i = 1 \mid A_i, \boldsymbol{Z}_i, \boldsymbol{X}_i; \boldsymbol{\gamma}^{(\nu)})}{P(\boldsymbol{R}_i = \boldsymbol{0}, C_i = 1 \mid A_i, \boldsymbol{Z}_i, \boldsymbol{X}_i;\boldsymbol{\gamma}^{(\nu)},\boldsymbol{\eta}^{(\nu)}) +P(\boldsymbol{R}_i = \boldsymbol{0}, C_i = 0 \mid A_i, \boldsymbol{Z}_i, \boldsymbol{X}_i; \boldsymbol{\gamma}^{(\nu)}) } \right\} \\
    &= C_i^O + (1 - C_i^O) \left\{ \frac{\prod_{j=1}^{n_i} \left(1 -  \operatorname{expit}(\boldsymbol{X}_{ij}^* \boldsymbol{\eta}^{(\nu)})\right) \operatorname{expit}(\boldsymbol{Z}_{i}^* \boldsymbol{\gamma}^{(\nu)})}{\prod_{j=1}^{n_i} \left( 1 -  \operatorname{expit}(\boldsymbol{X}_{ij}^* \boldsymbol{\eta}^{(\nu)})\right) \operatorname{expit}(\boldsymbol{Z}_{i}^* \boldsymbol{\gamma}^{(\nu)}) + 1 - \operatorname{expit}(\boldsymbol{Z}_{i}^* \boldsymbol{\gamma}^{(\nu)}) }\right\} \\
    &= C_i^O + (1 - C_i^O) \left\{\frac{\prod_{j=1}^{n_i} \left( 1 -  \operatorname{expit}(\boldsymbol{X}_{ij}^* \boldsymbol{\eta}^{(\nu)})\right) \operatorname{expit}(\boldsymbol{Z}_{i}^* \boldsymbol{\gamma}^{(\nu)})}{1 - \operatorname{expit}(\boldsymbol{Z}_{i}^* \boldsymbol{\gamma}^{(\nu)}) \left[ 1 -  \prod_{j=1}^{n_i} \left(1 -  \operatorname{expit}(\boldsymbol{X}_{ij}^* \boldsymbol{\eta}^{(\nu)})\right) \right]}\right\}
\end{align*}

\subsection{Implementation of the EM algorithm}
\label{appdx:D2}

The algorithm below summarizes the implementation of the EM algorithm.

\begin{algorithm}[H]
\linespread{1.5}\selectfont
\caption{Implementation of the EM algorithm}\label{alg:EM}
\begin{algorithmic}[1]
\State Initialize $\nu = 0$ and $\varepsilon = 10^{-6}$
\State Initialize $\boldsymbol{\gamma}^{(\nu)}$ and $\boldsymbol{\eta}^{(\nu)}$ by fitting cluster- and individual-level PS models based on $C_i^O$ 
\While{TRUE} 
\State Compute $w_i^{(\nu)}$ based on $(\boldsymbol{\gamma}^{(\nu)}, \boldsymbol{\eta}^{(\nu)})$
\State $\boldsymbol{\gamma}^{(\nu+1)}= \underset{\boldsymbol{\gamma}}{\operatorname{argmax}} \hspace{0.1cm}Q\left(\boldsymbol{\gamma}, \boldsymbol{\eta}^{(\nu)}, \boldsymbol{C}^{O}, \boldsymbol{R}, \boldsymbol{Y}, \boldsymbol{A}, \boldsymbol{Z}, \boldsymbol{X}\right)$ 
\State $\boldsymbol{\eta}^{(\nu+1)}= \underset{\boldsymbol{\eta}}{\operatorname{argmax}} \hspace{0.1cm}Q\left(\boldsymbol{\gamma}^{(\nu+1)}, \boldsymbol{\eta},  \boldsymbol{C}^{O}, \boldsymbol{R}, \boldsymbol{Y}, \boldsymbol{A}, \boldsymbol{Z}, \boldsymbol{X}\right)$ 
\If{$\| \ell_c\left(\boldsymbol{\gamma}^{(\nu)}, \boldsymbol{\eta}^{(\nu)}\right) - \ell_c \left(\boldsymbol{\gamma}^{(\nu + 1)}, \boldsymbol{\eta}^{(\nu + 1)}\right)\|^2  \leq \varepsilon$}
\State break
\EndIf
\State $\nu = \nu + 1$
\EndWhile
\State Output: $\hat{\boldsymbol{\eta}}^{EM} = \boldsymbol{\eta}^{(\nu+1)}$, $\hat{\boldsymbol{\gamma}}^{EM} = \boldsymbol{\gamma}^{(\nu+1)}$
\end{algorithmic}
\end{algorithm}

\renewcommand{\theequation}{E.\arabic{equation}}
\setcounter{equation}{0}
\section{Details of Simulations Setup and Results} 
\label{appdx:E}
\subsection{Descriptive summaries of the outcome and baseline covariates of the Treatment of  Anemia in Malaria-Endemic Ghana Study} \label{appdx:E1}

\newpage

\begin{table}\label{tab:s2}
 \centering
 \def\~{\hphantom{0}}
 \begin{minipage}{175mm}
  \caption{Descriptive summary statistics of outcomes ($Y_{ij}$), cluster-level covariates ($\boldsymbol{Z}_{i}$), and individual-level covariates ($\boldsymbol{X}_{ij}$)}
  \begin{tabular*}{\textwidth}{@{}l@{\extracolsep{\fill}}c@{\extracolsep{\fill}}c@{\extracolsep{\fill}}}
  \hline
  \hline
  & {Iron Group}  & {No Iron Group} \\ 
  \cline{2-2} \cline{3-3} \\ [-10pt]
  & {780 clusters, 967 children} & {772 clusters, 991 children}  \\ 
  \hline
  & \multicolumn{2}{c}{Descriptive statistics of outcomes}\\ [1pt]
 \cline{2-3} \\ [-6pt]
 Malaria, incidence/100 Child-Years  ($Y_{ij}$) & 60.9 & 65.1  \\
 No. of missing clusters (\%) & 12 (1.5) & 19 (2.5) \\
 No. of missing participants (\%) & 25 (2.7) & 29 (2.9) \\
 \hline
 & \multicolumn{2}{c}{Descriptive statistics of cluster-level covariates}\\ [1pt]
 \cline{2-3} \\ [-6pt]
 Median cluster size (range) & 1 (1 to 4) & 1 (1 to 5) \\
 Household education, yes  & 66.5 \%  & 67.6 \%  \\
 Wealth, high & 51.2 \% & 50.9 \%  \\
 Complementary foods $\leq$ 6 mo  & 87.8 \% & 87.2 \% \\
 Household size, mean (sd)  & 6.1 (2.8) & 6.0 (2.8) \\ 
 \hline
 & \multicolumn{2}{c}{Descriptive statistics of individual-level covariates}\\ [1pt]
 \cline{2-3} \\ [-6pt]
 Sex, male & 51.3 \% & 50.7 \% \\
 Consumed at least 1 iron-fortified product & & \\ [-3pt]
 in previous 7 d prior to enrollment & 1.9 \% & 2.8 \% \\
 Age in month, mean (sd) {[}range{]} & 19.5 (8.6) {[}6 to 46{]} & 19.4 (8.6) {[}6 to 36{]} \\
 Wasting z score, mean (sd) & -0.64 (0.99) & -0.62 (0.97) \\
 Stunted growth z score, mean (sd) & -0.90 (1.18) & -0.88 (1.23) \\
 Underweight z score, mean (sd) & -0.93 (0.99) & -0.91 (0.97) \\
 Hemoglobin level, g/dL, mean (sd) & 10.3 (1.3) & 10.3 (1.3) \\
 \hline
\end{tabular*}
\end{minipage}
\vspace*{-6pt}
\end{table}

\subsection{Data generating processes}\label{appdx:E2}

\subsubsection{$\beta_A = 0$}

We generated the covariates to match summary statistics of Table S2. The continuous covariates, \textit{wasting z score}, \textit{stunted growth z score}, and \textit{underweight z score} were simulated from the multivariate normal distribution with corresponding observed sample mean, variance, and an assumed common correlation of 0.1. \textit{Age} was simulated from the truncated normal distribution with observed sample mean, variance, and bounds. \textit{Hemoglobin level} was simulated from the normal distribution with the observed sample mean and variance. For binary covariates, \textit{household education}, \textit{wealth}, \textit{complementary foods $\leq$ 6 months}, and \textit{gender} were simulated from the Bernoulli distribution with the observed proportion. \textit{Consumed at least 1 iron-fortified product} was generated under the logistic-normal model \citep{donner2000design} with an assumed between-cluster standard deviation of 0.05. For count data, \textit{cluster size} and \textit{household size} were simulated from a discrete uniform distribution with corresponding range. 

The treatment assignment ($A_i$) was simulated from the Bernoulli distribution with probability $p = 0.5$. We treated the primary outcome $Y_{ij}$ (incidence of malaria per 100 child-year) as a continuous variable, which was generated as
\begin{align} \label{example_null:OM1}
    Y_{ij} = \beta_{I}^{*} + \beta_{A}^{*} A_i + \boldsymbol{Z}_{i}^{O}\boldsymbol{\beta}_{Z} + \boldsymbol{X}_{ij}^{O}\boldsymbol{\beta}_{X} +   A_i\boldsymbol{Z}_{i}^{O}\boldsymbol{\beta}_{AZ} + A_i\boldsymbol{X}_{ij}^{O}\boldsymbol{\beta}_{AX} + \delta_{i} + \varepsilon_i,
\end{align}
where $\boldsymbol{Z}_{i}^{O}$ contained \textit{household size}, \textit{household education}, and \textit{wealth}, and $\boldsymbol{X}_{ij}^{O}$ contained \textit{age}, \textit{wasting z score}, and \textit{stunted growth z score}. $\delta_{i} \sim N(0,\sigma^2_\delta)$ was the cluster random intercept and $\varepsilon_i \sim N(0,\sigma^2_\varepsilon)$ was the residual standard error. 

The true marginal mean model induced by marginalizing over $(\boldsymbol{Z}_{i}^{O},\boldsymbol{X}_{ij}^{O}, A_i\boldsymbol{Z}_{i}^{O},A_i\boldsymbol{X}_{ij}^{O})$, the random intercept, and residual standard error was  $\mathbb{E}[Y_{ij}|A_i] = \beta_{I} + \beta_{A} A_i$. The study concluded that daily use of MNP with iron did not increase the risk of malaria during the post-intervention period, so $(\beta_I^*, \beta_A^*, \boldsymbol{\beta}_Z^{\mathrm{T}}, \boldsymbol{\beta}_X^{\mathrm{T}}, \boldsymbol{\beta}_{AZ}^{\mathrm{T}}, \boldsymbol{\beta}_{AX}^{\mathrm{T}})$ were chosen such that the intercept ($\beta_I$) was 63.5 and the marginal treatment effect ($\beta_A$) was 0. . The parameter values were $\beta_I^* = 43.5$, $\beta_A^* = 0.5$, $\boldsymbol{\beta}_Z^{\mathrm{T}} = (0.5, 0.8, 0.7)$, $\boldsymbol{\beta}_X^{\mathrm{T}} = (0.8, 1.5, -0.5)$, $\boldsymbol{\beta}_{AZ}^{\mathrm{T}} = (0.3, -1.5, 0.2)$, $\boldsymbol{\beta}_{AX}^{\mathrm{T}} = (-0.1, 0.5, -1.1)$, which yielded $(\beta_I, \beta_A) = (63.5, 0)$.

Missing outcome data processes were induced through the following logistic models:
\begin{equation} \label{example_null:PS1}
    \text{logit}\left(\lambda_i\left(A_{i},\boldsymbol{Z}_{i}^{C}; \boldsymbol{\gamma}\right)\right) = \gamma_{I} + \gamma_{A} A_i + \boldsymbol{Z}_{i}^{C}\boldsymbol{\gamma}_{\boldsymbol{Z}} + A_{i}\boldsymbol{Z}_{i}^{C}\boldsymbol{\gamma}_{A\boldsymbol{Z}},
\end{equation}
\begin{equation} \label{example_null:PS2}
    \text{logit}\left(\phi_{ij}\left(A_{i}, Z_{i}^{I}, \boldsymbol{X}_{ij}^{I} \mid C_{i}=1 ; \boldsymbol{\eta}\right)\right) = \eta_{I} + \eta_{A} A_i + \eta_{Z}Z_{i}^{I} + \boldsymbol{X}_{ij}^{I}\boldsymbol{\eta}_{X} + A_i\boldsymbol{X}_{ij}^{I}\boldsymbol{\eta}_{AX}.
\end{equation}
$\boldsymbol{Z}_{i}^{C}$ included \textit{household size}, \textit{household education}, and \textit{wealth}. $Z_{i}^{I}$ included \textit{Complementary foods $\leq$ 6 mo}. $\boldsymbol{X}_{ij}^{I}$ included \textit{wasting z score} and \textit{stunted growth z score}. Around 2\% of the clusters and 2.8\% of the overall participants were missing in the original study. For illustration, we inflated the missingness by choosing $\boldsymbol{\gamma}$ and $\boldsymbol{\eta}$ such that 12\% of the clusters were missing and 30\% of the overall participants had missing outcomes. The parameter values for the PS were $\{\gamma_I = 1.10, \gamma_A = -0.29, \boldsymbol{\gamma_Z}^{\mathrm{T}} = (0.18, -0.29, -0.22), \boldsymbol{\gamma}_{A\boldsymbol{Z}}^{\mathrm{T}} = (0.26, -0.51, -0.36)\}$ and $\{\eta_I = 1.73, \eta_A = -0.36, \eta_Z^{\mathrm{T}} = 0.10, \boldsymbol{\eta}_{\boldsymbol{X}}^{\mathrm{T}} = (-0.11, 0.18), \boldsymbol{\eta}_{A\boldsymbol{X}}^{\mathrm{T}} = (-0.36, 0.41)\}$, yielding $P(C_i = 1| A_i, \boldsymbol{Z}_i; \boldsymbol{\gamma}) = 0.14$ and $P(R_{ij} = 1|C_i = 1, A_i, \boldsymbol{Z}_i, \boldsymbol{X}_i; \boldsymbol{\eta}) = 0.20$.

\subsubsection{$\beta_A = 1.5$}
The data generating processes for $\beta_A = 1.5$ were similar to those for $\beta_A = 0$. The treatment assignment ($A_i$) was simulated from the Bernoulli distribution with probability $p = 0.5$. For the covariates, each participant had a vector of individual-level covariates $\boldsymbol{X}_{ij} = (X_{ij}^{1}, X_{ij}^{2}, X_{ij}^{3}, X_{ij}^{4})$, where $(X_{ij}^{1}, X_{ij}^{2}, X_{ij}^{3}) $ were drawn from a multivariate normal distribution with standard error $(1, 1.2, 0.8)$ and common correlation of 0.1 and $X_{ij}^{4} = (X_{ij}^1)^{2}$. Each cluster had a vector of cluster-level covariates $\boldsymbol{Z}_{i} = (Z_{i}^{1}, Z_{i}^{2}, Z_{i}^{3}, Z_{i}^{4})$. $(Z_{i}^{1}, Z_{i}^{2})$ were summary statistics of  $(X_{ij}^{1}, X_{ij}^{2})$, i.e., $Z_{i}^{k} = \frac{\sum_{j=1}^{n_i} X_{ij}^{k}}{n_i}$ for $k = 1, 2$. $Z_i^3$ was simulated from a Poisson distribution with rate 1.2 and $Z_i^4 \sim N(1.2,1)$. We considered the following outcome model:
\begin{align} \label{example_alt:OM1}
    Y_{ij} = \beta_{I}^{*} + \beta_{A}^{*} A_i + \boldsymbol{Z}_{i}\boldsymbol{\beta}_{Z} + \boldsymbol{X}_{ij}\boldsymbol{\beta}_{X}  +  A_i\boldsymbol{Z}_{i}\boldsymbol{\beta}_{AZ} + A_i\boldsymbol{X}_{ij} \boldsymbol{\beta}_{AX} + \delta_{i} + \varepsilon_i,
\end{align}
where $Y_{ij}$ was a continuous outcome, $\delta_{i} \sim N(0,\sigma^2_\delta)$ was the cluster random intercept, and $\varepsilon_i \sim N(0,\sigma^2_\varepsilon)$ was the residual standard error. The true marginal mean model induced by marginalizing over $(\boldsymbol{Z}_{i},\boldsymbol{X}_{ij}, A_i\boldsymbol{Z}_{i}, A_i\boldsymbol{X}_{ij})$, the random intercept, and residual standard error was  $\mathbb{E}[Y_{ij}|A_i] = \beta_{I} + \beta_{A} A_i$. The parameter values were $\beta_I^* = 0$, $\beta_A^* = 1.5$, $\boldsymbol{\beta}_Z^{\mathrm{T}} = (2, -2.5, 1, -1)$, $\boldsymbol{\beta}_X^{\mathrm{T}} = (1, 1.2, 0.5, -0.5)$, $\boldsymbol{\beta}_{AZ}^{\mathrm{T}} = (0.8, -0.4, 1, -1)$, $\boldsymbol{\beta}_{AX}^{\mathrm{T}} = (0.5, 0.3, 1, -1)$, which yielded $(\beta_I, \beta_A) = (0, 1.5)$. 

Missing outcome data processes were induced through the following logistic models:
\begin{equation} \label{example_alt:PS1}
    \text{logit}\left(\lambda_i\left(A_{i},\boldsymbol{Z}_{i}^{C}; \boldsymbol{\gamma}\right)\right) = \gamma_{I} + \gamma_{A} A_i + \boldsymbol{Z}_{i}^{C}\boldsymbol{\gamma}_{\boldsymbol{Z}} + A_{i}\boldsymbol{Z}_{i}^{C}\boldsymbol{\gamma}_{A\boldsymbol{Z}},
\end{equation}
\begin{equation} \label{example_alt:PS2}
    \text{logit}\left(\phi_{ij}\left(A_{i}, \boldsymbol{Z}_{i}^{I}, \boldsymbol{X}_{ij}^{I} \mid C_{i}=1 ; \boldsymbol{\eta}\right)\right) = \eta_{I} + \eta_{A} A_i + \boldsymbol{Z}_{i}^{I}\boldsymbol{\eta}_{Z} + \boldsymbol{X}_{ij}^{I}\boldsymbol{\eta}_{X} + A_i\boldsymbol{Z}_{i}^{I}\boldsymbol{\eta}_{AZ} + A_i\boldsymbol{X}_{ij}^{I}\boldsymbol{\eta}_{AX},
\end{equation}
where $\boldsymbol{Z}_i^{C} = (Z_i^3, Z_i^4)$, $\boldsymbol{Z}_i^{I} = Z_i^3$, and $\boldsymbol{X}_{ij}^{I} = (X_{ij}^1, X_{ij}^2, X_{ij}^3, X_{ij}^4)$. The parameter values for the PS were $\{\gamma_I = 2.44$, $\gamma_A = 0.18$, $\gamma_{\boldsymbol{Z}} = (0.12, -0.39)$, $\gamma_{A\boldsymbol{Z}} = (-0.22, -0.29)\}$ and $\{\eta_I = 1.73$, $\eta_A = -0.22$, $\eta_Z = -0.16$, $\boldsymbol{\eta}_{\boldsymbol{X}} = (0.18, 0.26, 0.03, 0.18)$, $\eta_{AZ} = -0.05$, $\boldsymbol{\eta}_{A\boldsymbol{X}} = (0.18, 0.26, -0.22, -0.29)\}$, yielding $P(C_i = 1| A_i, \boldsymbol{Z}_i; \boldsymbol{\gamma}) = 0.14$ and $P(R_{ij} = 1|C_i = 1, A_i, \boldsymbol{Z}_i, \boldsymbol{X}_i; \boldsymbol{\eta}) = 0.23$.

\subsubsection{Varying Cluster Sizes and ICC values}

To examine the impact of varying cluster sizes, we considered $M= 1552$ with $n_i \sim DU(1,4)$, $n_i \sim DU(1,5)$, and $n_i = 3$ (corresponding to small cluster sizes with large number of clusters), and $M = 300$ with $n_i \sim DU(30, 50)$ (corresponding to large cluster sizes with medium number of clusters), where $DU$ is the discrete uniform distribution. The iron-Malaria study reported an intracluster correlation coefficient (ICC) of 0.0804 \citep{ratovoson2022proactive}. ICC was calculated as $\frac{\sigma^2_{\delta}}{\sigma^2_{\delta} + \sigma^2_\varepsilon}$. We set $\sigma^2_\varepsilon$ to 0.5, which yielded $\sigma^2_\delta \approx 0.043$. We also considered large ICC value (0.2) by setting $(\sigma_\epsilon^2, \sigma_\delta^2) = (5,1.25)$. To summarize, our simulation scenario consist of $ 2 \times 3 = 6$ combinations of different marginal treatment effects (null and alternative) and varying cluster sizes (3 levels). 

\subsection{Analysis Approaches}

We compared the following four analysis approaches. First, we carried out an unweighted CC-GEE analysis. Second, we applied the standard IPW-GEE method, where the PS was estimated by the unconditional logistics model with the same functional form as Model (\ref{example_null:PS2}) for the $\beta_A = 0$ setting and Model (\ref{example_alt:PS2}) for the $\beta_A = 1.5$ setting but ignored cluster-level missingness. Third, we employed the MIPW-GEE method, where both the cluster- and individual-level PS models were correctly specified. To estimate the parameters in the PS models, we fitted the standard logistic regression models based on $C_i^O$ (denoted as MIPW-GEE-no-EM) and also applied the EM algorithm (denoted as MIPW-GEE-EM). Lastly, we implemented our proposed MMR-GEE estimator by specifying $\mathcal{P}_1 = \{\lambda_{i}^{k}(\boldsymbol{\gamma}^{k}), k = 1, 2\}$ and $\mathcal{P}_2 = \{\phi_{ij}^{\ell}(\boldsymbol{\eta}^{\ell}), \ell = 1, 2\}$ with all parameters in $\mathcal{P}_1$ and $\mathcal{P}_2$ estimated by the EM algorithm. Both $\mathcal{P}_1$ and $\mathcal{P}_2$ contained one correctly specified and one misspecified models. 

For $\beta_A = 0$ setting, $\lambda_{i}^{1}(\boldsymbol{\gamma}^{1})$ and $\phi_{ij}^{1}(\boldsymbol{\eta}^{1})$ were the correctly specified models. $\lambda_{i}^{2}(\boldsymbol{\gamma}^{2})$ failed to include \textit{household size}, \textit{household education}, \textit{wealth}, and their interactions with the treatment indicator. $\phi_{ij}^{2}(\boldsymbol{\eta}^{2})$ 
failed to include \textit{wasting z score}, \textit{stunted growth z score}, and all treatment-covariates interaction terms. Moreover, both $\lambda_{i}^{2}(\boldsymbol{\gamma}^{2})$ and $\phi_{ij}^{2}(\boldsymbol{\eta}^{2})$ contained additional covariates, with $\lambda_{i}^{2}(\boldsymbol{\gamma}^{2})$ including \textit{complementary foods $\leq$ 6 months} and its treatment interaction term and $\phi_{ij}^{2}(\boldsymbol{\eta}^{2})$ including
\textit{age}, \textit{underweight z score}. For $\beta_A = 1.5$ setting, $\lambda_{i}^{1}(\boldsymbol{\gamma}^{1})$ and $\phi_{ij}^{1}(\boldsymbol{\eta}^{1})$ were the correctly specified PS  models. $\lambda_{i}^{2}(\boldsymbol{\gamma}^{2})$ failed to include $(Z_3, Z_4, AZ_3, AZ_4)$ and $\phi_{ij}^{2}(\boldsymbol{\eta}^{2})$ 
failed to include $(Z_3, X_1, X_3, X_4, AZ_3, AX_1, AX_3, AX_4)$. Furthermore, both $\lambda_{i}^{2}(\boldsymbol{\gamma}^{2})$ and $\phi_{ij}^{2}(\boldsymbol{\eta}^{2})$ contained additional covariates
$(Z_1, AZ_1)$.

\subsection{Simulation results}
Figures (\ref{fig:null2}) and (\ref{fig:null3}) are the empirical distribution of effect estimates under the null with $\{n_i = 3, M = 1552, ICC = 0.0804\}$ and $\{n_i \sim DU(30,50), M = 300, ICC = 0.2\}$, respectively. Figures (\ref{fig:alt1}),(\ref{fig:alt2}), and (\ref{fig:alt3}) are the empirical distribution of effect estimates under the alternative with $\{n_i \sim DU(1,4), M = 1552, ICC = 0.0804\}$, $\{n_i = 3, M = 1552, ICC = 0.0804\}$ and $\{n_i \sim DU(30,50), M = 300, ICC = 0.2\}$, respectively. 

\newpage
\begin{figure}
    \includegraphics[scale = 0.7]{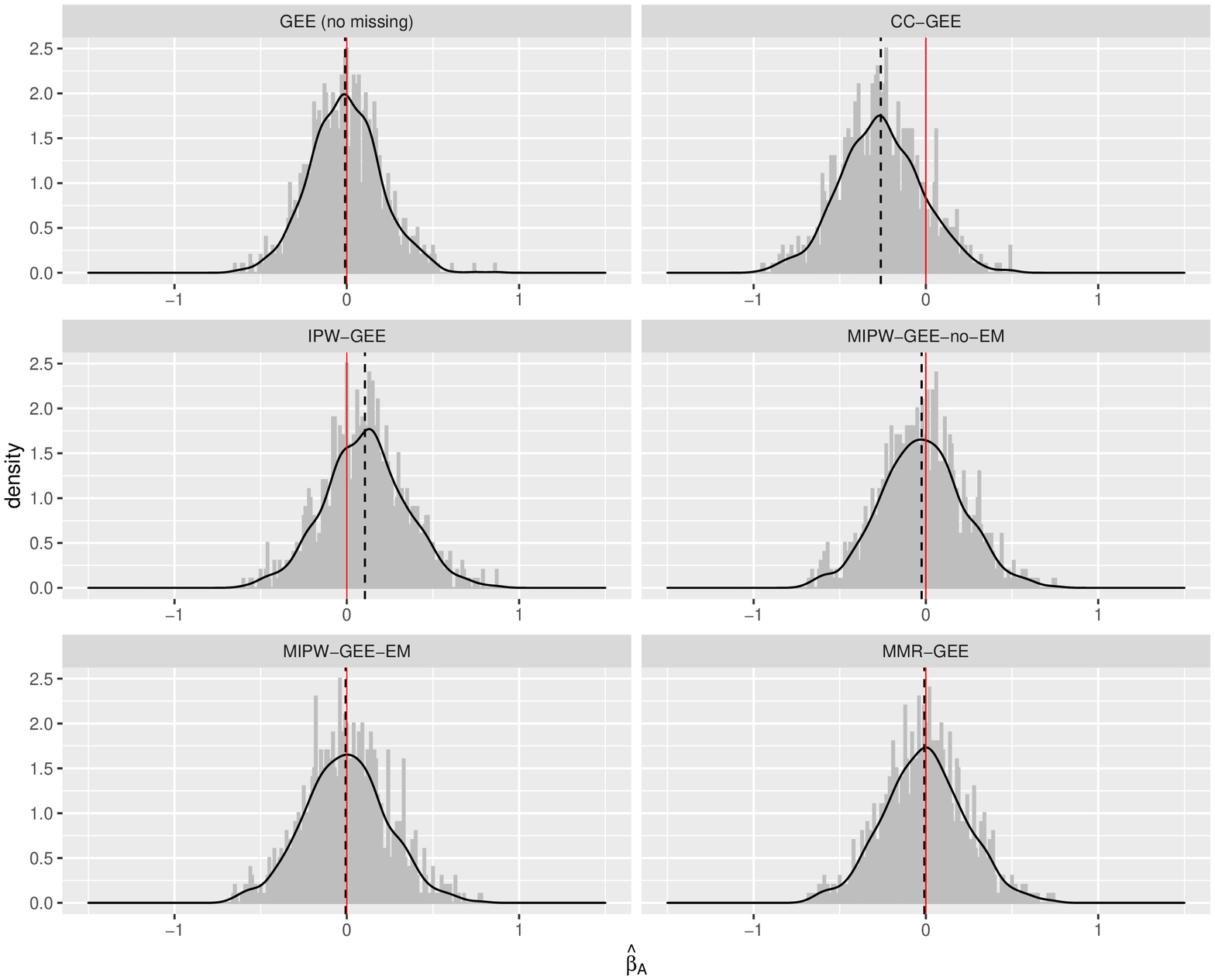}
    \caption{Empirical distribution of the estimated marginal treatment effect ($\hat{\beta}_A$) based on 1,000 replicates under $n_i =3, M = 1552, ICC = 0.08$ setting. The red line denotes the true marginal treatment effect (0). The dotted line denotes the empirical mean of the estimated marginal treatment effect.}
    \label{fig:null2}
\end{figure}

\newpage
\begin{figure}
    \includegraphics[scale = 0.7]{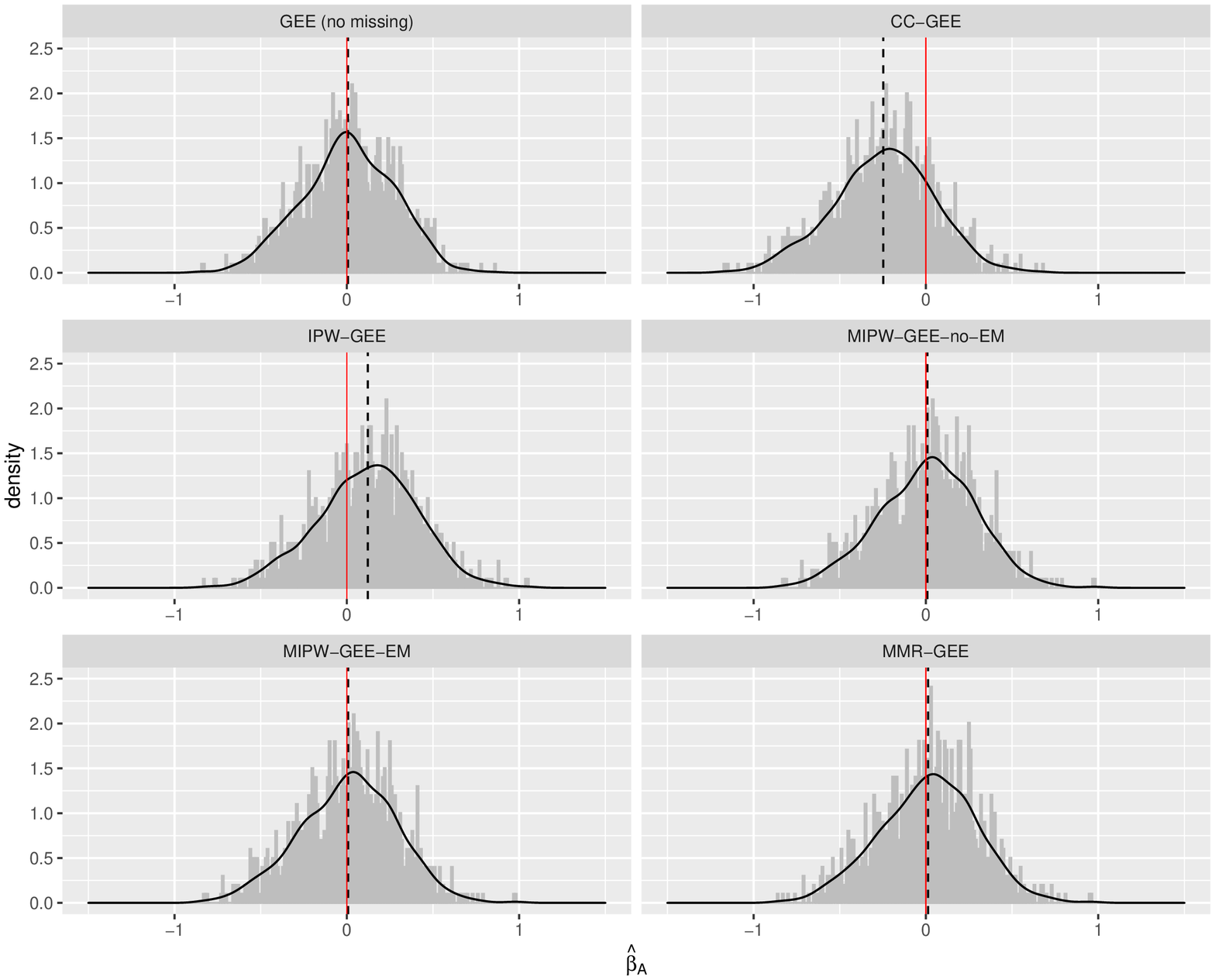}
    \caption{Empirical distribution of the estimated marginal treatment effect ($\hat{\beta}_A$) based on 1,000 replicates under $n_i \sim DU(30,50), M = 300, ICC = 0.2$ setting. The red line denotes the true marginal treatment effect (0). The dotted line denotes the empirical mean of the estimated marginal treatment effect.}
    \label{fig:null3}
\end{figure}

\newpage
\begin{figure}
    \centering
    \includegraphics[scale = 0.7]{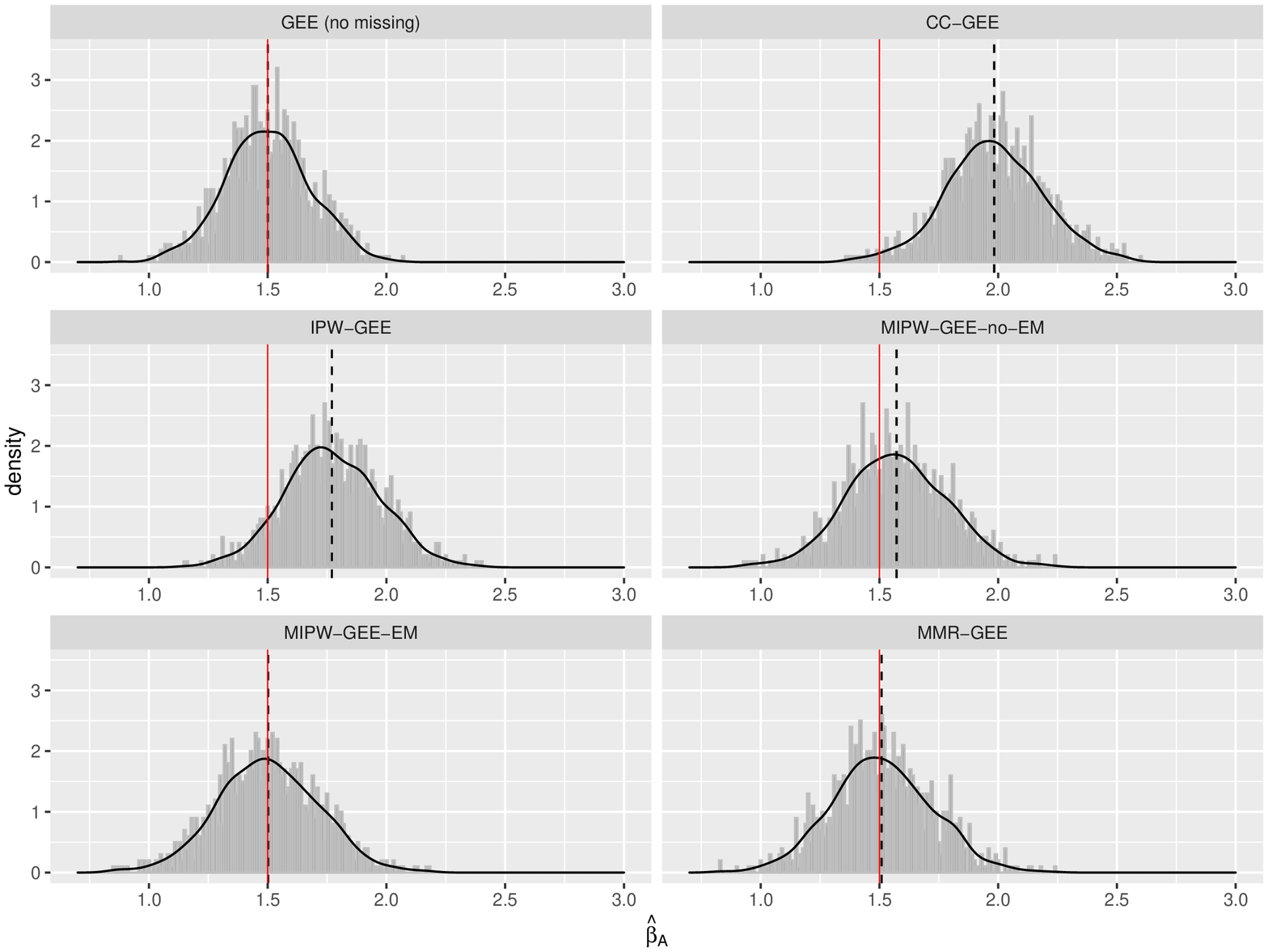}
    \caption{Empirical distribution of the estimated marginal treatment effect ($\hat{\beta}_A$) based on 1,000 replicates under $n_i \sim DU(1,4), M = 1552, ICC = 0.08$ setting. The red line denotes the true marginal treatment effect (1.5). The dotted line denotes the empirical mean of the estimated marginal treatment effect.}
    \label{fig:alt1}
\end{figure}

\newpage
\begin{figure}
    \centering
    \includegraphics[scale = 0.7]{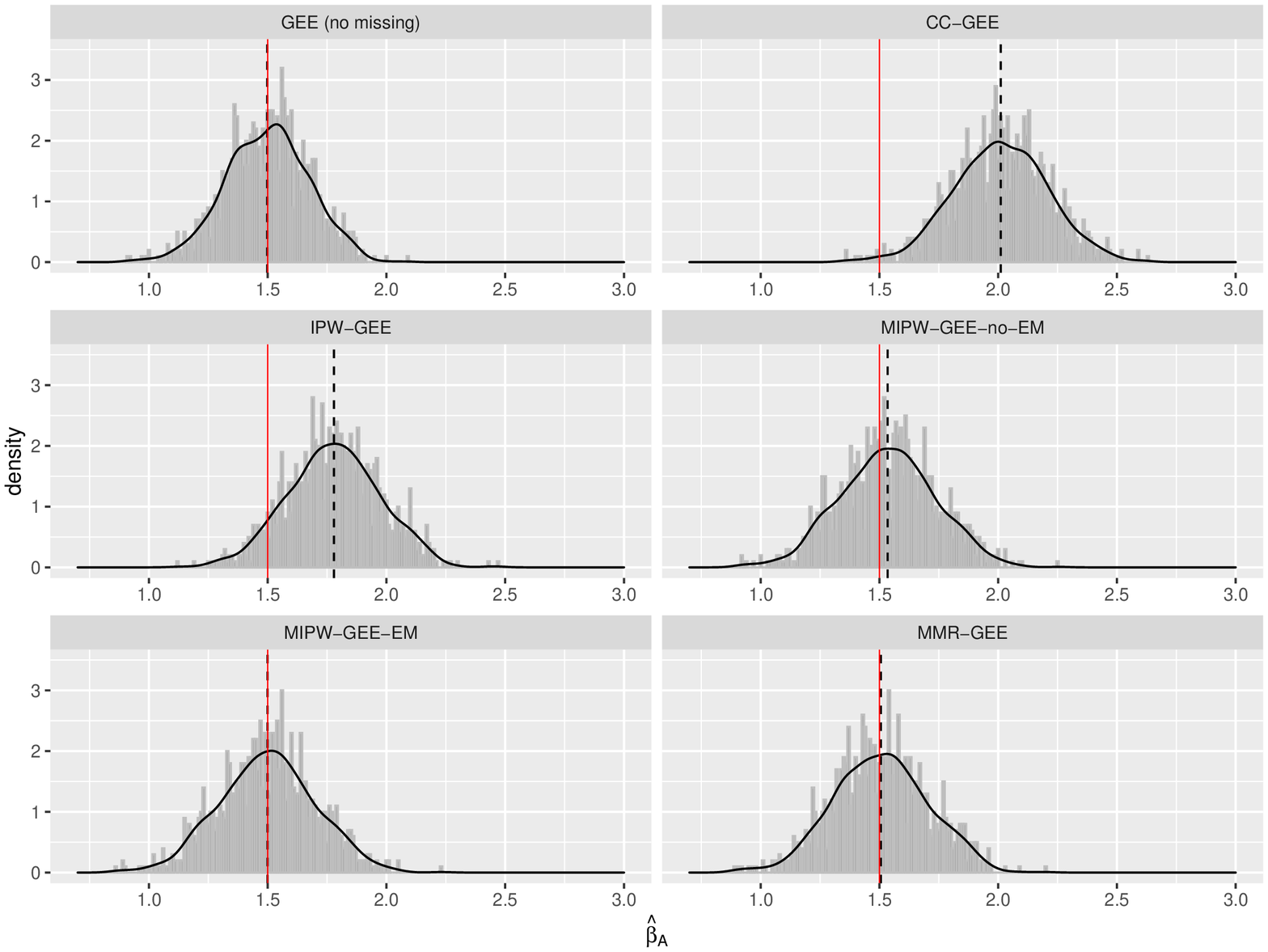}
    \caption{Empirical distribution of the estimated marginal treatment effect ($\hat{\beta}_A$) based on 1,000 replicates under $n_i = 3, M = 1552, ICC = 0.08$ setting. The red line denotes the true marginal treatment effect (1.5). The dotted line denotes the empirical mean of the estimated marginal treatment effect.}
    \label{fig:alt2}
\end{figure}

\newpage
\begin{figure}
    \centering
    \includegraphics[scale = 0.7]{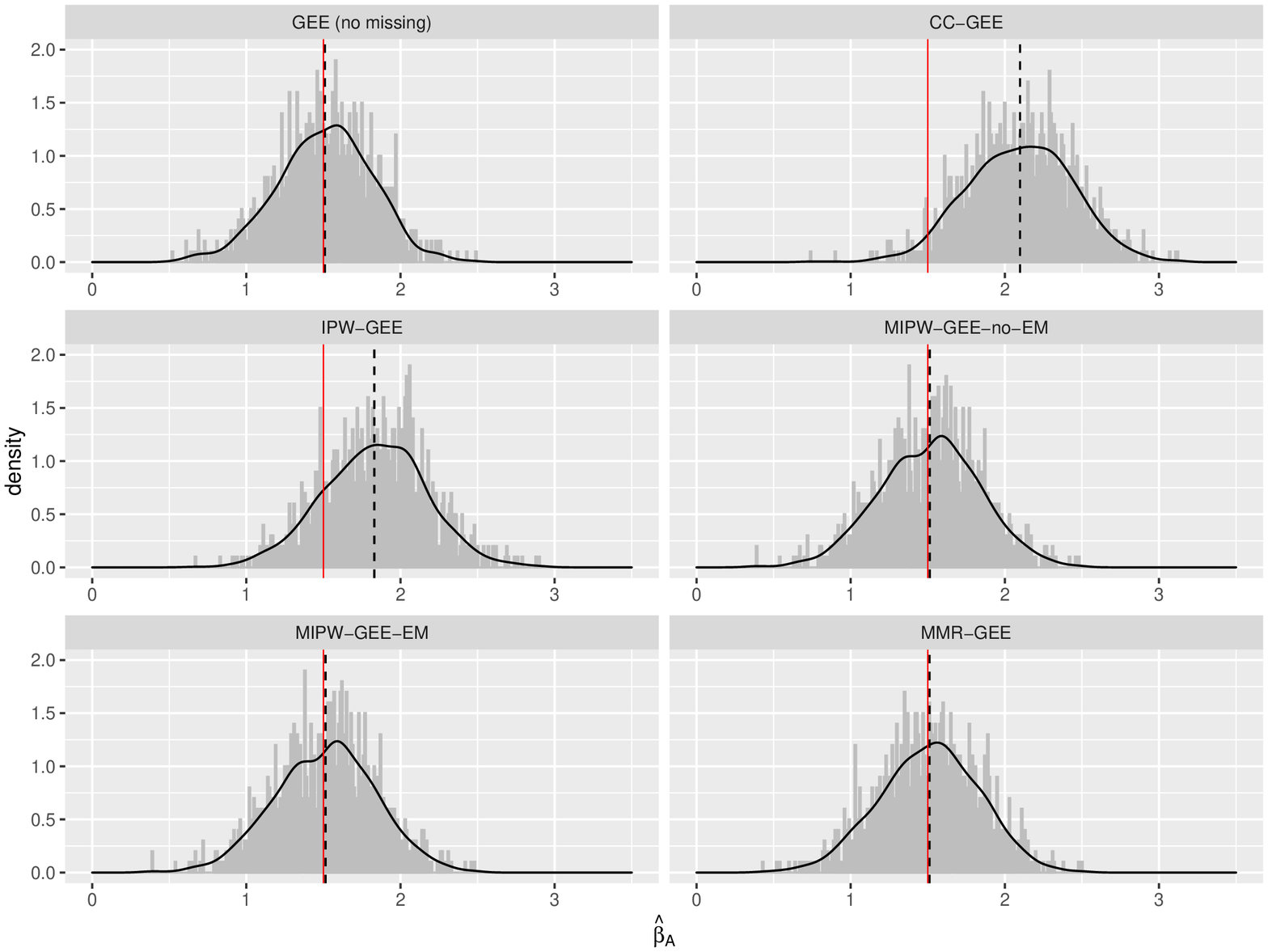}
    \caption{Empirical distribution of the estimated marginal treatment effect ($\hat{\beta}_A$) based on 1,000 replicates under $n_i \sim DU(30,50), M = 300, ICC = 0.2$ setting. The red line denotes the true marginal treatment effect (1.5). The dotted line denotes the empirical mean of the estimated marginal treatment effect.}
    \label{fig:alt3}
\end{figure}

\clearpage

\section{PS Model Fitting Results of the Pro-CCM Study}

\begin{table}[h!]
\label{tab:s3}
 \centering
 \def\~{\hphantom{0}}
 \begin{minipage}{175mm}
  \caption{Model fitting results for the PS}
  \begin{tabular*}{\textwidth}{@{}l@{\extracolsep{\fill}}c@{}c@{}c@{}c@{}c@{}c@{}}
  \hline
  \hline
  &  \multicolumn{6}{c}{p-value for association with} \\
  \cmidrule(lr){2-7} \\ [-15pt]
  &  \multicolumn{2}{c}{$P(R_{ijk} = 1)$\footnote{Wald test for $\theta_{X}$ and $\theta_{AX}$ in the regression $\text{logit}[P(R_{ijk} = 1)] = \theta_0 + \theta_A A_{i} + \theta_X X_{ijk} + \theta_{AX} A_i X_{ijk}$}} & \multicolumn{2}{c}{$P(C_{ij}^O = 1)$\footnote{Wald test for $\gamma_{X}$ and $\gamma_{AX}$ in the regression $\text{logit}[P(C_{ij}^O = 1)] = \gamma_I + \gamma_A A_{i} + \gamma_X X_{ij} + \gamma_{AX} A_i X_{ij}$}} & \multicolumn{2}{c}{$P(R_{ijk} = 1|C_{ij}^O = 1)$\footnote{Wald test for $\eta_{X}$ and $\eta_{AX}$ in the regression $\text{logit}[P(R_{ijk} = 1|C_{ij}^O = 1)] = \eta_I + \eta_A A_{i} + \eta_X X_{ijk} + \eta_{AX} A_i X_{ijk}$}} \\ 
  \cmidrule(lr){2-3}  \cmidrule(lr){4-5} \cmidrule(lr){6-7} \\ [-15pt] 
  &  $\theta_{X} = 0$ & $\theta_{AX} = 0$ & $\gamma_{X} = 0$ & $\gamma_{AX} = 0$ & $\eta_{X} = 0$ & $\eta_{AX} = 0$  \\ 
  \hline \\ [-10pt]
  Household-level covariates  \\ 
  \hline \\ [-10pt]   
 Household size & \textbf{\textless 0.01} & \textbf{\textless 0.01} & \textbf{\textless 0.01} & 0.89 & \textbf{\textless 0.01} & $\boldsymbol{0.01}$ \\
 \% of males & \textbf{\textless 0.01} & 0.53 & 0.56  & 0.92 & \textbf{\textless 0.01} & \textbf{0.01} \\
 Highest educ. & 0.14 & 0.92 & 0.22 & 0.69 & $\boldsymbol{0.02}$ & 0.15 \\
 IRS  & 0.88 & \textbf{\textless 0.01} & $\boldsymbol{0.05}$ & $\boldsymbol{0.01}$ & \textbf{\textless 0.01}  & 0.18\\
 \hline \\ [-10pt]
  Individual-level covariates & & &  \\ 
  \hline \\ [-10pt]       
 Sex, male& 0.12 & 0.15 &  & & \textbf{\textless 0.01} & 0.18 \\
 Age, median (sd) & \textbf{\textless 0.01} & \textbf{\textless 0.01} & & & \textbf{\textless 0.01} & 0.30  \\
 Primary school & 0.18 & 0.43 & & & \textbf{\textless 0.01} & 0.99 \\
 Secondary school & \textbf{\textless 0.01} & 0.20 & & & 0.28 &\textbf{\textless 0.01} \\
 High-level school & 0.71 & 0.06 & & & 0.42 & 0.18 \\
 Sleep in mosquito nets & 0.21 & \textbf{\textless 0.01}  & & & \textbf{\textless 0.01} & 0.60\\
 Sleep in the yard & 0.22 & 0.62 & & & 0.34 & 0.30\\
 \hline
\end{tabular*}
\end{minipage}
\vspace*{-6pt}
\end{table}




%

\bibliographystyle{plainnat}
\bibliography{supplementary.bib}